\definecolor{cornellred}{rgb}{0.7, 0.11, 0.11}
\definecolor{dgreen}{rgb}{0.0, 0.5, 0.0}
\definecolor{ballblue}{rgb}{0.13, 0.67, 0.8}
\definecolor{royalblue(web)}{rgb}{0.25, 0.41, 0.88}
\definecolor{bleudefrance}{rgb}{0.19, 0.55, 0.91}
\definecolor{royalazure}{rgb}{0.0, 0.22, 0.66}
\renewenvironment*{displayquote}
  {\begingroup\setlength{\leftmargini}{0.6cm}\csq@getcargs{\csq@bdquote{}{}}}
  {\csq@edquote\endgroup}
\DeclareRobustCommand{\mybox}[2][gray!15]{
\begin{tcolorbox}[  
        left=0pt,
        right=0pt,
        top=0pt,
        bottom=0pt,
        colback=#1,
        colframe=#1,
        enlarge left by=0mm,
        boxsep=10pt,
        arc=2pt,outer arc=2pt,
        ]
        #2
\end{tcolorbox}
}
 \def\bibsep{\smallskipamount}%
 \newcommand{\prob}[2][]{\text{\bf Pr}\ifthenelse{\not\equal{}{#1}}{_{#1}}{}\!\left[{\def\givenn{\middle|}#2}\right]}
\newcommand{\expect}[2][]{\text{\bf E}\ifthenelse{\not\equal{}{#1}}{_{#1}}{}\!\left[{\def\givenn{\middle|}#2}\right]}
\newcommand{\condition}{\,\mid\,}
\newsavebox\myboxA
\newsavebox\myboxB
\newlength\mylenA
\newcommand*\xoverline[2][0.75]{%
    \sbox{\myboxA}{$\m@th#2$}%
    \setbox\myboxB\null% Phantom box
    \ht\myboxB=\ht\myboxA%
    \dp\myboxB=\dp\myboxA%
    \wd\myboxB=#1\wd\myboxA% Scale phantom
    \sbox\myboxB{$\m@th\overline{\copy\myboxB}$}%  Overlined phantom
    \setlength\mylenA{\the\wd\myboxA}%   calc width diff
    \addtolength\mylenA{-\the\wd\myboxB}%
    \ifdim\wd\myboxB<\wd\myboxA%
       \rlap{\hskip 0.5\mylenA\usebox\myboxB}{\usebox\myboxA}%
    \else
        \hskip -0.5\mylenA\rlap{\usebox\myboxA}{\hskip 0.5\mylenA\usebox\myboxB}%
    \fi}
\newcommand{\Demand}{d}
\newcommand{\Demandi}[1]{\Demand_{#1}}
\newcommand{\DemandVec}{\vec{\Demand}}
\newcommand{\DemandJoinDist}{\vec{\mathcal{F}}}
\newcommand{\NumAgents}{n}
\newcommand{\ExpDemand}{\mu}
\newcommand{\ExpDemandi}[1]{\ExpDemand_{#1}}
\newcommand{\Alloc}{x}
\newcommand{\Alloci}[1]{\Alloc_{#1}}
\newcommand{\Policy}{\pi}
\newcommand{\Supply}{s}
\newcommand{\Supplyi}[1]{\Supply_{#1}}
\newcommand{\ExpostObj}{W_{\textsc{p}}}
\newcommand{\ExanteObj}{W_{\textsc{a}}}
\newcommand{\NonNegReals}{\mathbb{R}_{\geq 0}}
\newcommand{\NonNegRealsN}{\mathbb{R}_{\geq 0}^{n}}
\newcommand{\LowerboundFunA}{\kappa_{\textsc{a}}(\ExpDemand,\NumAgents)}
\newcommand{\LowerboundFunP}{\kappa_{\textsc{p}}(\ExpDemand,\NumAgents)}
\newcommand{\LowerboundFunEx}{\kappa_{\textsc{p}}(\ExpDemand,\NumAgents)}
\newcommand{\LowerboundFunPj}{\kappa_{\textsc{p}}(\ExpDemand^j,\NumAgents)}
\newcommand{\LowerboundFunPnoarg}{\kappa_{\textsc{p}}}
\newcommand{\LowerboundFunExAnte}{\kappa_{\textsc{a}}(\ExpDemand, \NumAgents)}
\newcommand{\BoundFixedAll}{\kappa_{\textsc{fa}}(\ExpDemand, \NumAgents)}
\newcommand{\LowerboundFunExPostOne}{\kappa_{\textsc{p}}(\ExpDemand, 1)}
\newcommand{\PolicyName}{projected proportional allocation}
\newcommand{\PolicyNameAb}{PPA}
\newcommand{\TargetFillRate}{\tau}
\newcommand{\NumGoods}{m}
\newcommand{\Weightj}[1]{\lambda^{#1}}
\newcommand{\Budget}{B}
\newcommand{\Costj}[1]{c^{#1}}
\newcommand{\AppMinFill}{r}
\newcommand{\AppMinFilli}[1]{\AppMinFill_{#1}}
\newcommand{\AppAlloc}{y}
\newcommand{\AppAlloci}[1]{\AppAlloc_{#1}}
\newcommand{\SocialWelfare}{U}
\newcommand{\FairParam}{\alpha}
\newcommand{\MultiGoodSupply}{s}
\newcommand{\DetGuar}{\xoverline{W}}
\newcommand{\JointDistStateSpace}{\Delta\left({\NonNegRealsN}; \ExpDemand\right)}
\newcommand{\InvDistStateSpace}{\xoverline{\Delta}\left({\NonNegReals}; \ExpDemand\right)}
\newcommand{\fixedthresh}{TFR}
\newcommand{\fixedalloc}{fixed-allocation}
\newcommand{\invdemanddist}{G}
\newcommand{\invdemand}{v}
\newcommand{\tfrvar}{\tau}
\newcommand{\Tfrvar}{T}
\newcommand{\hfun}{h}
\newcommand{\coefvar}{c}
\newcommand{\xvar}{x}
\newcommand{\transmissparam}{\gamma}
\newcommand{\population}{p}
\newcommand{\neighborset}{N}
\newcommand{\exptoinfparam}{\delta}
\newcommand{\inftorecparam}{\lambda}
\newcommand{\adjacencyparam}{\alpha}
\newcommand{\randomwalk}{X}
\newcommand{\hypermean}{\xi_r}
\newcommand{\hypersd}{\sigma_r}
\newcommand{\revcolor}[1]{#1}
\newcommand{\dvar}{\ExpDemand}
\begin{document}

\TITLE{Fair  Dynamic  Rationing}

\RUNAUTHOR{Manshadi, Niazadeh, Rodilitz}

\RUNTITLE{Fair Dynamic Rationing}

\ARTICLEAUTHORS{
\AUTHOR{Vahideh Manshadi}
\AFF{Yale School of Management, New Haven, CT, \EMAIL{vahideh.manshadi@yale.edu}}
\AUTHOR{Rad Niazadeh}
\AFF{University of Chicago Booth School of Business, Chicago, IL, \EMAIL{rad.niazadeh@chicagobooth.edu}}
\AUTHOR{Scott Rodilitz}
\AFF{Stanford Graduate School of Business, Stanford, CA, \EMAIL{rodilitz@stanford.edu}}
}

\ABSTRACT{
We study the allocative challenges that governmental and nonprofit organizations face when tasked with equitable and efficient rationing of a social good among agents whose needs (demands) realize sequentially and are possibly correlated. As one example, early in the COVID-19 pandemic, the Federal Emergency Management Agency faced overwhelming, temporally scattered, a priori uncertain, and correlated demands for medical supplies from different states. In such contexts, social planners aim to maximize the minimum fill rate across sequentially arriving agents, where each agent's fill rate is determined by an irrevocable, one-time allocation. For an arbitrarily correlated sequence of demands, we establish upper bounds on the expected minimum fill rate (ex-post fairness) and the minimum expected fill rate (ex-ante fairness) achievable by any policy. Our upper bounds are parameterized by the number of agents and the expected demand-to-supply ratio, yet we design a simple adaptive policy called projected proportional allocation (PPA) that simultaneously achieves matching lower bounds for both objectives (ex-post and ex-ante fairness), for any set of parameters. Our PPA policy is transparent and easy to implement, as it does not rely on distributional information beyond the first conditional moments. Despite its simplicity, we demonstrate that the PPA policy provides significant improvement over the canonical class of non-adaptive target-fill-rate policies. We complement our theoretical developments with a numerical study motivated by the rationing of COVID-19 medical supplies based on a standard SEIR modeling approach that is commonly used to forecast pandemic trajectories. In such a setting, our PPA policy significantly outperforms its theoretical guarantee as well as the optimal target-fill-rate policy.}

\KEYWORDS{rationing, fair allocation, social goods, correlated demands, online resource allocation}
\maketitle

\newpage

\section{Introduction}
\label{sec:intro}
In Spring 2020, with the COVID-19 pandemic surging across the US, states were relying on  the Federal Emergency Management Agency (FEMA) to provide urgently needed medical equipment from the Strategic National Stockpile. 
Unequipped for such a widespread emergency, FEMA aimed to ration its limited supplies in order to address states' current needs while also retaining some of the stockpile in anticipation of future needs. 
However, the allocation decisions made by FEMA were inconsistent and lacked transparency, which frustrated state officials \citep{WaPo2020}.\footnote{``{\em We don’t know how the federal government is making those decisions,}” said Casey Katims, the federal liaison for Washington state. \label{foot:untransparent}}
Because having access to medical equipment {can be} a matter of life or  death for a COVID-19 patient, making {allocation} decisions which are efficient and equitable is of paramount importance \citep{NJM2020}. 
Achieving efficiency alone is easy: a first-come, first-serve policy allocates all of the supply to meet early-arriving needs. However, such a policy 
can be unfair to patients in states where needs materialize later.

The above is just one example of a fundamental sequential allocation problem that social planners face when aiming to allocate divisible goods as \emph{efficiently} and \emph{equitably} as possible to demanding agents that arrive over time.

\subsection{Overview of Contributions}

In this paper, we take the first step toward theoretically studying the aforementioned class of problems.
We develop 
a framework for fair dynamic rationing where agents' 
one-time needs (demands) for a divisible good realize sequentially and can be {\em arbitrarily} correlated. In particular, upon arrival of each agent's demand, 
the planner makes an irrevocable decision about their 
{\em fill rate} (FR), \revcolor{i.e., the fraction of the agent's demand that is satisfied by a one-time allocation.}  
Toward jointly achieving efficiency and equity, the planner aims to {\em maximize the minimum} FR, either ex post or ex ante.  
To assess the performance of sequential allocation policies, we introduce measures of ex-post and ex-ante fairness guarantees. 
For this general setting:
\begin{enumerate}[label=(\roman*)]
\item We establish upper bounds on the ex-post and ex-ante fairness guarantees achievable by any policy. These bounds are parameterized by the \emph{supply scarcity} (i.e., the expected demand-to-supply ratio) and the number of agents.  
    
    \item Remarkably, we show that a simple, adaptive, and transparent policy called {\em \PolicyName} (\PolicyNameAb) simultaneously achieves our upper bounds on the ex-post and ex-ante fairness guarantees for any set of parameters. 
    \item We illustrate the power of adaptivity by characterizing the ex-post guarantee of the optimal  target-fill-rate policy and showing that such a non-adaptive policy cannot achieve our upper bounds.
    \item Finally, we demonstrate the effectiveness of our policy through an illustrative case study motivated by 
    the allocation of COVID-19 medical supplies based on a model of demand which was used by the White House. 
\end{enumerate}

\smallskip
\noindent{\bf Introducing a framework for fair dynamic rationing:}
We study the allocation of a divisible good  to agents arriving over time with varying levels of demand. We assume the demand sequence is drawn from an arbitrary but known joint  distribution across all agents.
To account for heterogeneity in the demand level of different agents, we set each agent's utility to be its FR. In our base model, we focus on the objective of maximizing the minimum FR across all agents. Such an objective---which is in the spirit of Rawlsian justice---maximizes the utility of the worst-off agent. As such, it takes fairness into consideration along with efficiency.\footnote{In \Cref{sec:discussions}, we generalize our objective function.} Due to the stochasticity of the demand sequence, we consider two versions of this objective function: {the expected minimum FR and the minimum expected FR (see \cref{eq:expost} and \cref{eq:exante}, respectively, as well as the subsequent discussion).}

Like other online stochastic optimization problems,  our sequential allocation problem can be formulated as a dynamic program (DP), and it similarly suffers from the curse of dimensionality as well as other practical limitations such as a lack of interpretability.
\revcolor{(We provide further discussion of the DP in Remark \ref{rem:online}, and in Appendix \ref{apx:DP} we formally present the DP, illustrate its exponential size, and discuss  other practical drawbacks.)}
Consequently, we aim to design sequential allocation policies that perform well while being practically appealing and computable in polynomial time.
We assess the performance of a policy by computing its ex-post and ex-ante fairness guarantees for any given supply scarcity and number of agents. In defining our notions of such guarantees,
we use the 
minimum FR achievable under deterministic demand as a \revcolor{normalization factor (see  Definitions \ref{def:scarcity}, \ref{def:normalization:factor}, and \ref{def:polperf} and their related discussion in Section \ref{sec:prelim})} to separate the impact of demand stochasticity from the impact of supply scarcity. 
The ex-post {(resp. ex-ante)} fairness guarantee of a policy serves as a lower bound on the expected minimum {(resp. minimum expected)} FR that the policy achieves relative to our \revcolor{normalization factor} under all possible joint demand distributions.

\smallskip
\noindent{\bf Establishing upper bounds:}
In order to gain insight into the difficulty of achieving equity and efficiency in sequential allocation, we develop upper bounds on the achievable fairness guarantees of any  policy, even policies which cannot be computed in polynomial time. For intuition, consider the following example with two agents. The first agent has demand of $B_1$, where $B_1$ is a Bernoulli random variable with success probability $2/3$. The second agent has demand $B_1 \times B_2$, where $B_2$ is an independent  Bernoulli random variable with success probability $1/2$.
In other words, the demand sequence is equally likely to be $(0, 0)$, $(1,0)$, or $(1,1)$.
For such an instance, no sequential  policy can distinguish between the latter two scenarios after observing the first demand, which leads to a sub-optimal decision. 
Building on the above intuition, in Sections \ref{subsec:expostupperbound} and \ref{subsec:ex-ante}, we establish upper bounds on the ex-post and ex-ante guarantees of any policy (see Theorems \ref{thm:hardness-ex-post} and \ref{thm:exante}).

As we later show, these bounds are indeed tight. Thus, conducting 
comparative statics with respect to the supply scarcity and the number of agents reveals several insights (see Figure \ref{fig:upperbound3d}): {when demand is small relative to supply, the bounds on} both fairness guarantees deteriorate with increased demand. 
However, in the over-demanded regime, the bounds are independent of the supply scarcity. Further, in both the under-demanded and over-demanded regimes, the ex-post fairness guarantee  worsens with more agents. On the other hand, the ex-ante fairness guarantee  is independent of the number of agents. This highlights the fundamental difference in our notions of fairness: the objective corresponding to ex-post fairness 
is concerned with fairness along all samples paths, whereas the objective corresponding to ex-ante fairness 
is only concerned with marginal fairness (see the related discussion in \Cref{sec:prelim}).

\smallskip
\noindent{\bf Achieving upper bounds:} 
Since our upper bounds apply to all sequential policies including the optimal online policy (namely, the exponential-sized DP), it {would be reasonable if no} policy could achieve these  upper bounds in polynomial-time. However, we show that not only are these upper bounds achievable, but they can be achieved by our \PolicyNameAb\ policy.
To motivate our policy, let us consider a hypothetical situation where the demand sequence is known a priori. In that case, the optimal allocation under both objectives is to equalize the FRs and then maximize that FR (see \cref{eq:det:fair} and its related discussion). Alternatively, this can be written as a deterministic DP with a simple solution: 
at each time period, proportionally allocate the remaining supply based on the current demand and the total future demand (see Section \ref{subsec:ourpolicy} and Appendix \ref{apx:offline:DP}).
When demand is stochastic, our PPA policy simply  replaces all the future random demands by their projected values, namely, their conditional expectations (see \cref{eq:PPAinvariant}).

In Sections \ref{subsec:ex-post-lower} and \ref{subsec:ex-ante}, we analyze the ex-post and ex-ante fairness guarantees of the PPA policy and show that it achieves the best of both worlds: our lower bounds on the PPA policy's guarantees match the corresponding upper bounds for any supply scarcity and any number of agents. {These two analyses rely on delicate inductive arguments. For ex-post fairness,} we establish a lower bound on the value-to-go function of our \PolicyNameAb\ policy by analyzing the evolution of the minimum FR and progressively constructing a worst-case joint distribution for demand (see \Cref{lem:minfilllowerbound} in \Cref{subsec:ex-post-lower}). {For ex-ante fairness, we demonstrate that the expected demand-to-supply ratio before the arrival of each agent is non-increasing when following the \PolicyNameAb\ policy, which enables us to bound the marginal expected FR for each agent (see Appendix \ref{apx:ex-ante}).}

We highlight that beyond enjoying the best possible guarantees, our \PolicyNameAb\ policy is practically appealing: it is computationally efficient, interpretable, and transparent. In addition, it does not require full distributional knowledge, as it only relies on the first conditional moments of the joint distribution for demand. Policies which rely on detailed distributional knowledge can be prone to errors or perturbations (see Remark \ref{rem:online} and Appendix \ref{apx:DPexample}).

\smallskip
\noindent{\bf Establishing sub-optimality of target-fill-rate policies:}
In addition to showing that our PPA policy achieves the best possible guarantees, we extend our work to studying the subclass of target-fill-rate (\fixedthresh) policies. A \fixedthresh\ policy commits upfront to a fill rate $\TargetFillRate$, and upon arrival of each agent, it allocates a fraction $\TargetFillRate$ of that agent's demand  until it exhausts the supply. 
Our study of \fixedthresh\ policies is motivated by two reasons: (i) since such policies are transparent and easy-to-communicate, they are frequently used in practice, including at the outset of the COVID-19 pandemic when an initial formula allocated a fixed percentage of states' estimated needs \citep{WaPo2020}, and (ii) \fixedthresh\ policies are a natural yet powerful class of non-adaptive policies (see \Cref{subsec:AdaptivityGap}). 
Consequently, comparing the performance of \fixedthresh\ policies with that of our adaptive PPA policy sheds light on the limitations of making non-adaptive decisions.

Intuitively, a \fixedthresh\ policy can perform poorly because it does not take advantage of information that reduces future uncertainty. For instance, consider a setting with two agents where the second agent's demand is perfectly correlated with the first agent's demand. A simple adaptive policy---such as our \PolicyNameAb\ policy---will perform optimally in such a setting because demand is deterministic upon the first agent's arrival. The \PolicyNameAb\ policy achieves such performance by crucially leveraging information about the second agent's demand when determining the first agent's fill rate. In contrast, a \fixedthresh\ policy targets the same fill rate regardless of the first agent's demand, and consequently cannot ensure that sufficient supply remains for the second agent. Based on this intuition, in \Cref{subsec:AdaptivityGap}, we provide a tight bound on the ex-post fairness guarantee of the optimal \fixedthresh\ policy (see \Cref{thm:NonAdaptive}), which can be considerably lower than the corresponding guarantee of our adaptive \PolicyNameAb\ policy (see \Cref{fig:AdaptivityGap}). 
\revcolor{On the other hand, we show that if the coefficient of variation of total demand is low, then the optimal TFR policy can provide a stronger guarantee than our \PolicyNameAb\ policy (see Proposition \ref{prop:coefvar} and \Cref{fig:coefvar}).}

To characterize the ex-post fairness guarantee of the optimal TFR policy, we construct the worst-case total demand distribution against such a policy.
In the proof, we establish a rather surprising connection to the literature on monopoly pricing and Bayesian mechanism design (see \citet{hartline2013mechanism} for more details on this literature). In particular, upon mapping the problem {of finding the worst-case instance} into the quantile space, our problem reduces to a constrained version of the (single-item) monopoly pricing problem (see \Cref{rem:monopolypricing}). We identify two key properties of the worst-case distribution in this constrained monopoly pricing problem, and by exploiting the connection to our original problem, we end up with the desired characterization of the worst-case total demand distribution against the optimal TFR policy. Due to this connection, our proof technique and corresponding results 
can be of independent interest (e.g., see \citet{alaei2019optimal} for proof techniques and results in the same spirit).

\smallskip
\noindent{\bf Illustrative case study:} 
To demonstrate the effectiveness of our policy, in \Cref{sec:numerics} we conduct a numerical case study motivated by the allocative challenges that FEMA faced at the beginning of the COVID-19 pandemic (as discussed at the beginning of this section). 
\revcolor{Borrowing from the epidemiology literature around the COVID-19 pandemic, we develop a simple compartmental SEIR model that governs the need for medical supplies in different inter-connected locations. Using such a model, we demonstrate that the demand is highly variable and has complex correlation structure across locations (see Figure \ref{fig:sims}).}
Our simulation results illustrate the superior performance of our \PolicyNameAb\ policy compared to both its ex-post fairness guarantee and the optimal TFR policy.  
Further, the results suggest that our \PolicyNameAb\ policy performs nearly as well as the DP solution (which, as we discuss, suffers from many practical limitations). \revcolor{Additionally, our simulation results demonstrate the efficiency  of our policy as well as its robustness to model mis-specification (see Table \ref{table:full_sims}).}

Allocating  medical supplies in a pandemic 
 is just one motivating example of the challenges that arise when a governmental or nonprofit organization aims to ration supply among agents whose (a priori uncertain and correlated) needs realize sequentially. 
Other examples include the allocation of emergency aid when a natural disaster such as a hurricane or wildfire impacts multiple locations over time \citep{wang2019measuring}, as well as the distribution of food donations by mobile pantries that sequentially visit agencies \citep{lien2014sequential}.\footnote{As explained in detail in \citet{lien2014sequential}, even though the daily demand for food donations from different agencies are not temporally scattered, they will only be observed by the operators upon their arrival at the sites.} 
Our proposed policy can effectively guide transparent allocation decisions in such contexts while also providing a guarantee on the fairness level of the process. 
Finally, as discussed in \Cref{sec:discussions}, our framework can be enriched to account for other practical considerations, such as (i) generalized objective functions that enable the social planner to balance equity and efficiency to varying degrees, and  (ii) rationing multiple types of resources (see Corollaries \ref{cor:generalfunctions}, \ref{cor:multiplegoods}, and \ref{cor:multiplegoodsupper} in \Cref{subsec:variants}).

\subsection{Related Work}
We conclude this section by discussing how our work relates to and contributes to several streams of literature.
\label{sec:lit:rev}

\smallskip
\noindent {\bf Fairness in static resource allocation:}
Considerations of fairness and its trade-off with efficiency have frequently arisen in the resource allocation literature in operations research and computer science.\footnote{Other recent papers have focused on fairness in the contexts of pricing \citep{cohen2019pricing}, information acquisition \citep{cai2020fair}, targeted interventions \citep{levi2019optimal}, service levels \citep{jiang2019achieving}, and online learning \citep{gupta2019individual}. See also the work of \citet{cayci2020group} that considers fair resource allocation with online learning.} 
We begin by discussing papers which study fairness in static (one-shot) allocation settings.
The seminal work of \citet{bertsimas2011price} considers a general setting where a central decision-maker allocates  $m$ divisible resources to $n$ agents, each with a different utility function. Focusing on two commonly used notions of fairness in allocation, max-min and proportional fairness, the authors characterize the efficiency loss due to maximizing fairness (see also \citealt{bertsimas2012efficiency} and  \citealt{bertsimas2013fairness}). 
{If demand was deterministic in our setting,} the optimal allocation {would coincide} with that of the max-min objective in \citet{bertsimas2011price}. Namely, for both objectives, the optimal allocation consists of maximized equal FRs.

Focusing on indivisible goods,  \citet{donahue2020fairness}
considers the trade-off between fairness and utilization when demand is distributed across different agents. A priori, only demand distributions are known. However, after a one-shot allocation decision, all demand values realize. 
The fairness notion considered in this line of work is in the same spirit of our notion of ex-ante fairness:  they require that an individual's chance of receiving the resource should not significantly depend on the group to which the individual belongs. Similarly, by maximizing the minimum expected FR, we aim to reduce the impact of an agent's place in the sequence of arrivals. 
Sharing similar motivation to our paper, \citet{pathak2020fair} and \citet{grigoryan2021effective} consider equitable COVID-19 vaccine allocation. 
However, the settings (e.g., offline and deterministic), models, and techniques in both papers differ drastically from those in this work. 

Also falling within the category of static allocation of indivisible goods, a stream of papers in computer science considers allocation problems when agents' valuations are deterministically known. For deterministic algorithms, recent research has centered on the existence of allocations which satisfy certain fairness properties, such as envy-freeness up to any good (see, e.g., \citet{chaudhury2020efx} and references therein). 
For randomized algorithms, the closest to our work is the recent work of \citet{nisargec20}, which uses notions of {ex-post and ex-ante} fairness and explores whether both can be achieved simultaneously. They develop a randomized algorithm that is {approximately fair ex post and precisely fair ex ante.} 
We ask a similar question, albeit in a dynamic divisible-good setting with random and correlated demand, and we affirmatively answer it: our \PolicyNameAb\ policy exactly achieves the best possible fairness guarantee ex post as well as ex ante  (see Theorems \ref{thm:expost} and \ref{thm:exante}).

\smallskip
\noindent {\bf Fairness in dynamic resource allocation:}
We now turn our attention to papers that consider fairness in dynamic (online) allocation settings. In terms of modeling, closest to our work are \citet{lien2014sequential} and \citet{sid2020}. Motivated by the distribution of food donations by mobile pantries,\footnote{For other examples of work in this application area, see \citet{solak2014stop, orgut2018robust}, and \citet{eisenhandler2019humanitarian}.} \citet{lien2014sequential} introduced the problem of  sequential resource allocation which coincides with our base model and the ex-post fairness objective function, in that it aims to maximize the expected minimum FR (although it only studies the special case of independent demands).
The recent work of \citet{sid2020} considers a similar model; however, it focuses on a multi-criteria objective which is based on an allocation's distance from the optimal
offline Nash Social Welfare solution. 
We note that their notion of fairness is also different in nature from ours.\footnote{{In \citet{sid2020}, their notion of fairness is with respect to the absolute allocation, i.e., if possible, agents' allocation should be equalized regardless of differences in their needs. In contrast, we aim for an allocation which is proportional to need.}}

The algorithmic aspects of both \citet{lien2014sequential} and \citet{sid2020} consist of designing novel heuristics and numerically evaluating them against a relevant benchmark (the intractable DP solution and the Nash Social Welfare solution, respectively). On the other hand, we take a theoretical approach and analyze fairness guarantees for the policies we design. Further, we provide upper bounds on the performance of any policy (including the DP solution), which serves a dual purpose: (i) it establishes that our policy is the best possible one if we aim to achieve both ex-ante and ex-post fairness guarantees, and (ii) it highlights the fundamental limits of achieving equity in a dynamic setting.

\revcolor{A related stream of papers in computer science study fair division problems in dynamic settings \citep{walsh2011online,  kash2014no, aleksandrov2015online}, albeit based on different motivating applications and with different objectives. Consequently, the dynamic aspects of these papers differ from our modeling approach. \citet{walsh2011online} studies a setting where agents arrive over time but allocation decisions are not required to be immediate. In a similar direction, \citet{kash2014no} and \citet{aleksandrov2015online} consider models where agents remain in the system and can receive multiple allocations. 
Beyond the model dynamics, these papers allow for an arbitrary sequence of arriving agents (e.g., an adversarial arrival model). Moreover, their results pertain to obtaining envy-free allocations, and hence do not rely on a direct comparison of agent's utilities. 
In contrast, in our setting agents' demands are drawn from a known (arbitrarily correlated) joint distribution, and our results center on direct comparisons based on agents' fill rates (which play the role of the utilities in our model).}

In settings with multiple types of resources, \citet{azar2010allocate} and \citet{bateni2016fair} study online versions of Fisher markets and develop policies with fairness guarantees under two different arrival models. The former assumes an adversarial  model whereas the latter considers demand that belongs to a general class of stochastic processes.\footnote{We remark that papers considering general convex objective functions, such as \citet{agrawal2014fast} and \citet{balseiro2020best}, admit many common fairness objective functions as special cases. See also \citet{mehta2013online} for more details.}
There are fundamental differences between our work and the aforementioned papers. 
Just to name one, 
the settings of \citet{azar2010allocate} and \citet{bateni2016fair} 
are motivated by online advertising, where demanding agents  (advertisers) are offline and items (impressions) arrive in an online fashion. 
Demanding agents have a large budget compared to the price of each arriving item, and they derive item-specific utilities. 
Consequently, the fairness notion is concerned with the total utility of each agent, which is a function of {\em all} items allocated to it during the horizon.
In contrast from such a setting, demanding agents in our work arrive in an online fashion while the supply side is offline, and each demanding agent receives a {\em single} allocation. The recent works of \citet{WillMa} and \citet{nanda2020balancing} are closer to our setting in that the demanding agents arrive online; however, they differ in several aspects: (i) the underlying arrival process is known i.i.d. where arriving demand belongs to various groups, (ii) they focus on group-level fairness, and (iii) {they consider} a matching setting, i.e., allocating indivisible goods. 

The objectives of ex-post and ex-ante fairness which we study in our problem bear some resemblance to the objective in the online contention resolution scheme (OCRS) problem, although the two problems are not directly comparable. The OCRS is basically a rounding algorithm that aims to uniformly preserve the marginals induced by a fractional solution while obtaining feasibility of the final allocation. This technique has found application in many settings such as Bayesian online selection, oblivious posted pricing mechanisms, and stochastic probing models (see, e.g., \citealt{alaei2014bayesian}, \citealt{feldman2016online}, and \citealt{lee2018optimal}). The OCRS problem diverges from ours because that setting focuses on designing randomized policies for allocating indivisible goods, while our focus is on divisible goods (consequently, restricting to deterministic policies is without loss).

\smallskip
\noindent {\bf Dynamic allocation of social goods:}
On a broader level, our paper is related to the literature on dynamic allocation of social goods and services, such as public housing, donated organs, and emergency care. 
Examples of centralized allocation policies include ~\citet{kaplan1984managing,ashlagi2013kidney, agarwal2019empirical}, and \citet{ashlagi2019matching}; examples of decentralized mechanisms are~\citet{dynamic-matching-waiting-lists-leshno, anunrojwong2020information}, and \citet{arnosti2017design}. For the most part, the aforementioned papers focus on the analysis of social welfare in steady-state models where both demand and supply dynamically arrive. 
We complement this literature by focusing on equitable allocation  in a non-stationary framework where a fixed amount of supply must be rationed across demand that arrives over time.

\revcolor{Further, our work is broadly related to the growing literature
on dynamic mechanism design without money  \citep{Peng-Santiago, gorokh2019remarkable,  Gorokh20}. 
These papers study settings with repeated interaction between a principal and agents, and they  assume that agents' valuations are drawn independently across individuals and across time. Our framework differs from such settings in several key aspects: (i) each agent only interacts once with the social planner,  (ii) agents' demands can be arbitrarily correlated, and (iii) as explained below, agents are non-strategic.}

\revcolor{In our work, we abstract away from strategic behavior and assume that agents do not control the timing of their demand (i.e., the order of their arrival) nor can they misrepresent their demand, either individually or as a coalition. 
{Several papers in the supply chain and social choice literature (see, e.g., \citealt{sprumont1991division,lee1997information,cachon1999capacity}) study incentive issues that arise when demand for a resource exceeds its capacity in various other contexts with no access to monetary mechanisms (as is the case in our setting).} In particular, \citet{lee1997information} shows that proportional allocation (which is the static version of our proposed PPA policy) can induce strategic behavior as agents may benefit from over-stating their demand.
However, such strategic considerations are often inapplicable in our motivating applications. In contexts such as a pandemic or a natural disaster, the sequence of realized demand is exogenous. Furthermore, demand is verifiable in these settings, and false reporting can be severely punished under the Disaster Fraud Act.\footnote{\revcolor{For one such example, see \href{https://www.oig.dhs.gov/news/press-releases/2019/02202019/manhattan-us-attorney-announces-53-million-proposed-settlement-lawsuit-against-new-york-city-fraudulently-obtaining}{https://www.oig.dhs.gov/news/press-releases/2019/02202019/manhattan-us-attorney-announces-53-million-proposed-settlement-lawsuit-against-new-york-city-fraudulently-obtaining}.}}}

\smallskip
\noindent {\bf Online resource allocation:}
From a technical point of view, our work is related to the rich literature on online resource allocation  and prophet
inequalities, which started from the seminal work of \citet{krengel1978semiamarts} and \citet{samuel1984comparison}. For an informative survey, we refer
the interested reader to \citet{lucier2017economic}. We highlight that in terms of modeling demand, our work departs from the prevailing approaches in this literature, namely adversarial, i.i.d., or random permutation arrival models. In our work, we assume that the sequence of demands can be arbitrarily correlated and the joint distribution is known in advance. 
In terms of modeling demand, our work is closest to a few papers that consider prophet inequalities with correlated demand \citep{10.2307/4355751, Corprophet, immorlica2020prophet}. However, the nature of the online decisions {is different; in our model, a fraction of a divisible good is allocated to each arriving demand, whereas in prophet inequality settings, an indivisible good is allocated to a single agent.}

Finally, our \PolicyNameAb\ policy relies on re-optimizing the FR by replacing all future random demands by their expected values. As such, it is related to the stream of papers in  revenue management and  dynamic programming that  theoretically analyze the performance of such heuristics. Great examples of work in this direction include \citet{ciocan2012model}, \citet{jasin2012re}, \citet{balseiro2019approximations}, and \citet{calmon2020revenue}.

\section{Model and Preliminaries}
\label{sec:prelim}

\smallskip
\noindent{\bf{Problem setup:}} Consider a planner that is using a sequential allocation policy---also referred to as an online policy---to allocate a divisible resource of supply $\Supply$ among $\NumAgents$ agents. Without loss of generality, we normalize the total supply so that $\Supply=1$. Agents arrive sequentially over time periods $1,2,\ldots,\NumAgents$, and we index agents according to the period in which they arrive. 
Once agent $i$ arrives, their demand $\Demandi{i}\in \NonNegReals$ is realized and observed by the planner. Based on the observed demand $\Demandi{i}$ and the history up to time period $i$, the sequential policy makes an irrevocable decision by allocating an amount $\Alloci{i}$ of the resource to this agent. The allocated amount $\Alloci{i}$ cannot exceed the agent's realized demand $\Demandi{i}$ nor can it exceed the remaining supply before agent $i$'s arrival, which we denote by $\Supplyi{i}$. Thus, $\Alloci{i}$ is a feasible allocation if $\Alloci{i}\in [0, \min\{s_i, d_i\}]$. 
Given the feasible allocation $\Alloci{i}$ and the demand $\Demandi{i}$, agent $i$'s \emph{fill rate} (FR) is defined as $ \frac{\Alloci{i}}{\Demandi{i}}$.\footnote{{If $\Alloci{i}=\Demandi{i}=0$, we set the FR to $1$ as a convention.}} 
After allocating $\Alloci{i}$ to agent $i$, the remaining supply before the arrival of agent $i+1$ is  $\Supplyi{i+1}=\Supplyi{i}-\Alloci{i}$.

To model the uncertainty about future demands, we consider a Bayesian setting where the $\Demandi{i}$'s are stochastic and arbitrarily correlated such that $\vec{\Demand}=(\Demandi{1},\Demandi{2},\ldots,\Demandi{\NumAgents}$) is drawn from a joint distribution $\DemandJoinDist\in\Delta\left({\NonNegRealsN}\right)$ \emph{known} by the planner. \revcolor{A key characteristic of a demand sequence is the expected demand-to-supply ratio, which we call the {\em supply scarcity} as defined formally below.
\begin{definition}[Supply Scarcity]
\label{def:scarcity}
The supply scarcity of a 
 demand sequence $\vec{\Demand}$ drawn from a joint distribution $\DemandJoinDist\in\Delta\left({\NonNegRealsN}\right)$ is given by:
 \begin{equation*}
     \ExpDemand\triangleq \frac{\expect[\vec{\Demand}\sim\DemandJoinDist]{\sum_{i\in[\NumAgents]}\Demandi{i}} }{s}.\footnote{For any $a \in \mathbb{N}$, we use $[a]$ to refer to the set $\{1, 2, \dots, a\}$.}
 \end{equation*}
Since we normalize the supply to be $1$, the supply scarcity is equal to the total expected demand.
 \end{definition}
}
For simplicity of presenting our results, we consider joint distributions that assign non-zero probability to at least one sample path of demands with $\Demandi{n}\neq 0$. Equivalently, we assume $\Demandi{n}$ is not deterministically equal to zero.\footnote{This assumption is without loss of generality, as one can alternatively re-define $n$ to be the smallest index such that $d_{n'}$ is deterministically equal to zero for $n'>n$.}

As detailed earlier, our setup is motivated by
the distributional operations of a governmental or nonprofit organization. Consequently, we focus on an egalitarian planer that intends to balance the equity and efficiency of the allocation. To this end, the planner's objective is to maximize the minimum achieved FR among the agents, i.e., $\min_{i\in[n]}\frac{\Alloci{i}}{\Demandi{i}}$, given the uncertainty in the demands. Maximizing such an objective has its roots in the classic literature on welfare economics (e.g.,~\citealt{arrow1963social}) and has been studied more recently in similar contexts in operations research~(e.g.,~\citealt{lien2014sequential}). It provides equity through its focus on the worst FR across all agents---in contrast to the sum of FRs---{and provides}  efficiency by aiming to maximize this FR---in contrast to  allocating  an equally minimal amount of the resource to all agents.\footnote{We consider a broader class of objectives that subsumes the minimum FR in \Cref{subsec:variants}.} 

\revcolor{Before introducing our objective function, we comment on our assumptions that the supply is fixed a priori and that we make a one-time allocation decision for each agent. In {some applications}, supply can get replenished over time, and there can be multiple allocations to {the same} agent. However, in our motivating applications there is a time urgency that we aim to incorporate into our model: as one example, during a pandemic it may take months to replenish the national stockpile of medical resources such as ventilators.\footnote{\revcolor{During the first peak of the COVID-19 pandemic in the US, it took a few months to {produce} ventilators, and there were concerns that those ventilators would be ready too late \citep{WaPo2020ventilators}.}} In the meantime, a ``pandemic wave'' may last only a few weeks, which makes any potential second shipment less valuable or even unnecessary. (For instance, the need for ventilators may greatly reduce a few weeks after the peak.) The same features exist in response to a natural disaster, as providing relief is most valuable in the immediate aftermath.}

\smallskip
\noindent{\bf Objectives \& fairness guarantees:} 
{Since demands are a priori uncertain in the setup described above, the planner should consider appropriate metrics to aggregate over uncertain outcomes. We now formally define the planner's objectives by considering two different metrics:}
the ex-post minimum FR and the ex-ante minimum FR. 
For any sequential allocation policy $\Policy$,
the ex-post minimum FR of policy $\Policy$ is its expected minimum FR, i.e.,
\begin{equation}
\label{eq:expost} \tag{ex-post}
\ExpostObj^{\DemandJoinDist}(\Policy)\triangleq\expect[\Vec{\Demandi{}}\sim\DemandJoinDist]{\underset{i\in[n]}{\min}\left\{\frac{\Alloci{i}}{\Demandi{i}}\right\}}~,
\end{equation}
where $\vec{\Alloci{}}=(\Alloci{1}, \Alloci{2}, \dots, \Alloci{\NumAgents})$ is the sequence of allocations generated by  $\Policy$. 
On the other hand, the ex-ante minimum FR of policy $\Policy$ is its minimum expected FR, i.e.,
\begin{equation}
\label{eq:exante}\tag{ex-ante}
\ExanteObj^{\DemandJoinDist}(\Policy)\triangleq\underset{i\in[n]}{\min}\left\{\expect[\Vec{\Demandi{}}\sim\DemandJoinDist]{\frac{\Alloci{i}}{\Demandi{i}}}\right\}~.
\end{equation}
For a randomized policy $\Policy$, we abuse notation and again use $\ExpostObj^{\DemandJoinDist}(\Policy)$ and $\ExanteObj^{\DemandJoinDist}(\Policy)$ to denote the expectation of the above two quantities over the policy $\Policy$'s internal randomness.\footnote{In principle, we allow randomization of our policies in this paper; however, as will be clear later, all of our proposed policies are deterministic and no randomization is needed to obtain our targeted performance guarantees.} 

These two objectives represent two different notions of fairness: \cref{eq:expost} aims for equity in outcomes, whereas \cref{eq:exante} aims for equity in \emph{expected} outcomes. We largely focus on the ex-post minimum FR for two main reasons. First,
when allocating supplies in response to a rare event like a pandemic or natural disaster, agents only observe one realized outcome. Because the ex-ante minimum FR is only concerned with marginal fairness, it can have unfair outcomes for \emph{every} {sample path}, i.e., every realized demand sequence. In contrast, the ex-post minimum FR considers each full sample path; every sample path with positive probability which results in an unfair outcome reduces $\ExpostObj^{\DemandJoinDist}(\Policy)$.
Second, by Jensen's inequality, the ex-post minimum FR serves as
a lower bound on the ex-ante minimum FR, i.e., for any policy $\Policy$,
$$\ExanteObj^{\DemandJoinDist}(\Policy) \geq \ExpostObj^{\DemandJoinDist}(\Policy).$$

However, for a fixed ex-post minimum FR, achieving a higher ex-ante minimum FR is desirable because it reduces systematic biases against a particular agent, e.g., the last-arriving agent. In the extreme case where $\ExanteObj^{\DemandJoinDist}(\Policy) = \ExpostObj^{\DemandJoinDist}(\Policy)$, one particular agent receives the smallest FR, {\em regardless} of the sample path. On the other hand, $\ExanteObj^{\DemandJoinDist}(\Policy) > \ExpostObj^{\DemandJoinDist}(\Policy)$ implies that 
the worst-off agent varies across different sample paths.

\revcolor{Having defined our notions of fairness, we first observe that 
if the sequence of demand is deterministic, then the policy that maximizes {\em both} the ex-post and the ex-ante minimum FR is simply equalizing all FRs. Namely, this policy achieves a FR equal to:
\begin{align}
    \max_{\pi} \Big\{\ExanteObj^{\DemandJoinDist}(\Policy)\Big\} = \max_{\pi} \Big\{ \ExpostObj^{\DemandJoinDist}(\Policy)\Big\} = \frac{\Alloci{1}}{\Demandi{1}} = \frac{\Alloci{2}}{\Demandi{2}} = \dots = \frac{\Alloci{n}}{\Demandi{n}} = \min\Big\{1, \frac{1}{\mu}\Big\}. \label{eq:det:fair}
\end{align}
Inspired by the above observation, we define a \emph{normalization factor} that will help us with providing more informative performance guarantees. 
\begin{definition}[Normalization Factor]
\label{def:normalization:factor}
Given a joint demand distribution $\DemandJoinDist\in\Delta\left({\NonNegRealsN}\right)$ with total expected demand of $\ExpDemand$, the normalization factor $\DetGuar\triangleq \min\Big\{1, \frac{1}{\mu}\Big\}$ represents the optimum
ex-post and  ex-ante minimum FR if $\DemandJoinDist$ is replaced by a deterministic distribution (over demand sequences) with an identical
total expected demand $\ExpDemand$.
\end{definition}

To see why we introduce the normalization factor $\DetGuar$, note that the above observation in \cref{eq:det:fair} highlights that \emph{even without stochasticity in the demand sequence}, when total demand exceeds supply we cannot guarantee a minimum FR better than $1/\ExpDemand$. This is simply due to the ``scarcity of supply.'' However, if the sequence of demands is stochastic (and possibly correlated), an ex-post or ex-ante minimum FR of $\DetGuar$ may not be always achievable by an online allocation decision maker due to the ``scarcity of information,'' i.e., the unknown realizations of future demands.
Consequently, we use $\DetGuar$ to enable us to decouple the effect of supply scarcity from the effect of stochasticity in the demand sequence on the quality of online allocation decisions. 

We emphasize that $\DetGuar$ is not a benchmark in the sense that it does not serve as an upper bound on the achievable ex-post (or ex-ante) minimum FR for all joint demand distributions with a total expected demand of $\mu$. Due to the stochastic nature of demand and convexity in our objectives, 
it is possible that 
when taking expectation over the realized demand sequence, we can achieve  ex-post and  ex-ante minimum FR's that are higher than $\DetGuar$. However, as explained above, $\DetGuar$ serves as an informative normalization factor to decouple the effects of supply scarcity and information scarcity.}\footnote{\revcolor{
In \Cref{rem:offline} and \Cref{prop:offlinecomparison} of \Cref{subsec:expostupperbound}, we explain that the offline solution is too powerful to allow for a constant-factor competitive ratio. Hence, we do not use that solution as a benchmark.}}

Consequently, we evaluate policies based on how they perform relative to $\DetGuar$. For a policy $\Policy$ and a joint demand distribution $\DemandJoinDist$, we say that the policy achieves \emph{ex-post fairness} (resp. \emph{ex-ante fairness}) of ${\ExpostObj^{\DemandJoinDist}(\pi)}/{\DetGuar}$ (resp. ${\ExanteObj^{\DemandJoinDist}(\pi)}/{\DetGuar}$). 
We aim to design a policy with guarantees on both ex-post and ex-ante fairness that hold universally for all joint demand distributions $\DemandJoinDist$ with $n$ agents and supply scarcity $\ExpDemand$. 
We refer to the universal lower bounds of a policy $\Policy$ as its \emph{fairness guarantees}, which we {formally} define below.

\begin{definition}[\textbf{Ex-post/Ex-ante Fairness Guarantee}]
\label{def:polperf}
A sequential allocation policy $\Policy$ achieves an \emph{ex-post fairness guarantee} (resp. \emph{ex-ante fairness guarantee}) of $\LowerboundFunP$ (resp. $\LowerboundFunA$), if for all $\NumAgents\in\mathbb{N}$ and $\ExpDemand\in \NonNegReals$,
\begin{align*}
\underset{\DemandJoinDist\in\JointDistStateSpace}{\inf} \frac{\ExpostObj^{\DemandJoinDist}(\pi)}{\DetGuar}~\geq ~\LowerboundFunP~~~~~\Bigg(\textrm{resp.}~~~~\underset{\DemandJoinDist\in\JointDistStateSpace}{\inf}\frac{\ExanteObj^{\DemandJoinDist}(\pi)}{\DetGuar}~\geq~\LowerboundFunA\Bigg)~,
\end{align*}
{where $\JointDistStateSpace$ denotes the domain of joint demand distributions with $n$ agents and {total expected demand of} $\mu$.}
\end{definition}

Our goals are (i) to understand the limits of achieving fairness in sequential allocation by computing upper bounds on the achievable guarantees, and (ii) to obtain tight lower bounds by designing policies with strong ex-post guarantees as well as ex-ante guarantees. {We show in \Cref{sec:model} that no gap exists between the achievable upper and lower bounds under both ex-post and ex-ante notions. More specifically, we show how to obtain {\em{exactly matching}} upper and lower bounds for both notions of fairness using a {\em single} adaptive policy.}

\section{Optimal Bounds on Fairness Guarantees}
\label{sec:model}
\label{sec:mainPPA}
In this section, we present our main results for the setting introduced in \Cref{sec:prelim}. First, we focus on ex-post fairness in \Cref{subsec:expostupperbound} and establish parameterized upper bounds on the ex-post fairness guarantee achievable by any sequential allocation policy---whether adaptive or non-adaptive, computationally efficient (i.e., with polynomial running time) or not.
Then, somewhat surprisingly, we show that such upper bounds can be achieved by our policy, which is introduced and analyzed in  Sections \ref{subsec:ourpolicy} and \ref{subsec:ex-post-lower}. 
Next, to illustrate the power of our simple adaptive algorithm, in \Cref{subsec:AdaptivityGap} we characterize the ex-post fairness guarantee of the best policy which non-adaptively aims for a particular target fill rate, and we show that our policy performs favorably compared to such a policy. Finally, in Section \ref{subsec:ex-ante} we turn our attention to the notion of ex-ante fairness, and we show that our policy also achieves the best possible ex-ante fairness guarantee.

\subsection{Upper Bound on Ex-post Fairness Guarantee }
\label{subsec:expostupperbound}
We begin this section by establishing a fundamental limit on ex-post fairness for any allocation policy when faced with stochastic and sequential demands. The main result of this subsection is the following theorem: 

\begin{theorem}[Upper Bound on Ex-post Fairness Guarantee]
\label{thm:hardness-ex-post}
Given a fixed number of agents $\NumAgents\in\mathbb{N}$ and supply scarcity $\ExpDemand\in \NonNegReals$, no sequential allocation policy obtains an ex-post fairness guarantee (see \Cref{def:polperf}) greater than $\LowerboundFunEx$, defined as
\begin{equation}
\label{eq:kappa-expost}
   \LowerboundFunEx \triangleq\begin{cases}
   1-(\frac{\NumAgents}{2(\NumAgents+1)})\ExpDemand, &\ExpDemand \in [0, 1) \\
\ExpDemand-(\frac{\NumAgents}{2(\NumAgents+1)})\ExpDemand^2, &\ExpDemand \in [1, \frac{\NumAgents+1}{\NumAgents})\\
\frac{\NumAgents+1}{2\NumAgents}, &\ExpDemand \in [\frac{\NumAgents+1}{\NumAgents}, +\infty)
\end{cases}.
\end{equation}
\end{theorem}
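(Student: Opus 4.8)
The plan is to construct, for each regime of $\ExpDemand$, a family of ``hard'' joint demand distributions that forces any sequential policy to obtain an expected minimum fill rate no larger than $\LowerboundFunEx \cdot \DetGuar$. The guiding intuition is the two-agent Bernoulli example from the introduction: a policy cannot distinguish between two futures that look identical after the observed prefix, so it must hedge, and hedging is costly. I would generalize that example to $\NumAgents$ agents by building a ``nested'' or ``staircase'' construction: agent $i$'s demand is positive only on a shrinking subset of sample paths, so that at each arrival the policy faces a genuine fork between ``this is the last positive demand, so I should give a high fill rate'' and ``more positive demand is coming, so I should conserve supply.'' The recursive structure should let me pass from the $\NumAgents$-agent bound to the $(\NumAgents+1)$-agent bound and read off the $\tfrac{\NumAgents}{2(\NumAgents+1)}$ coefficient.

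The first concrete step is to reduce to a clean optimization over policies. Fix the hard instance and let $r_i$ denote the fill rate the policy commits to for agent $i$ (a function only of the history the policy has seen). Because the minimum fill rate is at most each individual $r_i$, and because on the ``all-positive'' sample path the remaining supply constraint $\sum_i r_i \Demandi{i} \le 1$ must hold, one gets a deterministic relaxation: the policy is effectively choosing a nonincreasing-in-information sequence of target rates, and the adversary gets to decide when the sequence of positive demands terminates. I would then write the policy's expected minimum FR as an expectation over the (random) termination time and optimize. The key inequality to extract is that if the policy commits to fill rate $r$ for the current agent and demand might continue, it ``wastes'' supply on the branch where demand stops (it gave more than needed to equalize) or ``starves'' a later agent on the branch where demand continues; balancing these two losses across $\NumAgents+1$ equally-weighted branches yields the $\tfrac{\NumAgents}{2(\NumAgents+1)}$ factor and the quadratic term $\ExpDemand^2$ in the middle regime. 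For $\ExpDemand \ge \tfrac{\NumAgents+1}{\NumAgents}$, the supply is so scarce that $\DetGuar = 1/\ExpDemand$ absorbs the scaling and the bound flattens to the constant $\tfrac{\NumAgents+1}{2\NumAgents}$; I would handle this by scaling down the over-demanded instance to the boundary case $\ExpDemand = \tfrac{\NumAgents+1}{\NumAgents}$ and arguing monotonicity.

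For the three cases I would proceed as follows. In the under-demanded regime $\ExpDemand \in [0,1)$, take the instance where each agent's demand is i.i.d.-like across a carefully weighted set of sample paths with total expected demand exactly $\ExpDemand$, scaled so that $\DetGuar = 1$; the bound $1 - \tfrac{\NumAgents}{2(\NumAgents+1)}\ExpDemand$ should come out of a direct computation of the best achievable expected minimum FR on this instance. In the intermediate regime $\ExpDemand \in [1, \tfrac{\NumAgents+1}{\NumAgents})$, the same construction applies but now $\DetGuar = 1/\ExpDemand$, so the per-instance bound $1 - \tfrac{\NumAgents}{2(\NumAgents+1)}\ExpDemand$ gets divided by $\ExpDemand$... — more precisely, I would rescale demands by $1/\ExpDemand$ so total expected demand is $1$, apply the first-regime analysis, and then multiply through, producing $\ExpDemand - \tfrac{\NumAgents}{2(\NumAgents+1)}\ExpDemand^2$. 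The third regime follows by taking $\ExpDemand \downarrow \tfrac{\NumAgents+1}{\NumAgents}$ in the second expression, which evaluates to $\tfrac{\NumAgents+1}{2\NumAgents}$, together with the observation that making demand even scarcer cannot help an online policy relative to $\DetGuar$.

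The main obstacle I anticipate is showing that the specific nested construction is actually \emph{optimal} for the adversary — i.e., that no policy can do better against it than the stated bound, and simultaneously that no cleverer instance does worse to the policy. The first direction requires carefully solving the induced deterministic optimization over the policy's target-rate sequence, which is where the $\NumAgents$-dependence and the factor $\tfrac{1}{2}$ must be pinned down exactly rather than up to constants; getting the exact coefficient (not just the right order) is the delicate part. I expect the clean way to do this is an induction on $\NumAgents$: assume the bound for $\NumAgents$ agents, prepend one more agent whose demand is positive with a tuned probability, and show the optimal first-agent fill rate trades off a fraction of supply now against the inductive value-to-go, with the optimum of that one-dimensional trade-off reproducing the $(\NumAgents+1)$-agent formula. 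The convexity of the relevant value function (or a direct KKT/first-order argument on the scalar optimization) should close the induction.
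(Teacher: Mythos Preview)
Your staircase construction is exactly the paper's hard instance: in scenario $\sigma$, the first $\sigma$ agents have equal positive demand and the rest have zero, so any online policy sees identical histories across the surviving scenarios and must commit to a single allocation $y_i$ for agent $i$. So the core idea is right.

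Where your proposal diverges from the paper is in the analysis. You plan an induction on $n$ (prepend an agent, solve a one-dimensional tradeoff, close the recursion); the paper instead writes a single \emph{factor-revealing LP} in the variables $(y_i, r_\sigma)$ with constraints $r_\sigma \le \tfrac{n+1}{2\mu} y_\sigma$, $r_\sigma \le r_{\sigma-1}$, and $\sum_i y_i \le 1$, and upper-bounds the primal by exhibiting an explicit dual feasible point. That avoids the inductive bookkeeping entirely and reads off the coefficient $\tfrac{n}{2(n+1)}$ in one step. Your induction can be made to work, but the conditional instance after agent $1$ is not literally the $(n-1)$-agent instance (there is an extra ``all remaining demands are zero'' branch), so the one-dimensional optimization will need that extra term; the LP route is cleaner.

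The more substantive gap is your treatment of the regimes. The paper does \emph{not} use one construction plus scaling; it uses two different instances. For $\mu \ge 1+\tfrac{1}{n}$ the instance has $n$ equiprobable scenarios with per-agent demand $\tfrac{2\mu}{n+1}$; for $\mu < 1+\tfrac{1}{n}$ the instance has $n+1$ scenarios, the last being ``no demand at all'' with probability $1-\tfrac{n\mu}{n+1}$, and per-agent demand $\tfrac{2}{n}$ in the others. The extra all-zero scenario is what makes the under-demanded bound $1-\tfrac{n}{2(n+1)}\mu$ come out correctly; if you try to push the over-demanded instance into this regime you get a vacuous bound (exceeding $1$) for small $\mu$. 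Your ``monotonicity'' argument for $\mu \ge \tfrac{n+1}{n}$ (``making demand scarcer cannot help relative to $\DetGuar$'') is not self-evidently true as stated, though it can be salvaged because the over-demanded instance is literally a rescaling in $\mu$; the paper simply constructs the instance for each $\mu$ directly and reruns the LP, which avoids the issue.
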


\begin{figure}[t]
    \centering
    \includegraphics{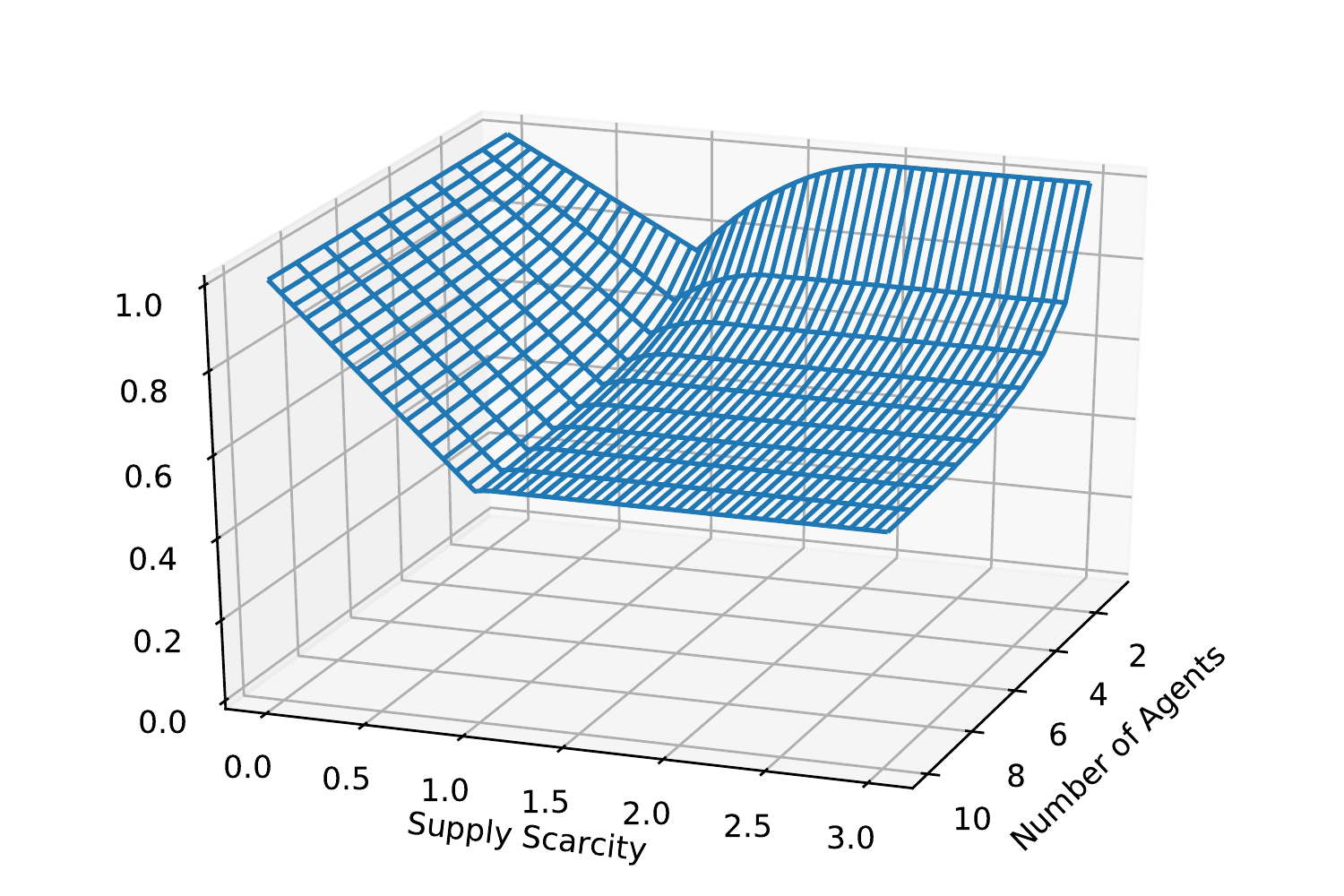}
    \caption{The upper bound on the ex-post fairness guarantee of any policy, as a function of the supply scarcity and the number of agents.}
    \label{fig:upperbound3d}
\end{figure}

See \Cref{fig:upperbound3d} for an illustration of this upper bound as a function of the supply scarcity $\ExpDemand$ and the number of agents $\NumAgents$. 
Per \Cref{def:polperf}, the ex-post fairness guarantee is relative to the achievable minimum FR when demands are deterministic, namely $\DetGuar = \min\{1, 1/\ExpDemand\}$. Consequently, this upper bound provides insight into the unavoidable loss in efficiency and equity when demands are a priori uncertain and realize sequentially.
In particular, we remark that the achievable fairness guarantee crucially depends on 
the supply scarcity. In the regime where {$\mu <1 + \frac{1}{\NumAgents}$, which we refer to as the {\em under-demanded regime}},   $\LowerboundFunEx$ {initially} worsens as $\mu$ increases {before hitting its minimum (for any fixed $\NumAgents$) when expected demand equals supply, i.e., at $\ExpDemand = 1$.} This suggests that the stochastic nature of demand is most harmful when expected demand exactly equals supply.
On the other hand, in the {\em over-demanded regime} where $\mu \geq 1+\frac{1}{\NumAgents}$, the achievable fairness guarantee is independent of $\mu$.
Given that we are usually in the over-demanded regime in our motivating applications, Theorem \ref{thm:hardness-ex-post} ensures that supply scarcity does not contribute to the loss in fairness due to uncertain, correlated, and sequential demand.
Fixing $\mu$, the upper bound always decreases with $n$, implying that achieving fairness can be more challenging for a larger population of agents with stochastic demands, even if the total expected demand of the population remains the same. 
Finally, we highlight  that the bound is always at least $1/2$ regardless of the supply scarcity and the number of agents, and it attains its minimum when $\ExpDemand =1$  and $n \rightarrow +\infty$.

The proof of Theorem \ref{thm:hardness-ex-post} relies on establishing two hard instances with similar structures, one for $\ExpDemand < 1+1/\NumAgents$ and one for $\ExpDemand \geq 1+1/\NumAgents$. The details of the proof are presented in Appendix \ref{apx:upper-bound}. 
Here, we present the instance for the over-demanded regime along with a sketch of our analysis.
In this instance, there are $\NumAgents$ possible equally-likely scenarios, i.e.,  scenario $\sigma$ happens with probability $1/\NumAgents$ for $\sigma \in [\NumAgents]$. 
In scenario $\sigma$, the first $\sigma$ agents have equal demand of $\frac{2\ExpDemand}{\NumAgents+1}$ and the rest have no demand.
We illustrate this instance in \Cref{fig:hardexample}.\footnote{We remark that similar settings can occur in practice. As one example, consider the challenge of allocating limited disaster-relief supplies to towns damaged by a hurricane which may continue on its destructive path or may veer back out to sea.}

First, note that the total supply scarcity for the above hard instance is $\ExpDemand$ (as shown in Appendix \ref{apx:upper-bound}). 
Next, consider any sequential policy that faces a non-zero demand from agent $i$. The policy cannot distinguish among possible scenarios ${i, i+1, \ldots, n}$. Consequently, its allocation decision for agent $i$ will be independent of the scenario. In light of this observation, 
any policy can be sufficiently described by a set of (possibly random) allocations with expected values $\vec{\AppAlloc} = (\AppAlloci{1}, \AppAlloci{2}, \dots, \AppAlloci{n})$, such that if agent $i$ has non-zero demand, then they receive an expected allocation $\AppAlloci{i}$. 
Given $\vec{\AppAlloc}$, the minimum FR for scenario $\sigma$ is 
\begin{equation}
    \label{eq:hard-instance-inequ}
    r_\sigma \triangleq \frac{(n+1)}{2 \mu}\expect[\Policy]{ \min\{\Alloci{1}, \Alloci{2}, \ldots,  \Alloci{\sigma}\}} \leq \frac{(n+1)}{2 \mu}\ \min\{\AppAlloci{1}, \AppAlloci{2}, \ldots,  \AppAlloci{\sigma}\}\leq \frac{(n+1)}{2 \mu} \AppAlloci{\sigma}~,
\end{equation}
where the first inequality is due to the expectation of a minimum being less than the minimum over expectations (Jensen's inequality).

In order to establish our upper bound, we set up a factor-revealing linear program as presented in \Cref{fig:hardnessLP}. The LP maximizes the expected minimum FR subject to three sets of natural constraints that must hold for any sequential policy:
\begin{itemize}
    \item The minimum FR in scenario $\sigma$ cannot exceed the FR for agent $\sigma$, as shown in \cref{eq:hard-instance-inequ}.
    \item The minimum FR in scenario $\sigma$ is at most the minimum FR in scenario $\sigma -1$.
    \item The total amount of expected allocations cannot exceed the available supply of $1$.
\end{itemize}
In Appendix \ref{apx:upper-bound}, we provide an upper bound on the optimal value of this LP by presenting a feasible solution to its dual. To complete the proof of \Cref{thm:hardness-ex-post}, we must scale by $\DetGuar = \min\{1, 1/\ExpDemand\}$ to translate this upper bound on the expected minimum FR into an upper bound on ex-post fairness (see \Cref{def:polperf}).

\begin{figure}[t]
 \centering
 \begin{subfigure}[b]{0.59\textwidth}
     \includegraphics[trim={11cm 12cm 16cm 8cm},clip,width=\textwidth]{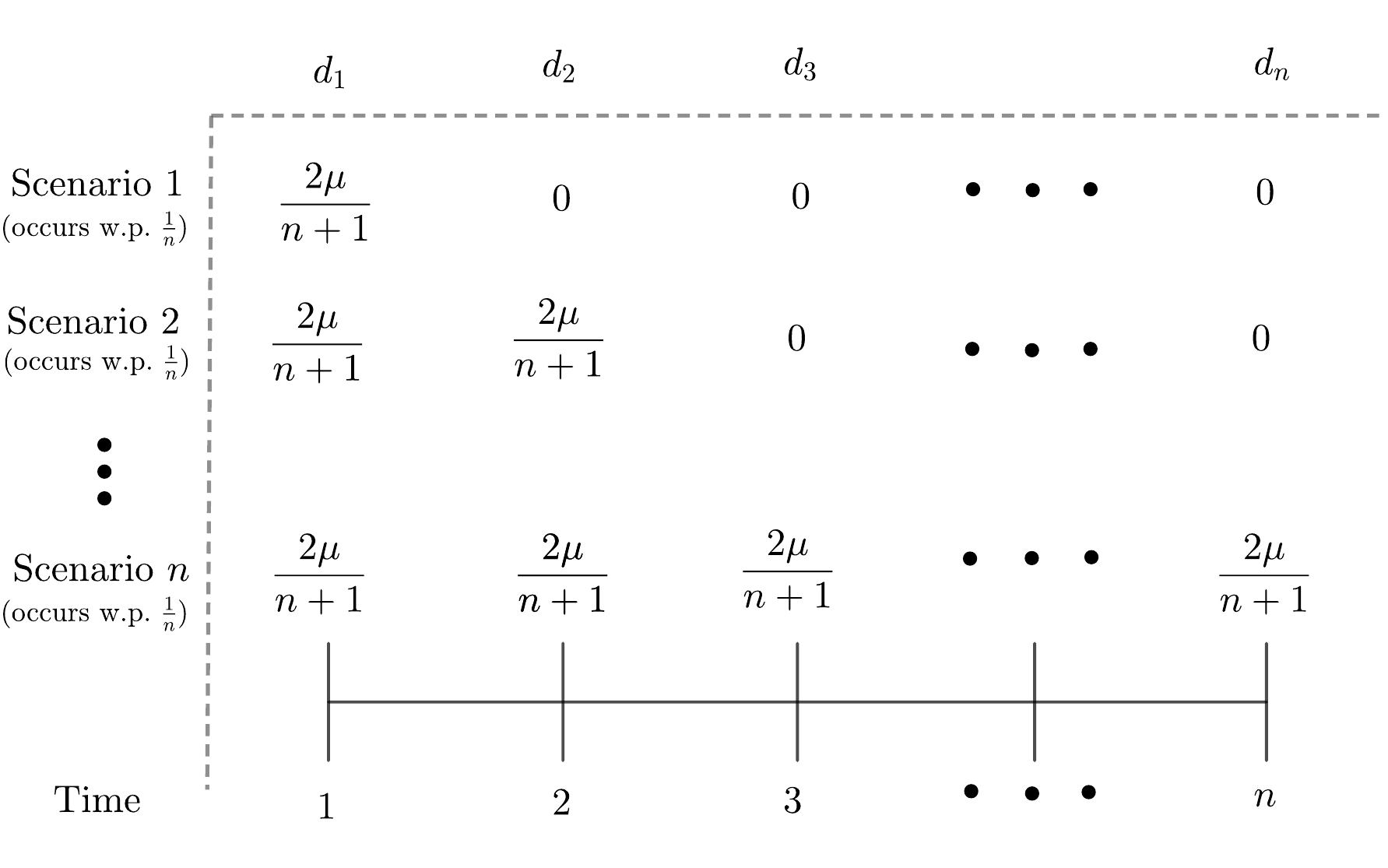}
    \caption{}
    \label{fig:hardexample}
 \end{subfigure}
 \begin{subfigure}[b]{0.39\textwidth}
 \includegraphics[trim={6.5cm 17.5cm 6.5cm 3cm},clip,width=.95\textwidth]{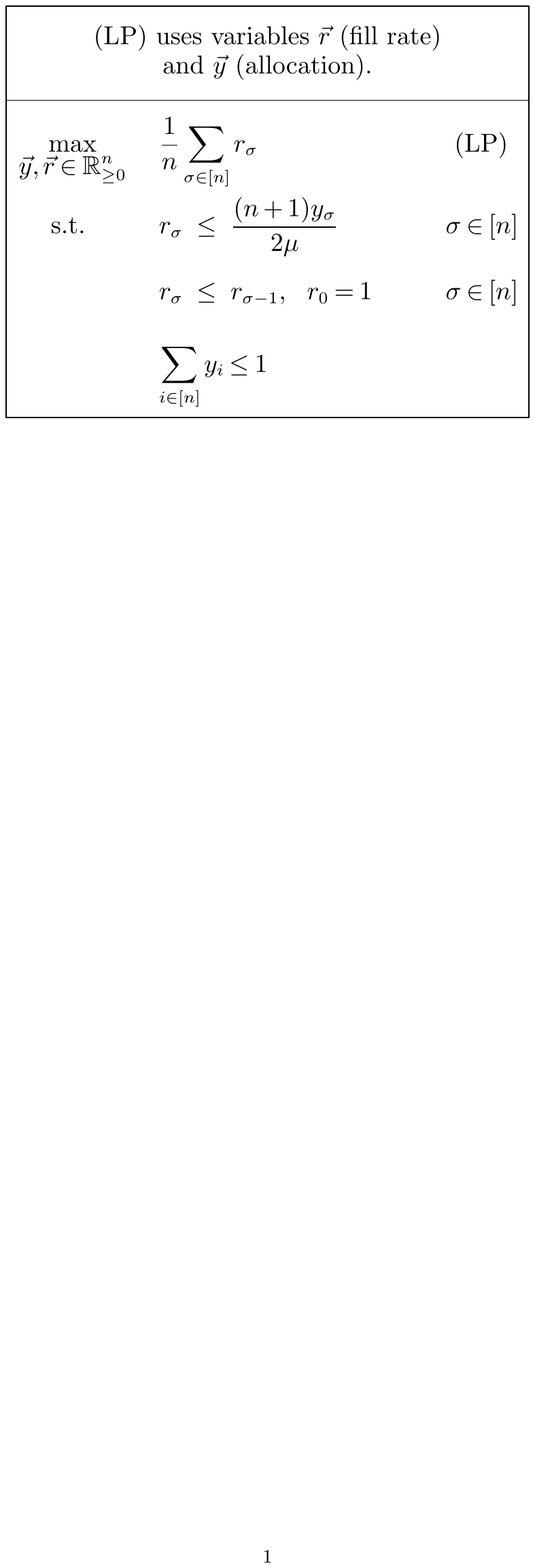}
    \caption{}
    \label{fig:hardnessLP}
 \end{subfigure}
 \caption{(a) The instance and (b) the factor-revealing LP which establish an upper bound of $\bm{\LowerboundFunEx}$ for the over-demanded regime (i.e., when $\bm{\ExpDemand \geq 1 +\frac{1}{\NumAgents}}$).}
\end{figure}

\revcolor{We finish this subsection with two important remarks regarding (i) the use of the offline solution as a benchmark and (ii) the shortcomings of the optimum online solution.
\begin{remark}[Comparison with Offline Solution]
\label{rem:offline}
In sequential decision making problems, it is common to evaluate the performance of a policy by comparing it to the {\em offline solution}, i.e., the optimum solution that observes the entire demand sequence before making any decisions (see, e.g., \citealt{mehta2013online} and references therein). However, in our setting, such an offline solution proves to be too powerful, in the sense that it is impossible to achieve a constant-factor guarantee compared to such a solution.  In the following proposition (proven in Appendix \ref{apx:offline}), we use the same example as shown in \Cref{fig:hardexample} to establish this impossibility result. 
\begin{proposition}[Comparison with Offline Solution]
\label{prop:offlinecomparison}
Given a fixed number of agents $\NumAgents\in\mathbb{N}$, there exists a supply scarcity $\ExpDemand\in \NonNegReals$ such that no sequential allocation policy can guarantee more than a $\frac{1}{\log(\NumAgents+1)}$ fraction of the expected minimum fill rate achieved by the offline solution.
\end{proposition}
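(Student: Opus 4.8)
The plan is to apply the hard instance of \Cref{fig:hardexample} with the particular choice of supply scarcity $\mu = \frac{n+1}{2}$, and to directly compare the expected minimum fill rate of the offline optimum against that of an arbitrary online policy on this instance. With this choice of $\mu$, in scenario $\sigma\in[n]$ the first $\sigma$ agents each have demand $\frac{2\mu}{n+1}=1$ and all later agents have zero demand, and each of the $n$ scenarios occurs with probability $1/n$.

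First I would evaluate the offline optimum. Conditioned on scenario $\sigma$, the demand sequence is deterministic with total demand $\sigma$, so by \cref{eq:det:fair} the offline optimum equalizes all fill rates and attains $\min\{1,1/\sigma\}=1/\sigma$ in that scenario. Averaging over the $n$ equally likely scenarios, the offline optimum achieves an expected minimum fill rate of $\frac1n\sum_{\sigma=1}^{n}\frac1\sigma=\frac{H_n}{n}$, where $H_n=\sum_{k=1}^{n}\frac1k$ denotes the $n$-th harmonic number.

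Next I would upper bound the expected minimum fill rate of an arbitrary online policy $\pi$ using exactly the ingredients already developed for \Cref{thm:hardness-ex-post}. On this instance $\pi$ is fully described by its vector of expected (active) allocations $\vec{\AppAlloc}=(\AppAlloci{1},\dots,\AppAlloci{n})$, which satisfies $\sum_{i\in[n]}\AppAlloci{i}\le 1$ since the supply constraint ($s=1$) must hold along scenario $n$, where all $n$ agents are active. Moreover, \cref{eq:hard-instance-inequ} shows that the expected minimum fill rate of $\pi$ in scenario $\sigma$ is at most $\frac{n+1}{2\mu}\AppAlloci{\sigma}$. Averaging over scenarios then yields $\ExpostObj^{\DemandJoinDist}(\pi)\le \frac{n+1}{2\mu n}\sum_{i\in[n]}\AppAlloci{i}\le \frac{n+1}{2\mu n}=\frac1n$.

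Combining the two bounds, on this instance the ratio of the online to the offline expected minimum fill rate is at most $\frac{1/n}{H_n/n}=\frac1{H_n}\le\frac1{\ln(n+1)}$, where the last inequality uses $H_n\ge\int_{1}^{n+1}\frac{dx}{x}=\ln(n+1)$; this establishes the proposition. I expect no substantive obstacle, as the argument is essentially bookkeeping layered on top of the factor-revealing LP behind \Cref{thm:hardness-ex-post}; the one point requiring care is that $\mu$ must be taken at least $\frac{n+1}{2}$ so that the per-scenario offline fill rate $\frac{n+1}{2\mu\sigma}$ is never truncated at $1$, which is precisely what preserves the harmonic-sum growth of the offline value and hence the logarithmic gap. (In fact, every $\mu\ge\frac{n+1}{2}$ yields the same ratio $1/H_n$, since the online and offline values both scale like $\frac{n+1}{2\mu n}$.)
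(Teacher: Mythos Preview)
Your proof is correct and follows essentially the same approach as the paper: both use the hard instance of \Cref{fig:hardexample} with $\mu\ge\frac{n+1}{2}$ (you take equality, the paper keeps $\mu$ general under the constraint $\frac{2\mu}{n+1}\ge 1$), invoke the LP upper bound from \Cref{thm:hardness-ex-post} for the online value, and compare against the harmonic-sum offline value via the integral estimate $H_n\ge\ln(n+1)$.
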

\end{remark}
In light of the above proposition, we focus on establishing absolute guarantees on  the expected minimum fill rate. However, as explained in Section \ref{sec:prelim} (see Definition \ref{def:normalization:factor} and its related discussion), we use the normalization factor $\DetGuar$ to disentangle the effects of supply scarcity from the effects of sequential decision-making when faced with a stochastic demand sequence.} 
\revcolor{
\begin{remark}[Optimum Online Solution]
\label{rem:online}
A natural candidate for a policy that may achieve the upper bound on the ex-post fairness guarantee -- given by \cref{eq:kappa-expost} -- is the {\em optimal online policy} which can be found via a DP.
 In Appendix \ref{apx:DP:details}, we formally present the underlying DP. 
 However, we also illustrate that there are significant limitations and drawbacks to a DP approach for maximizing the expected minimum FR in this setting. First, 
 (i) as we show in Appendix \ref{apx:DP:hardness-input-model},
 the state space of such a DP is exponentially large for arbitrarily correlated demands, which makes the DP intractable (in particular, the state space is exponential in the number of agents $n$). Nevertheless, in Appendix \ref{apx:DP:independent} we  present an FPTAS for the special case of independent demand.
 Even beyond computational challenges,  (ii) solving the DP requires full distributional knowledge, 
 (iii) such a DP solution does not necessarily perform well for our second objective function, i.e., maximizing the ex-ante minimum FR, and (iv) the DP decisions may lack transparency and interpretability, which are highly desirable properties in our motivating applications. (For an illustration of points (iii) and (iv), see \Cref{ex:DPBad} in Appendix \ref{apx:DPexample}; for a summary of the drawbacks of the DP, see \Cref{table:DP:Sucks}).
 \end{remark}}

Remarkably, in the following subsection, we design a simple adaptive policy that not only achieves the best possible ex-post fairness guarantee of $\LowerboundFunEx$, but also offers several corresponding advantages over a DP solution:
(i) it can be computed efficiently, (ii) it only requires knowledge of the conditional first moments of agents' demands, and (iii) its decisions can be clearly explained. Additionally, as shown in \Cref{subsec:ex-ante}, 
it simultaneously attains the best-possible ex-ante fairness guarantee.

\subsection{Projected Proportional Allocation  Policy}
\label{subsec:ourpolicy}
We introduce our policy, referred to as the \emph{\PolicyName}~(\PolicyNameAb) policy,
through the following simple intuition. Consider a planner that (magically) has access to all the demand realizations 
$\DemandVec$.
As already discussed in \Cref{sec:prelim},
to maximize the minimum FR {when the demand realizations are known a priori}, the planner should equalize the FR of all agents by allocating $\Alloc^*_i=\min\left(\Demand_i,\frac{\Demandi{i}}{\sum_{j\in[n]}\Demand_j}\right)$ to each agent $i$.
If  $\sum_{j\in[\NumAgents]}\Demandi{j}$ is at most the initial supply (which we normalize to $1$), then each agent $i$ obtains a full allocation of $\Alloci{i}=\Demandi{i}$ in such a solution. This results in the maximum equal FR of $1$. Otherwise, all the agents will have an equal FR of $1/\sum_{j\in[\NumAgents]}\Demandi{j}$, which is $1/\ExpDemand$ when each demand is equal to its expected value.

This solution can alternatively be obtained by solving a DP that returns allocations $\Alloc^*_\NumAgents,\Alloc^*_{\NumAgents-1},\ldots,x^*_1$ maximizing the minimum FR. By a simple induction argument, given the remaining supply $\Supplyi{i}$ at period $i$, this DP maintains the following invariant at each period $i$ (refer to Appendix~\ref{apx:offline:DP} for details): 
\begin{equation}
\Alloc^*_i=\min\left\{\Demandi{i},\Supplyi{i}\frac{\Demand_i}{\Demand_i+\sum_{j \in [i+1:\NumAgents]}\Demandi{j}}\right\}=\min\left\{\Demandi{i},\Supplyi{i}\frac{\Demand_i}{\Demand_i+\left[\textrm{total future demand}\right]}\right\}~.\footnote{For any $a, b \in \mathbb{N}$ we use $[a:b]$ to refer to the set $\{a, a+1, \dots, b\}$ if $a \leq b$ (and the empty set otherwise).} \label{eq:PPAinvariant}
\end{equation}
Notably, the above invariant suggests a \emph{sequential} implementation of the optimal solution at each period $i$ that only uses the knowledge of $\Demandi{i}$ (i.e., the current demand at period $i$) and $\sum_{j\in [i+1:n]}\Demandi{j}$ (i.e., the total future demand from period $i+1$ to $n$). {Now consider a setting with incomplete information, namely, with only knowledge of the current sample path of the observed demands up to period $i$, which we denote by $\DemandVec_{[1:i]} \triangleq (\Demandi{1}, \Demandi{2}, \dots, \Demandi{i})$. Our \PolicyNameAb\ policy implements a version of the above policy by replacing the exact realization of total future demand with the conditional first moment of this random variable given the current sample path. More precisely:}

\mybox{
\begin{displayquote}
\begin{itemize}
    \item Given the remaining supply $\Supplyi{i}$, the \PolicyNameAb~policy allocates an amount
\begin{equation}
\tag{\textsc{\PolicyNameAb's update rule}}
  \displaystyle  x_i = \min\left\{\Demand_i, \Supply_i \frac{\Demand_i}{\Demand_i + \ExpDemand_{i+1}}\right\}
\end{equation}
of the (divisible) resource to agent $i$ upon their arrival, where 
\begin{equation*}
    \quad \quad ~\ExpDemand_{i+1} \triangleq \expect[\vec{\Demand}\sim\DemandJoinDist]{\ \sum_{j \in [i+1: \NumAgents]}\Demandi{j} \ \Big| \  \DemandVec_{[1:i]}} \qquad \qquad \qquad \qquad \qquad \qquad
\end{equation*}
\end{itemize}
\end{displayquote}
}

Note that the conditional expected future demand $\ExpDemand_{i+1}$ given all previously-realized demands $\DemandVec_{[1:i]}$ is a function of  $\DemandVec_{[1:i]}$; however, for ease of notation, we use $\ExpDemand_{i+1}$ without any input arguments.

We highlight that the \PolicyNameAb~policy is simple, computationally efficient, and solely uses first-moment knowledge about the future demands. \revcolor{Consequently, the \PolicyNameAb~policy does not need to know the order of future arrivals.}
Further, because the allocation decisions of the \PolicyNameAb\ policy depend smoothly on the first moment of future demand, these decisions are robust to small changes in the scale of any marginal distribution.
Yet, as we show in Sections \ref{subsec:ex-post-lower} and \ref{subsec:ex-ante}, this simple policy remarkably achieves the best possible guarantee for both notions of fairness (ex-post  and ex-ante), even though these two notions are quantitatively different whenever $\NumAgents > 1$.

We now point out an important technical property of the \PolicyNameAb\ policy that will help us in establishing these key results.

\begin{remark}
\label{rem:never-run-out}
The \PolicyNameAb~policy can only run out of supply at the end of period $i$ if $\ExpDemand_{i+1}=0$, or equivalently, only if all future demands $\DemandVec_{[i+1:n]}$ are deterministically equal to zero, conditional on the current realized sample path of demands $\DemandVec_{[1:i]}$. This property holds simply because 
\begin{equation*}
    s_{i+1}=s_i-x_i\geq s_i\frac{\ExpDemand_{i+1}}{\Demand_i+\ExpDemand_{i+1}}.
\end{equation*}
\end{remark}

\revcolor{To conclude this section, we note that one can consider a policy similar to our \PolicyNameAb\ policy but with a slightly different updating rule that is monotone non-increasing in the fill rate. In particular, suppose the allocation at any time $i$ is given by
$x_i = \min\left\{f_{i}  \Demand_i, \Supply_i \frac{\Demand_i}{\Demand_i + \ExpDemand_{i+1}}\right\}$, where $f_i \in [0,1]$ is the minimum FR before the arrival of agent $i$. In words, this alternative policy ensures that in each time period  $i$ we do not have a fill rate  larger than the current minimum FR. While such an alternative policy achieves weakly larger ex-post fairness than our \PolicyNameAb\ policy, the two policies provide an identical ex-post fairness \emph{guarantee} (see \Cref{subsec:ex-post-lower}). Furthermore, this alternative policy suffers from a significant drawback: by definition, it systematically disfavors late-arriving agents. In fact, the minimum FR under this policy is always attained by the last agent (agent $n$) as long as that agent's demand is non-zero, which leads to a sub-optimal ex-ante fairness guarantee. 
In sharp contrast, our \PolicyNameAb\ policy
avoids systematically disfavoring late-arriving agents; consequently, in \Cref{subsec:ex-ante} we show that our \PolicyNameAb\ policy achieves the best possible ex-ante fairness guarantee.
}

\subsection{Ex-post Fairness of \PolicyNameAb\ Policy}
\label{subsec:ex-post-lower}
In this section, we analyze the ex-post fairness guarantee of our \PolicyNameAb~policy. In the following theorem, we show that this simple policy indeed achieves the best possible ex-post fairness guarantee. 
\begin{theorem}[Ex-post Fairness Guarantee of \PolicyNameAb\ Policy]
\label{thm:expost}
Given a fixed number of agents $\NumAgents\in\mathbb{N}$ and supply scarcity $\ExpDemand\in \NonNegReals$, the \PolicyNameAb ~policy achieves an ex-post fairness guarantee (see \Cref{def:polperf}) of at least $ \LowerboundFunEx$ 
(defined in \cref{eq:kappa-expost}).
\end{theorem}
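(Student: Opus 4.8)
The plan is to prove the theorem by backward induction over the periods, establishing an explicit lower bound on the value-to-go of the \PolicyNameAb\ policy and then evaluating it at period~$1$. I first record the dynamics of \PolicyNameAb. Write $\ExpDemand_i\triangleq\expect{\sum_{j\in[i:\NumAgents]}\Demandi{j}\Big|\DemandVec_{[1:i-1]}}$ for the conditional expected remaining demand, $\hat{\ExpDemand}_i\triangleq\ExpDemand_i/\Supplyi{i}$ for the \emph{effective scarcity} at period~$i$, $\rho_i\triangleq(\Demandi{i}+\ExpDemand_{i+1})/\Supplyi{i}$ for the projected demand-to-supply ratio at period~$i$, and $f_i\in(0,1]$ for the minimum fill rate realized by agents $1,\dots,i-1$ (with $f_1=1$). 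From \PolicyNameAb's update rule: $u_i=\min\{1,1/\rho_i\}$; if $\rho_i\le1$ then $\Supplyi{i+1}=\Supplyi{i}-\Demandi{i}$ and a short computation gives $\hat{\ExpDemand}_{i+1}\le\rho_i\le1$; if $\rho_i>1$ then $\Supplyi{i+1}=\ExpDemand_{i+1}/\rho_i$ and $\hat{\ExpDemand}_{i+1}=\rho_i$. Hence $\hat{\ExpDemand}_{i+1}\min\{f_i,u_i\}\le1$ in both cases; \Cref{rem:never-run-out} ensures every agent with positive demand receives a well-defined positive fill rate; and the tower property yields the ``martingale'' identity $\expect{\rho_i\mid\DemandVec_{[1:i-1]}}=\hat{\ExpDemand}_i$.

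The induction hypothesis is that, for every period~$i$ and every reachable state,
\[
\Phi_i\ \triangleq\ \expect{\min\Big\{f_i,\ \min_{j\in[i:\NumAgents]}u_j\Big\}\ \Big|\ \DemandVec_{[1:i-1]}}\ \ \geq\ \ \min\Big\{f_i,\,\tfrac{1}{\hat{\ExpDemand}_i}\Big\}\cdot\LowerboundFunPnoarg\big(f_i\hat{\ExpDemand}_i,\ \NumAgents-i+1\big).
\]
At $i=1$ (where $f_1=1$, $\Supplyi{1}=1$, $\hat{\ExpDemand}_1=\ExpDemand$) this reads $\ExpostObj^{\DemandJoinDist}(\PolicyNameAb)=\Phi_1\ge\min\{1,1/\ExpDemand\}\,\LowerboundFunPnoarg(\ExpDemand,\NumAgents)=\DetGuar\cdot\LowerboundFunEx$, which is exactly the claimed guarantee after dividing by $\DetGuar$ (cf.\ \Cref{def:polperf}). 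For the base case $i=\NumAgents$ we have $\ExpDemand_{\NumAgents+1}=0$, so $u_\NumAgents=\min\{1,\Supplyi{\NumAgents}/\Demandi{\NumAgents}\}$, and setting $\sigma\triangleq f_\NumAgents\Demandi{\NumAgents}/\Supplyi{\NumAgents}$ the claim becomes the one-variable inequality $\expect{\min\{1,1/\sigma\}}\ge\min\{1,1/\expect{\sigma}\}\,\LowerboundFunPnoarg(\expect{\sigma},1)$. This holds because $x\mapsto\min\{1,1/x\}$ is affine on $[0,1]$ and convex decreasing on $[1,\infty)$, so for fixed mean the minimizing distribution of $\sigma$ is two-point---one atom in $[0,1]$, one in $[1,\infty)$---and optimizing over the weights and the location of the right atom reproduces the piecewise formula for $\LowerboundFunPnoarg(\cdot,1)$ on the three ranges $\expect{\sigma}<1$, $\expect{\sigma}\in[1,2)$, $\expect{\sigma}\ge2$.

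For the inductive step, condition on $\Demandi{i}$ and apply the hypothesis at period $i+1$ (with running minimum $f_{i+1}=\min\{f_i,u_i\}$ and scarcity $\hat{\ExpDemand}_{i+1}$). Using the recorded dynamics, the monotonicity of $\LowerboundFunPnoarg(\cdot,\NumAgents-i)$ on $[0,1]$, and $\hat{\ExpDemand}_{i+1}\le\rho_i$ in the under-demanded branch, both branches collapse to the single pointwise estimate $\Phi_{i+1}\ge\psi_{\NumAgents-i}(f_i,\rho_i)$, where $\psi_k(f,\rho)\triangleq\min\{f,1/\rho\}\LowerboundFunPnoarg(\min\{f\rho,1\},k)$. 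Averaging over $\Demandi{i}$ and using $\expect{\rho_i\mid\DemandVec_{[1:i-1]}}=\hat{\ExpDemand}_i$, it remains to prove the \emph{one-step lemma}: for all $k\ge1$, $f\in(0,1]$, and nonnegative $\rho$, $\expect{\psi_k(f,\rho)}\ge\min\{f,1/\expect{\rho}\}\LowerboundFunPnoarg(f\expect{\rho},k+1)$. The substitution $\rho=\sigma/f$ reduces this to $f=1$, where $\rho\mapsto\psi_k(1,\rho)$ is affine on $[0,1]$ and of the form $c_k/\rho$ on $[1,\infty)$ for a constant $c_k>0$; hence the minimizing $\rho$ (at fixed mean) is two-point, and the resulting optimization carries $\LowerboundFunPnoarg(\cdot,k)$ exactly onto $\LowerboundFunPnoarg(\cdot,k+1)$, the breakpoints $1$ and $1+\tfrac1{k+1}$ emerging from the two regimes of the optimal split. (Unrolling these period-by-period worst cases recovers, up to relabeling, the hard instances behind \Cref{thm:hardness-ex-post}, consistent with tightness.)

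The main obstacle is choosing the right value-to-go lower bound to propagate. Because $\min$ does not commute with expectation, one cannot pull $u_i$ out of the expectation over the future, so the running minimum $f_i$ must be carried as a state variable; moreover a naive invariant such as $\Phi_i\ge\min\{f_i,\,\min\{1,1/\hat{\ExpDemand}_i\}\,\LowerboundFunPnoarg(\hat{\ExpDemand}_i,\NumAgents-i+1)\}$ is actually false at some reachable states. The correct invariant must feed the \emph{product} $f_i\hat{\ExpDemand}_i$ into $\LowerboundFunPnoarg$, so that the ``scarcity still to be seen'' and the ``damage already done to the running minimum'' compound multiplicatively in exactly the way the recursion defining $\LowerboundFunPnoarg$ anticipates. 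Once this invariant is identified, the remaining work---the pointwise collapse of the two demand regimes and the one-dimensional extremal-distribution computation that closes the recursion $\LowerboundFunPnoarg(\cdot,k)\mapsto\LowerboundFunPnoarg(\cdot,k+1)$---is conceptually routine, though careful bookkeeping is needed at the regime boundary and at the cap $\min\{f\rho,1\}$.
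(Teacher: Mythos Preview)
Your proposal is correct and proceeds, like the paper, by backward induction on a value-to-go lower bound; but the invariant you carry and the one-step argument are genuinely different from the paper's.

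The paper introduces the auxiliary process $\beta_i=\min\{f_i,\tfrac{n+1}{n\ExpDemand}\}$, i.e.\ the running minimum capped by a \emph{fixed} constant tied to the \emph{initial} scarcity, and proves the quadratic invariant
\[
\expect{f_{n+1}\mid\DemandVec_{[1:i-1]}}\ \ge\ \beta_i\Big(1-\tfrac{n+1-i}{2(n+2-i)}\,\tfrac{\ExpDemand_i}{\Supplyi{i}}\,\beta_i\Big).
\]
Its induction step conditions on $\Demandi{k}$, splits according to whether $\beta_k\rho_k\le 1$ or $>1$, and in each case lower-bounds the resulting expression by the \emph{same linear} function of $\rho_k$ via two short algebraic inequalities; linearity then makes the expectation over $\Demandi{k}$ immediate.

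Your invariant is the self-similar one $\Phi_i\ge\min\{f_i,1/\hat{\ExpDemand}_i\}\,\LowerboundFunPnoarg(f_i\hat{\ExpDemand}_i,\,n-i+1)$, which treats the residual problem from period $i$ as a fresh instance with $n-i+1$ agents and scarcity $f_i\hat{\ExpDemand}_i$. Your one-step argument collapses both regimes to $\psi_k(1,\rho)$ and then identifies $\min\{1,1/m\}\,\LowerboundFunPnoarg(m,k+1)$ as the lower convex envelope of $\psi_k(1,\cdot)$, so that the bound after taking expectation is exactly the infimum over two-point distributions. The tangent line you implicitly construct (from $(0,1)$ to the point $\rho=\tfrac{k+2}{k+1}$ on the hyperbola) is precisely the paper's linear bound specialized to $\beta_k=1$, so the two proofs meet at the same arithmetic; what differs is the state variable ($\beta_i$ with a global cap versus $f_i,\hat{\ExpDemand}_i$ with a local cap) and the packaging of the one-step inequality (explicit algebra versus convex-envelope reasoning).

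What your route buys: the recursion $\LowerboundFunPnoarg(\cdot,k)\mapsto\LowerboundFunPnoarg(\cdot,k+1)$ is made explicit, the invariant at every intermediate $i$ is itself the tight guarantee for the residual problem (hence pointwise at least as strong as the paper's), and the reduction $\sigma=f\rho$ cleanly shows why the running minimum and the scarcity enter only through their product. What the paper's route buys: a single quadratic invariant that avoids the piecewise definition of $\LowerboundFunPnoarg$ altogether, and an induction step that is two elementary inequalities rather than an extremal-distribution argument.
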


\subsubsection{\texorpdfstring{Proof of \Cref{thm:expost}}{}} In order to prove the above theorem, we would have liked to analyze the evolution of the minimum FR, which we denote with $f_i$ at the end of period $i-1$, i.e., $f_1=1,~f_{i}=\min\{f_{i-1},\frac{x_{i-1}}{d_{i-1}}\}$ for $i\in[2:n+1]$. Instead, we consider the evolution of a closely related stochastic process, which makes the analysis simpler. We define this surrogate stochastic process as follows:
\begin{align}
    \label{eq:beta:def}
    \beta_1 \triangleq \min\left\{1, \frac{n+1}{n\dvar}\right\} ~~~~
    \beta_{i} \triangleq \min\left\{\beta_{i-1}, \frac{x_{i-1}}{d_{i-1}}\right\}, ~~ i\in[2:n+1].
\end{align}
First, we note that $\beta_i = \min\{f_i,\frac{\NumAgents+1}{\NumAgents\dvar}\}$,  $i \in [n+1]$. Next, recall that $\Supplyi{i}$ denotes the remaining supply after agent $i-1$ arrives and receives an allocation. We observe that under the \PolicyNameAb ~policy, $\Supplyi{i}$ evolves according to 
\begin{equation}
    \label{eq:supply:def}
    \Supply_1 = 1 ~~~~ \Supplyi{i} = \Supplyi{i-1} - \min\left\{\Demand_{i-1}, \frac{\Demand_{i-1}}{\Demand_{i-1} + \ExpDemand_{i}}\Supplyi{i-1}\right\}, ~~  i \in [2:n].
\end{equation}
With the above observations, the main step of the proof is carefully analyzing the evolution of $(\beta_k, s_k)$ under the \PolicyNameAb\ policy, which enables us to lower bound the final expected minimum FR in the following lemma.

\begin{lemma}[Lower Bound on Expected Minimum FR]
\label{lem:minfilllowerbound}
Under the \PolicyNameAb ~policy,
for all $i \in [\NumAgents+1]$ and any subsequence of demand realizations $\DemandVec_{[1:i-1]}$,
\begin{equation}
\label{eq:invariant}
\expect[\vec{\Demand}\sim\DemandJoinDist]{f_{n+1}\condition \DemandVec_{[1:i-1]}}\geq  \beta_i\left(1 - \frac{n+1-i}{2(n+2-i)}\frac{\ExpDemandi{i}}{\Supplyi{i}} \beta_i \right)
\end{equation}
where $\beta_i$ is defined in \cref{eq:beta:def}.\footnote{If $\Supplyi{i} = 0$, then by Remark  \ref{rem:never-run-out}, we must also have $\ExpDemandi{i} = 0$. In such cases, we take the convention that $\frac{\ExpDemandi{i}}{\Supplyi{i}} = 0$.} 
\end{lemma}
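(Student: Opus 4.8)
The plan is to prove the inequality by backward induction on $i$, going from $i = n+1$ down to $i = 1$. The base case $i = n+1$ is immediate: when $i = n+1$ we have $f_{n+1} = \beta_{n+1}$ by the observation that $\beta_k = \min\{f_k, \frac{n+1}{n\mu}\}$ combined with... actually, more carefully, at $i=n+1$ the right-hand side becomes $\beta_{n+1}(1 - 0) = \beta_{n+1}$ since the coefficient $\frac{n+1-i}{2(n+2-i)}$ vanishes, and we need $\expect{f_{n+1} \mid \DemandVec_{[1:n]}} \geq \beta_{n+1}$; since $f_{n+1}$ is deterministic given $\DemandVec_{[1:n]}$ and $f_{n+1} \geq \beta_{n+1}$ always, this holds. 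For the inductive step, I would condition on the realized demand $\Demandi{i}$ at period $i$ (and hence on $\beta_{i+1}$, $s_{i+1}$, $\ExpDemandi{i+1}$, which are all determined by $\DemandVec_{[1:i]}$), apply the induction hypothesis to the conditional expectation of $f_{n+1}$ given $\DemandVec_{[1:i]}$, and then take expectation over $\Demandi{i}$ given $\DemandVec_{[1:i-1]}$.

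The key steps, in order, are: (1) From $\beta_{i+1} = \min\{\beta_i, x_i/\Demandi{i}\}$ and the PPA update rule, express $\beta_{i+1}$ and $s_{i+1} = s_i - x_i$ explicitly in terms of $\Demandi{i}$, distinguishing the two cases of the $\min$ in the update rule (whether $x_i = \Demandi{i}$ or $x_i = s_i \Demandi{i}/(\Demandi{i} + \ExpDemandi{i+1})$). In the first case supply may run low but, by \Cref{rem:never-run-out}, only when $\ExpDemandi{i+1} = 0$, in which case the bound degenerates nicely; in the second case $x_i/\Demandi{i} = s_i/(\Demandi{i} + \ExpDemandi{i+1})$ and $s_{i+1} = s_i \ExpDemandi{i+1}/(\Demandi{i} + \ExpDemandi{i+1})$. (2) Plug these into the induction hypothesis at level $i+1$ to get a lower bound on $\expect{f_{n+1} \mid \DemandVec_{[1:i]}}$ as a function $g(\Demandi{i})$ of the period-$i$ demand. (3) Take expectation over $\Demandi{i}$ conditional on $\DemandVec_{[1:i-1]}$, using $\expect{\ExpDemandi{i+1} \mid \DemandVec_{[1:i-1]}} = \ExpDemandi{i} - \expect{\Demandi{i} \mid \DemandVec_{[1:i-1]}}$ (the tower property for conditional first moments) and $\expect{\Demandi{i}\mid\DemandVec_{[1:i-1]}} \le \ExpDemandi{i}$. (4) Show that the resulting quantity is at least $\beta_i(1 - \frac{n+1-i}{2(n+2-i)}\frac{\ExpDemandi{i}}{s_i}\beta_i)$, which reduces to a one-dimensional inequality in the scalar variables $\Demandi{i}$ (or its conditional mean), $\ExpDemandi{i}$, $s_i$, $\beta_i$.

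I expect the main obstacle to be step (4): the function $g(\Demandi{i})$ involves $\min\{\beta_i, s_i/(\Demandi{i}+\ExpDemandi{i+1})\}$ and a quadratic-in-$\beta$ correction term, so it is piecewise and non-linear, and one must verify that its conditional expectation is minimized (or at least bounded below correctly) in a way that matches the claimed recursion exactly — the constant $\frac{n+1-i}{2(n+2-i)}$ has to come out precisely, which suggests the worst case is a particular two-point distribution for $\Demandi{i}$. The natural approach is to argue that $g$ is concave (or that the relevant expression is concave) in $\Demandi{i}$ on each piece, so that by Jensen the conditional expectation is lower bounded by evaluating at the conditional mean $\expect{\Demandi{i}\mid\DemandVec_{[1:i-1]}}$, and then to check the deterministic inequality at that mean; handling the breakpoint where the $\min$ switches (and where $s_i/(\Demandi{i}+\ExpDemandi{i+1})$ crosses $\beta_i$) is where the care is needed. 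The convention $\frac{\ExpDemandi{i}}{s_i} = 0$ when $s_i = 0$ cleanly dispatches the degenerate branch identified via \Cref{rem:never-run-out}. This is exactly the "progressively constructing a worst-case joint distribution for demand" that the introduction alludes to: the induction implicitly identifies, period by period, the conditional two-point distribution of $\Demandi{i}$ that makes the bound tight.
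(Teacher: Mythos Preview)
Your overall architecture --- backward induction, condition on $d_k$, apply the hypothesis at level $k+1$, then uncondition --- is correct and matches the paper, as does the base case. But your step~(4) has a genuine gap.

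First, a sign error: if $g$ were concave, Jensen gives $\expect{g(d_i)} \le g(\expect{d_i})$, an upper bound, not the lower bound you need. You would want convexity. But even convexity would not rescue the argument, because in the correlated setting $\mu_{i+1}$ is an \emph{arbitrary} measurable function of $d_i$; there is no reason the composite map $d_i \mapsto g(d_i)$ inherits convexity from the pieces, and the breakpoint of the $\min$ moves with $d_i$ as well.

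The paper sidesteps both issues by proving something stronger: for \emph{every} realization of $d_k$,
\[
\expect{f_{n+1}\mid \DemandVec_{[1:k]}}\ \ge\ \beta_k\Bigl(1-\tfrac{n+1-k}{2(n+2-k)}\,\tfrac{d_k+\mu_{k+1}}{s_k}\,\beta_k\Bigr),
\]
a \emph{linear} function of the single scalar $z := d_k + \mu_{k+1}$. Since $\expect{z\mid \DemandVec_{[1:k-1]}}=\mu_k$ by the tower property, integrating this pointwise linear bound over $d_k$ yields the lemma immediately --- no Jensen, no two-point worst case. The right case split is not on which arm of the PPA $\min$ is active, but on whether $\beta_{k+1}=\beta_k$ (i.e., $s_k \ge \beta_k z$) or $\beta_{k+1}=s_k/z < \beta_k$. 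In the first case the induction hypothesis already gives a linear-in-$z$ bound with slope coefficient $\tfrac{n-k}{2(n+1-k)}$; replacing it by the larger $\tfrac{n+1-k}{2(n+2-k)}$ only weakens the bound. In the second case the hypothesis gives a hyperbola $\tfrac{n+2-k}{2(n+1-k)}\cdot\tfrac{s_k}{z}$, and the linear function above is exactly its tangent at $z=\tfrac{(n+2-k)s_k}{(n+1-k)\beta_k}$, hence lies below it by convexity of $1/z$ (the paper verifies this via a perfect-square identity). So the convexity you were reaching for does appear --- but as ``convex curve $\ge$ tangent line,'' pointwise in $z$, not as Jensen's averaging inequality in $d_i$.
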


 Since the objective of our dynamic decision-making problem has no per-stage rewards and consists only of a terminal reward (i.e., the minimum FR), \Cref{lem:minfilllowerbound} can be thought of as establishing a lower bound on the value-to-go function of the \PolicyNameAb\ policy. Before providing the proof for this key lemma, we lay out the two remaining steps that finish the proof of \Cref{thm:expost}: (i) plugging $i=1$ into inequality \eqref{eq:invariant} to obtain a lower bound on $\expect[\vec{\Demand}\sim\DemandJoinDist]{f_{n+1}}$, and (ii) scaling the obtained lower bound result by our normalization factor, namely $\DetGuar = \min\{1, 1/\ExpDemand\}$, which provides an ex-post fairness guarantee (see \Cref{def:polperf}). 

\begin{proof}{{\bf Proof of \Cref{lem:minfilllowerbound}:}} 
We will show that inequality \eqref{eq:invariant} holds via backwards induction. The base case of $i = \NumAgents+1$ is trivial 
as it follows from the observation we made earlier: $\beta_i = \min\{f_i,\frac{\NumAgents+1}{\NumAgents\dvar}\}$.

Now let us consider $i=k < n+1$. Instead of proving inequality \eqref{eq:invariant}, we prove a stronger result:
\begin{equation}
\label{eq:invariant2}
\expect[\vec{\Demand}\sim\DemandJoinDist]{f_{n+1}\condition \DemandVec_{[1:k]}}\geq  \beta_k\left(1 - \frac{n+1-k}{2(n+2-k)}\frac{\Demandi{k} + \ExpDemandi{k+1}}{\Supplyi{k}} \beta_k \right).
\end{equation}
Establishing inequality \eqref{eq:invariant2} means that the inequality in \eqref{eq:invariant} holds for any realization of agent $k$'s demand. Consequently, it will hold when we take an expectation over agent $k$'s demand.
In order to prove inequality \eqref{eq:invariant2}, we consider two different cases that can arise depending on the remaining supply  $\Supplyi{k}$, agent $k$'s demand $\Demandi{k}$,
and the future expected demand $\ExpDemandi{k+1}$.
In the following, we introduce and analyze these cases separately.

\begin{enumerate}[label=(\roman*)]
    \item {\em Sufficient supply ($\Supplyi{k} \geq \beta_k(\Demand_k + \ExpDemandi{k+1})$)}: Recall that according to the \PolicyNameAb\ policy, $\Alloci{k} = \min\{\Demand_k, \frac{\Demand_k}{\Demand_k + \ExpDemand_{k+1}}\Supplyi{k}\}$. Therefore, in this case, either $\Alloc_k = \Demand_k$, i.e., the \PolicyNameAb\ policy meets the entire demand, or $\Alloci{k} =  \frac{\Demand_k}{\Demand_k + \ExpDemand_{k+1}}\Supplyi{k} \geq \beta_k \Demandi{k}$, i.e.,  the \PolicyNameAb\ policy attains an FR of at least $\beta_k$. According to the dynamics specified in \eqref{eq:beta:def} and \eqref{eq:supply:def},
    this implies
    \begin{align*}
       \beta_{k+1} = \beta_k ~~~~ \text{and}~~~~ \frac{\ExpDemand_{k+1}}{s_{k+1}} = \frac{\ExpDemand_{k+1}}{s_k - \min\{\Demand_k, \frac{\Demand_k}{\Demand_k + \ExpDemand_{k+1}}\Supplyi{k}\}} \leq \frac{\ExpDemand_{k+1}}{s_k -  \frac{\Demand_k}{\Demand_k + \ExpDemand_{k+1}}\Supplyi{k}}=\frac{\Demandi{k}+\ExpDemand_{k+1}}{s_k}.
    \end{align*}
    Using our inductive hypothesis when $i = k+1$,
    \begin{equation}
    \label{eq:bound1}
    \expect[\vec{\Demand}\sim\DemandJoinDist]{f_{n+1}\condition \DemandVec_{[1:k]}}\geq \beta_k\left(1-\frac{n-k}{2(n+1-k)}\frac{ \Demand_k + \ExpDemand_{k+1}}{\Supply_k} \beta_k\right)\triangleq\textsc{RHS}^{(1)}.
    \end{equation}
The lower bound given by $\textsc{RHS}^{(1)}$ is a linear function of $\Demand_k + \ExpDemand_{k+1}$, as illustrated by the dotted red lines in all panels of \Cref{fig:expostproof} (in the regime where $\Demand_k + \ExpDemand_{k+1}\in[0, s_k/\beta_k]$). This linear function has a non-positive slope and an intercept of $\beta_k$. We can further lower bound this function for any $\Demand_k + \ExpDemand_{k+1}\in[0,s_k/\beta_k]$ by another linear function with the same intercept of $\beta_k$ and a smaller (more negative) slope. In particular, since $\frac{n-k}{n+1-k} \leq \frac{n+1-k}{n+2-k}$, we have: 
\begin{equation} 
\label{eq:bound1and2repeat}
\textsc{RHS}^{(1)} \geq \beta_k\left(1-\frac{n+1-k}{2(n+2-k)}\frac{ \Demand_k + \ExpDemand_{k+1}}{\Supply_k}\beta_k\right),
\end{equation}
which proves inequality \eqref{eq:invariant2} in the sufficient supply case (see the blue lines in all panels of \Cref{fig:expostproof}).

\item {\em Insufficient supply ($\Supplyi{k} < \beta_k(\Demand_k + \ExpDemandi{k+1})$)}: 
    In this case, the allocation of the \PolicyNameAb~policy is $\Alloci{k} = \frac{\Demand_k}{\Demand_k + \ExpDemand_{k+1}}\Supplyi{k}$, which results in an FR less than $\beta_k$, i.e., $\frac{\Alloci{k}}{\Demandi{k}} = \frac{\Supply_k}{\Demand_k + \ExpDemand_{k+1}} < \beta_k$. According to the dynamics specified in \eqref{eq:beta:def} and \eqref{eq:supply:def},
    this implies
    \begin{align*}
       \beta_{k+1} = \frac{\Supply_k}{\Demand_k + \ExpDemand_{k+1}} ~~~~ \text{and}~~~~ \frac{\ExpDemand_{k+1}}{s_{k+1}}=\frac{\ExpDemand_{k+1}}{s_k - \frac{\Demand_k}{\Demand_k + \ExpDemand_{k+1}}\Supplyi{k}} =  \frac{\Demandi{k}+\ExpDemand_{k+1}}{s_k}.
    \end{align*}
Using our inductive hypothesis when $i = k+1$,
\begin{equation}
    \label{eq:bound3}
    \expect[\vec{\Demand}\sim\DemandJoinDist]{f_{n+1}\condition \DemandVec_{[1:k]}} \geq \frac{s_k}{\Demandi{k}+\ExpDemand_{k+1}} \left(1-\frac{n-k}{2(n+1-k)}\right)\triangleq \textsc{RHS}^{(2)}.
    \end{equation}
\end{enumerate}
The lower bound given by $\textsc{RHS}^{(2)}$ is a convex homographic function of  $\Demand_k + \ExpDemand_{k+1}$, as illustrated by the dashed red lines in all panels of \Cref{fig:expostproof} (in the regime where $\Demand_k + \ExpDemand_{k+1}\in[s_k/\beta_k,+\infty)$). To further lower bound this function by a linear function, note that for any variable $z$ the following inequality holds:
\begin{align*}
\left(\frac{n+2-k}{2(n+1-k)} \right)\frac{s_k}{z} - \beta_k\left(1-\frac{n+1-k}{2(n+2-k)}\frac{\beta_k}{s_k}z\right) = \frac{(n+1-k)\beta_k^2}{2(n+2-k)s_k z}\left(\frac{(n+2-k)s_k}{(n+1-k)\beta_k} - z\right)^2\geq 0~.
\end{align*}
The proof of the above inequality is purely algebraic and we omit it for brevity. Substituting $z=\Demand_k + \ExpDemandi{k+1}$ in this inequality, we have:
\begin{equation} 
\label{eq:bound1and2repeat2}
\textsc{RHS}^{(2)} \geq \beta_k\left(1-\frac{n+1-k}{2(n+2-k)}\frac{ \Demand_k + \ExpDemand_{k+1}}{\Supply_k}\beta_k\right),
\end{equation}
which proves inequality \eqref{eq:invariant2} in the insufficient supply case (again, see the blue lines in all panels of \Cref{fig:expostproof}).

\begin{figure}[t]
    \centering
    \includegraphics[trim={4.7cm 12.5cm 1.0cm 5.2cm},clip,width = .88\textwidth]{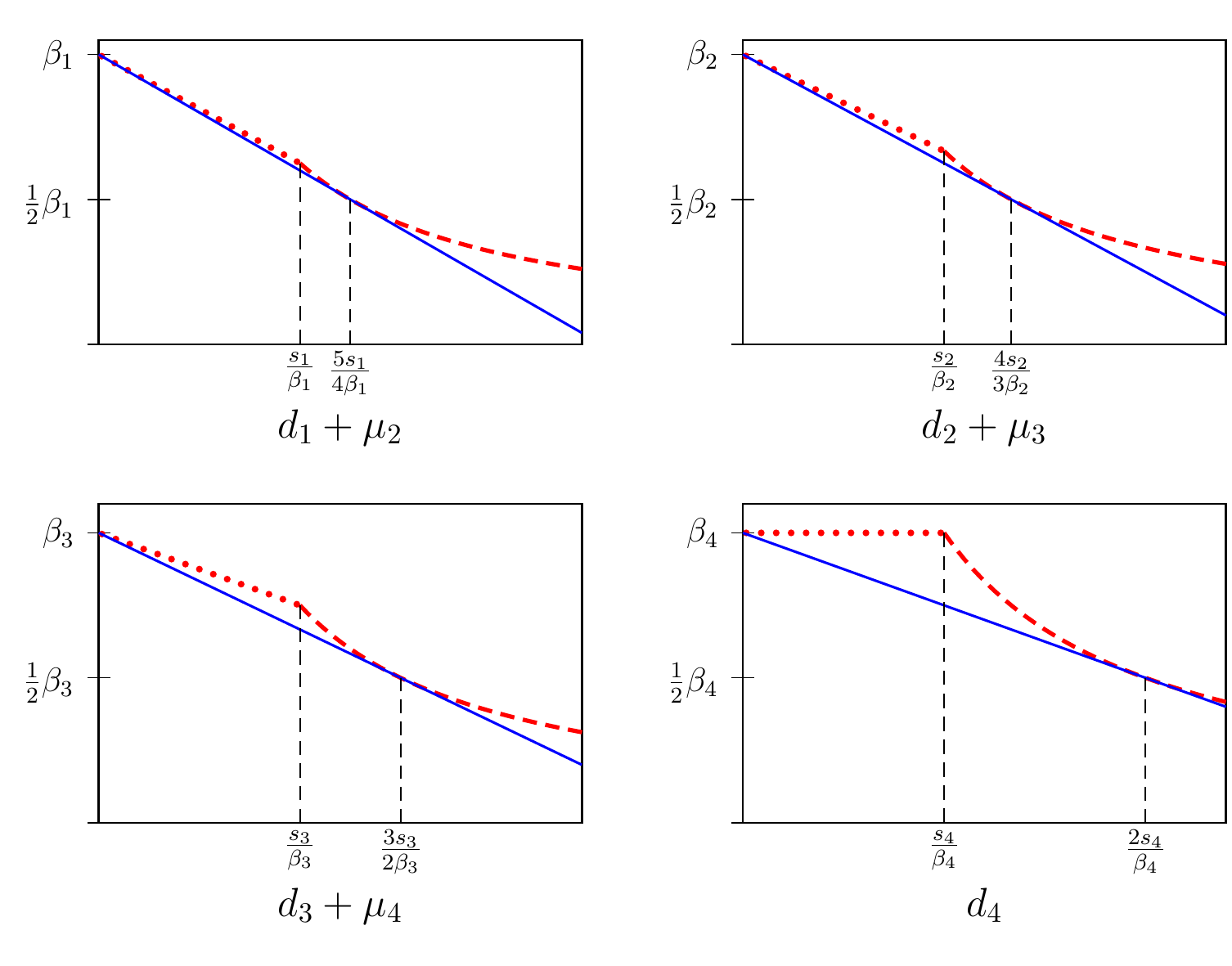}
    \caption{Lower bounds on the expected minimum FR given by \cref{eq:invariant2} (blue solid lines), \cref{eq:bound1} (red dotted lines), and \cref{eq:bound3} (red dashed lines)  when $\bm{n=4}$ for $\bm{k \in [4]}$.}
\label{fig:expostproof}
\end{figure}

Combining the above cases proves inequality \eqref{eq:invariant2} everywhere, which immediately implies the inductive hypothesis, i.e., inequality \eqref{eq:invariant},  for $i = k$, thus finishing the proof of the lemma.
\hfill\Halmos
\end{proof}

\subsection{Simple Non-adaptive Policies and Ex-post Fairness}
\label{subsec:AdaptivityGap}
As discussed in the previous sections,  our \PolicyNameAb\ policy is \emph{adaptive}, that is, the FR for agent $i$ (and its corresponding allocation decision) can depend not only on the observed demand $d_i$ but also on the exact sample path up to time $i$ as well as the remaining supply $s_i$. In contrast to an adaptive policy, a \emph{non-adaptive} policy commits to a sequence of feasible allocation maps $\{x_i(d_i)\}_{i\in[n]}$ upfront, where $x_i(d_i)\in[0,d_i]$ is the allocation decision for agent $i$ when agent $i$ has demand $d_i$.\footnote{For ease of presentation, we focus on deterministic non-adaptive policies. This is without loss of generality, as the ex-post fairness of any randomized non-adaptive policy must be weakly dominated by the ex-post fairness of one of the deterministic policies that it randomizes over.\label{foot:detpolicies}} If the non-adaptive policy's allocation decision $x_i(d_i)$ exceeds the remaining supply $s_i$, then agent $i$ instead receives the entire remaining supply.

For settings that we consider, adaptivity can indeed help with improving the expected minimum FR of a policy. As an example, compare running our \PolicyNameAb~policy versus the best non-adaptive policy on an instance with three agents. In this instance, the demands $\vec{d}=(d_1,d_2,d_3)$ follow one of the two possible sample paths $(\epsilon_1,1,1)$ or $(\epsilon_2,1,0)$ with equal probabilities $1/2$, where  $\epsilon_1,\epsilon_2\geq 0$ and $\epsilon_1\neq \epsilon_2$. After agent $1$'s demand is realized, the \PolicyNameAb~policy knows exactly which sample path is happening. By calculating the exact total demand of agents $2$ and $3$, it obtains the optimal expected minimum FR of $\frac{1}{2}\times 1+\frac{1}{2}\times \frac{1}{2}=3/4$ for small $\epsilon_1,\epsilon_2$. However, a non-adaptive policy  cannot distinguish between the two possible sample paths after agent $1$'s demand is realized. 
Therefore, without loss of generality, it targets a FR of $\tau$ for agent $2$ and obtains an expected minimum FR of  $\frac{1}{2}\tau+\frac{1}{2}\min\{\tau,1-\tau\}$ for small $\epsilon_1,\epsilon_2$, which attains its maximum equal to $\frac{1}{2}$ at any $\tau\in[\frac{1}{2},1]$.

In applications that allow for adaptivity, our \PolicyNameAb~policy obtains the optimal ex-post fairness guarantee while also having the desirable properties of transparency and interpretability. However, adaptivity is not admissible in some practical scenarios---e.g., when the social planner should commit to an allocation plan in advance for even more transparency or due to legal restrictions.\footnote{As discussed in the introduction, the initial strategy for allocating medical supplies at the beginning of COVID-19 pandemic had the form of a target-fill-rate policy, which is a canonical non-adaptive strategy as we will discuss soon.}  Motivated by such scenarios, we study two simple and natural canonical classes of non-adaptive policies:
those that fix the sequence of allocation decisions a priori, namely they specify one allocation vector $\vec{\Alloc}$,
and ``smarter'' policies which fix the sequence of fill rates $\vec{\TargetFillRate} \triangleq (\TargetFillRate_1, \TargetFillRate_2, \dots, \TargetFillRate_\NumAgents)$ a priori. 
In Appendix~\ref{apx:fixedallocation}, we show that the ex-post fairness guarantee 
for the former subclass is vanishing as $\NumAgents$ gets large.
Therefore, we focus on the latter subclass, which is formally defined as follows. 

\begin{definition}[\textbf{Target-fill-rate Policies}]
A \emph{target-fill-rate (\fixedthresh)} policy is any policy $\Policy$ which pre-determines a target fill rate $\TargetFillRate\in[0,1]$. Then, for every arriving agent $i$, the policy $\Policy$ must either allocate sufficient supply to meet the target or allocate all remaining supply, i.e., 
$$\forall i\in[n]:~~~\Alloci{i}(s_i,d_i) = \min\{\TargetFillRate \Demandi{i}, \Supplyi{i}\}.$$
\end{definition}

 In the following theorem, we provide a tight bound on the ex-post fairness guarantee (\Cref{def:polperf}) achievable by the {\em optimal} \fixedthresh\ policy---defined as the one that maximizes ex-post fairness for the given joint demand distribution. 
 We remark that setting one threshold is without loss of generality because the ex-post fairness guarantee of a policy which pre-determines a \emph{sequence} of target fill rates $\{\TargetFillRate_i\}_{i\in[n]}$ is upper bounded by that of a \fixedthresh\ policy with the same target fill rate $\TargetFillRate=\min_{i\in[n]}\{\TargetFillRate_i\}$ for all agents. We also highlight that in addition to achieving a lower ex-post fairness guarantee compared to our adaptive policy,  finding the best \fixedthresh\ policy requires full knowledge of the total demand distribution---in contrast to our \PolicyNameAb\ policy which only requires knowing the first conditional moments of the future total demand at each time.

\begin{theorem}[Ex-post Fairness Guarantee of Optimal \fixedthresh\ Policy]
\label{thm:NonAdaptive}
Given any number of agents $\NumAgents\in\mathbb{N} \setminus \{1\}$ and supply scarcity $\ExpDemand\in\NonNegReals$, the optimal \fixedthresh\ policy achieves an ex-post fairness guarantee (see \Cref{def:polperf}) of $\frac{\max\{1, \ExpDemand\}}{\ExpDemand + \sqrt{\ExpDemand^2+1}}$.
\end{theorem}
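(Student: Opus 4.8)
The plan is to show that, against the class of \fixedthresh\ policies, the adversary's choice of the full joint demand distribution $\DemandJoinDist$ effectively collapses to the choice of the law of the \emph{total} demand $D \triangleq \sum_{i\in[\NumAgents]} d_i$, and then to solve the resulting scalar optimization, which turns out to be a price‑capped monopoly‑pricing problem. Throughout write $\pi_\tau$ for the \fixedthresh\ policy with target fill rate $\tau$ and $F_D(x)\triangleq \Pr[D\le x]$. \emph{Step 1 (reduction to total demand).} On any sample path with $\tau D\le 1$ the policy $\pi_\tau$ never exhausts its supply, so every agent with positive demand receives fill rate exactly $\tau$ (a zero‑demand agent receives $1$ by convention), and hence the path's minimum fill rate equals $\tau$; on a path with $\tau D>1$ it is at least $0$. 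This already gives, for every $\DemandJoinDist$,
\[
\ExpostObj^{\DemandJoinDist}(\pi_\tau)\;\ge\;\tau\,\Pr[\tau D\le 1]\;=\;\tau\,F_D(1/\tau).
\]
Conversely --- and this is where $\NumAgents\ge 2$ is used --- the adversary can realize any prescribed law of $D$ while driving the per‑path minimum fill rate on over‑demanded paths arbitrarily close to $0$: on a path whose total demand equals $v$, set $d_1=v-\epsilon$, $d_\NumAgents=\epsilon$, and all other demands to $0$, so that for every $\tau$ with $\tau v>1$ agent $1$ (essentially) exhausts the supply and agent $\NumAgents$ receives fill rate $\to 0$ as $\epsilon\to 0$, while paths with $\tau v\le 1$ still contribute exactly $\tau$; one also keeps $\Pr[D=0]=0$, as such paths only help. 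Letting $\epsilon\to 0$ gives $\ExpostObj^{\DemandJoinDist_\epsilon}(\pi_\tau)\to \tau F_D(1/\tau)$ uniformly in $\tau$, so the ex‑post fairness guarantee of the optimal \fixedthresh\ policy equals $\frac{1}{\DetGuar}\inf_{F_D}\sup_{\tau\in(0,1]}\tau F_D(1/\tau) = \frac{1}{\DetGuar}\inf_{F_D}\sup_{x\ge 1}\frac{F_D(x)}{x}$, the infimum over CDFs of strictly positive mean‑$\mu$ random variables, with $\DetGuar=\min\{1,1/\mu\}$.

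\emph{Step 2 (extremal distribution via monopoly pricing).} Reading $x=1/\tau\ge 1$ as a price and $F_D(x)$ as the mass sold, $\sup_{x\ge 1}F_D(x)/x$ is a monopolist's optimal revenue with the price capped below by $1$; equivalently, passing to $V\triangleq 1/D$, it is the price‑capped monopoly revenue for the value distribution of $V$, whose reciprocal mean is fixed at $\mathbb{E}[1/V]=\mu$. Minimizing this revenue subject to the mean constraint selects an equal‑revenue curve, and I expect to identify two structural properties of the worst‑case $F_D$: (i) it lies on the revenue‑flat curve $F_D(x)=Rx$ for $x\in[1,1/R]$ and equals $1$ for $x\ge 1/R$ (no demand mass above $1/R$); and (ii) the residual probability mass $R=F_D(1)$ is pushed toward the origin, so it does not inflate the mean. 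For such an $F_D$, $\mathbb{E}[D]=\int_0^\infty(1-F_D(x))\,dx \to (1-R)+\int_1^{1/R}(1-Rx)\,dx=\frac{1}{2R}-\frac{R}{2}$, so matching $\mathbb{E}[D]=\mu$ forces $R^2+2\mu R-1=0$, i.e.\ $R=R^*\triangleq\sqrt{\mu^2+1}-\mu=\frac{1}{\mu+\sqrt{\mu^2+1}}\in(0,1]$.

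\emph{Step 3 (the two matching bounds and normalization).} For the lower bound, let $R\triangleq\sup_{x\ge 1}F_D(x)/x\in(0,1]$ for an arbitrary feasible $F_D$. Then $\Pr[D>x]\ge 1-Rx$ for $x\in[1,1/R]$ and, by monotonicity, $\Pr[D>x]\ge\Pr[D>1]\ge 1-R$ for $x\in[0,1]$, whence
\[
\mu=\int_0^\infty\Pr[D>x]\,dx\;\ge\;(1-R)+\int_1^{1/R}(1-Rx)\,dx\;=\;\frac{1}{2R}-\frac{R}{2}.
\]
Since $r\mapsto\frac{1}{2r}-\frac{r}{2}$ is strictly decreasing on $(0,1]$ and equals $\mu$ at $R^*$, this forces $R\ge R^*$; hence the optimal \fixedthresh\ policy always achieves ex‑post fairness at least $R^*/\DetGuar$. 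For the matching upper bound, the explicit $F_D$ of Step 2 (realized via the joint construction of Step 1, in the $\epsilon\to 0$ limit) satisfies $\tau F_D(1/\tau)=R^*$ for all $\tau\in[R^*,1]$ and $\tau F_D(1/\tau)=\tau<R^*$ for $\tau<R^*$, so $\sup_{\tau\in(0,1]}\tau F_D(1/\tau)=R^*$. Dividing by $\DetGuar=\min\{1,1/\mu\}$ and using $1/\DetGuar=\max\{1,\mu\}$ yields the claimed guarantee $R^*\max\{1,\mu\}=\frac{\max\{1,\mu\}}{\mu+\sqrt{\mu^2+1}}$.

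\emph{Main obstacle.} The delicate step is the reduction in Step 1 --- arguing that a single joint distribution over $(d_1,\dots,d_\NumAgents)$ can push $\ExpostObj^{\DemandJoinDist}(\pi_\tau)$ down to $\tau F_D(1/\tau)$ \emph{simultaneously for all} targets $\tau$. Because each path's minimum fill rate is continuous in $\tau$, it cannot jump from $\tau$ to $0$ at $\tau D=1$, so the match is only attained in a limiting sense (which is why the guarantee is an infimum, not a minimum), and one must control both the vanishing ``transition'' contribution near $\tau D=1$ and the fill‑rate‑$1$ convention on all‑zero paths; this is also where $\NumAgents\ge 2$ is essential (for $\NumAgents=1$ the minimum fill rate on an over‑demanded path is $1/D>0$, not $0$). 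The equal‑revenue extremality in Step 2 is the other substantive ingredient, but it follows from standard monopoly‑pricing reasoning once one works in quantile space; the computations in Step 3 are then elementary.
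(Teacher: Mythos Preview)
Your proposal is correct and follows the same high-level architecture as the paper (reduce to the law of total demand, identify the worst-case distribution via a monopoly-pricing analogy, and exhibit a matching hard instance for $n\ge 2$), but your proof of the key lower bound is genuinely different and more elementary. The paper establishes $\inf_{G}\max_{\tau}\tau(1-G(\tau))=\frac{1}{\mu+\sqrt{\mu^2+1}}$ by passing to quantile space, introducing the EAFR curve $R^G(q)=q\cdot G^{-1}(1-q)$, and then proving three structural claims about the worst-case distribution (equal-revenue on the feasible TFR range, no mass on $(1,\infty)$, uniqueness) via perturbation/contradiction arguments. You instead set $R=\sup_{x\ge 1}F_D(x)/x$ directly, use $F_D(x)\le Rx$ to lower-bound the tail integral $\mu=\int_0^\infty(1-F_D(x))\,dx\ge(1-R)+\int_1^{1/R}(1-Rx)\,dx=\tfrac{1}{2R}-\tfrac{R}{2}$, and invert the monotone map $r\mapsto\tfrac{1}{2r}-\tfrac{r}{2}$ to conclude $R\ge R^*$. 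This bypasses the quantile machinery entirely and is a one-paragraph argument; what it gives up is the paper's explicit structural characterization of the extremal distribution (which may be of independent interest for the mechanism-design connection), but for the theorem itself your route is shorter. Your hard-instance construction ($d_1=v-\epsilon$, $d_n=\epsilon$) is essentially the same as the paper's two-agent example, with the same limiting caveat that the infimum is not attained.
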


In \Cref{fig:AdaptivityGap}, we compare the guarantee 
of the optimal \fixedthresh\ policy against our \PolicyNameAb\ policy for different model primitives, $\mu$ and $n$.  First, we note that when $n$ is not too large, our \PolicyNameAb\ policy achieves a considerably higher guarantee. Next,
we highlight that the ex-post fairness guarantee for the optimal \fixedthresh\ policy does not depend on the number of agents $\NumAgents$.\footnote{We elaborate on the intuition behind this behavior when we present the hard instance for establishing the upper bound of Theorem \ref{thm:NonAdaptive}.} 
This is in contrast to the ex-post guarantee for the \PolicyNameAb\ policy $\LowerboundFunEx$, which worsens as the number of agents increases.
Furthermore, the guarantee in \Cref{thm:NonAdaptive} has a unique minimum of $\frac{1}{1+\sqrt{2}} \approx 0.41$ when $\ExpDemand = 1$. This once again suggests that the stochastic nature of demand is most harmful when expected demand exactly equals supply. \revcolor{Before presenting the proof of  \Cref{thm:NonAdaptive} (which we defer to \Cref{sec:non-adaptive-proof}), we first discuss a special case where \fixedthresh\ policies can provide a substantially stronger fairness guarantee.}

\begin{figure}[t]
 \centering
 \begin{subfigure}[b]{0.48\textwidth}
     \includegraphics[trim={5cm 17.3cm 5cm 4.0cm},clip,width=0.98\textwidth]{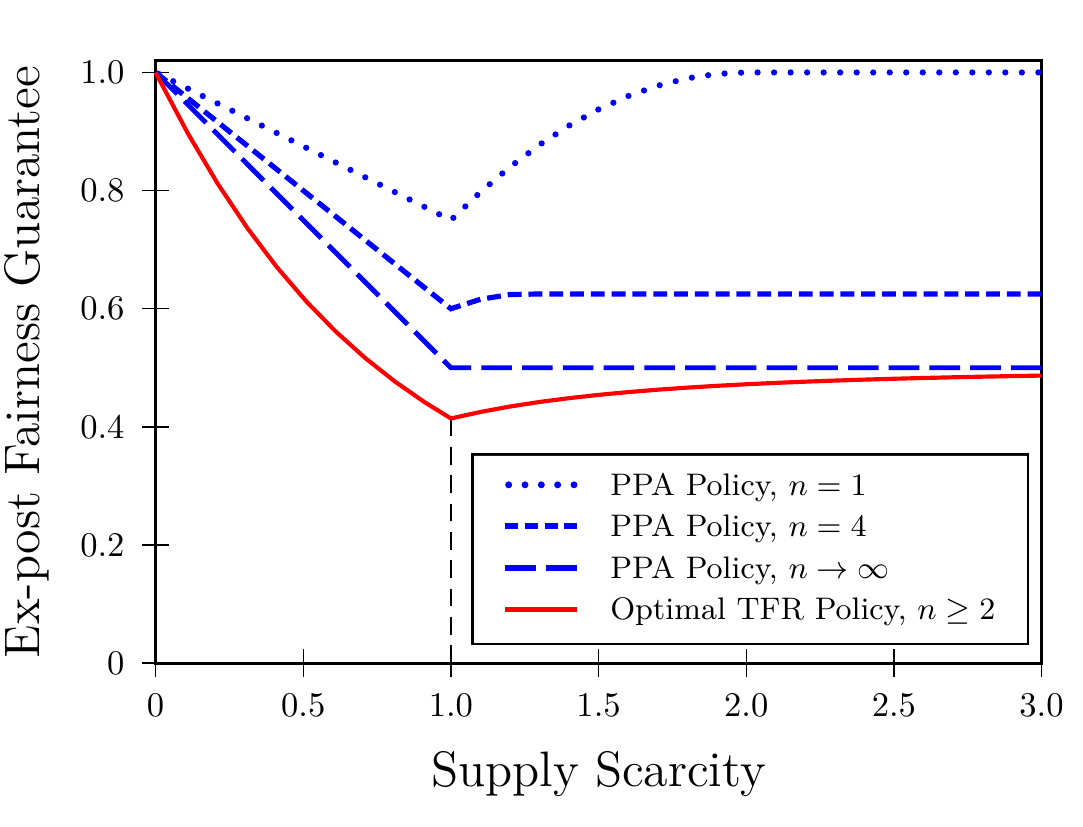}
    \caption{ }
    \label{fig:AdaptivityGap}
 \end{subfigure}
 \begin{subfigure}[b]{0.50\textwidth}
     \includegraphics[trim={7.8cm 17.3cm 1.7cm 4.5cm},clip,width=0.98\textwidth]{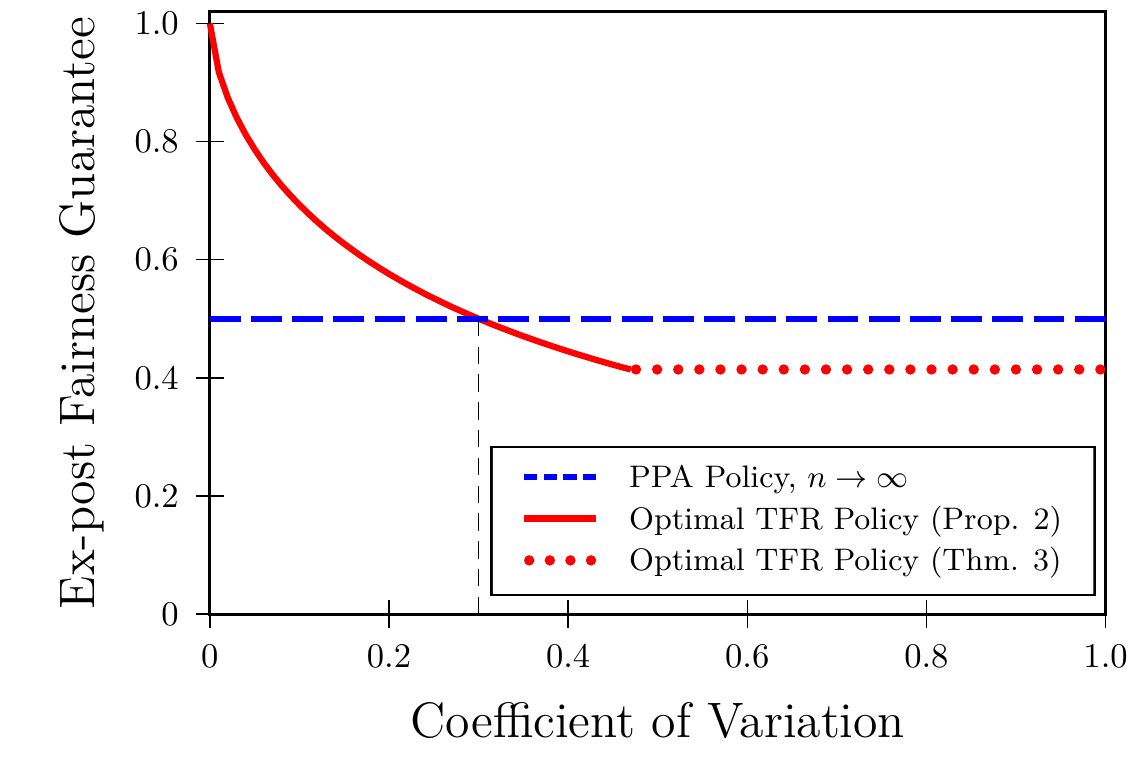}
     \caption{ }
    \label{fig:coefvar}
 \end{subfigure}
 \caption{(a) Tight ex-post fairness guarantees of our \PolicyNameAb\ policy and the optimal \fixedthresh\ policy.  \revcolor{(b) Ex-post fairness guarantee of the optimal \fixedthresh\ policy as a function of the coefficient of variation when $\ExpDemand =1$.}}
\end{figure}

\revcolor{\subsubsection{Optimal TFR Policy for Low-Variance Distributions} 
Having presented our general result for the optimal \fixedthresh\ policy,
we note that we make no assumption about the demand distribution in the statement of \Cref{thm:NonAdaptive}, aside from parameterizing the result by the expected total demand. Because non-adaptive policies such as the optimal \fixedthresh\ policy cannot update their decisions based on new information, we would expect their performance to suffer when the variance of total demand is high (fixing its expectation). For instance, the worst-case distributions that establish our parameterized tight bound for \fixedthresh\ policies (see \Cref{thm:NonAdaptive}) all have a coefficient of variation (CV) greater than $0.57$.
\revcolor{Inspired by this observation, in the following proposition we provide an alternative lower bound on the ex-post fairness guarantee of the optimal \fixedthresh\ policy that is parameterized by an upper bound on the CV.}
\begin{proposition}[\revcolor{Optimal TFR Policy for Bounded Coefficient of Variation}]
\label{prop:coefvar}
Suppose the coefficient of variation for total demand is at most $\coefvar$. (Equivalently, suppose the standard deviation of total demand is at most $\coefvar \ExpDemand$.) Then for any supply scarcity $\ExpDemand\in \NonNegReals$, the optimal \fixedthresh\ policy achieves an ex-post fairness guarantee (see \Cref{def:polperf}) of at least 
\begin{align}
    \max_{\xvar \in [0,\min\{1, \ExpDemand\}]} \frac{\max\{1, \ExpDemand\}}{\ExpDemand} \xvar \left(\frac{\left(\frac{1-\xvar}{\xvar} \right)^2}{\coefvar^2 + \left(\frac{1-\xvar}{\xvar} \right)^2 }\right). \label{eq:coefvar}
\end{align}
\end{proposition}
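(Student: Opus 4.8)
The plan is to exhibit, for each feasible target fill rate $\xvar$, a \fixedthresh\ policy that attains the quantity inside the maximum in \eqref{eq:coefvar}, and then optimize over $\xvar$. The natural candidate is the \fixedthresh\ policy with target $\TargetFillRate = \xvar$ (rescaled appropriately in the under-demanded regime where $\DetGuar = 1$ rather than $1/\ExpDemand$). Fix such a $\TargetFillRate$. On any sample path, this policy runs out of supply only if $\TargetFillRate \sum_i \Demandi{i} > 1$, i.e.\ only if total demand $D \triangleq \sum_i \Demandi{i}$ exceeds $1/\TargetFillRate$. When $D \leq 1/\TargetFillRate$ every agent gets fill rate exactly $\TargetFillRate$, so the minimum FR is $\TargetFillRate$; when $D > 1/\TargetFillRate$, the supply is exhausted partway through and the last-served agents (and all after) get FR $0$, so the minimum FR is $0$. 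Hence the expected minimum FR equals $\TargetFillRate \cdot \prob{D \leq 1/\TargetFillRate}$, and the ex-post fairness (after dividing by $\DetGuar = \min\{1, 1/\ExpDemand\} = 1/\max\{1,\ExpDemand\}$) is $\frac{\max\{1,\ExpDemand\}}{\ExpDemand}\cdot \ExpDemand\TargetFillRate\cdot\prob{D \leq 1/\TargetFillRate}$; writing $\TargetFillRate$ in the form that makes $1/\TargetFillRate = \ExpDemand + (\text{slack})$ explains the appearance of $\frac{1-\xvar}{\xvar}$ below. (A careful check of the bookkeeping between $\TargetFillRate$ and the optimization variable $\xvar$ is needed, but it is routine.)

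The only real content is lower-bounding $\prob{D \leq 1/\TargetFillRate}$ using just $\expect{D} = \ExpDemand$ and the variance bound $\mathrm{Var}(D) \leq \coefvar^2\ExpDemand^2$. Here I would invoke the one-sided Chebyshev (Cantelli) inequality: for any $t > 0$,
\begin{align*}
\prob{D - \expect{D} \geq t} \;\leq\; \frac{\mathrm{Var}(D)}{\mathrm{Var}(D) + t^2} \;\leq\; \frac{\coefvar^2\ExpDemand^2}{\coefvar^2\ExpDemand^2 + t^2}.
\end{align*}
Applying this with $t = 1/\TargetFillRate - \ExpDemand$ (which is positive exactly when $\TargetFillRate < 1/\ExpDemand$, guaranteed by $\xvar < \min\{1,\ExpDemand\}$ in the over-demanded case, and handled separately when $\ExpDemand \leq 1$), we get $\prob{D \leq 1/\TargetFillRate} = 1 - \prob{D - \ExpDemand \geq t} \geq \frac{t^2}{\coefvar^2\ExpDemand^2 + t^2}$. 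Substituting $t$, and matching $\ExpDemand\TargetFillRate = \xvar$ (equivalently $t = \ExpDemand(1-\xvar)/\xvar$ up to the regime-dependent scaling), the factor $\frac{t^2}{\coefvar^2\ExpDemand^2 + t^2}$ becomes exactly $\frac{(\frac{1-\xvar}{\xvar})^2}{\coefvar^2 + (\frac{1-\xvar}{\xvar})^2}$, which yields the summand in \eqref{eq:coefvar}. Taking the supremum over admissible $\xvar$ completes the argument.

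I expect the main obstacle to be purely notational: carefully reconciling the target fill rate $\TargetFillRate$, the normalization factor $\DetGuar$, and the auxiliary variable $\xvar$ across the under-demanded ($\ExpDemand < 1$) and over-demanded ($\ExpDemand \geq 1$) regimes, so that the clean expression $\frac{1-\xvar}{\xvar}$ falls out of $t/(\coefvar\ExpDemand)$. One should also double-check the edge behavior: when $\xvar \to \min\{1,\ExpDemand\}$ the slack $t \to 0$ and the bound degrades to $0$, consistent with the policy being essentially unconstrained there, and when $\xvar$ is small the probability term tends to $1$ but the prefactor $\xvar$ kills it — so an interior optimum exists, matching \Cref{fig:coefvar}. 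The probabilistic core (Cantelli) is standard and requires no distributional assumptions beyond the stated variance bound, which is exactly the point of the proposition.
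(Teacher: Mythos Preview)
Your proposal is correct and follows essentially the same route as the paper: lower-bound the expected minimum FR of a \fixedthresh\ policy by $\TargetFillRate\,\prob{D \le 1/\TargetFillRate}$, apply Cantelli's (one-sided Chebyshev) inequality with $t = 1/\TargetFillRate - \ExpDemand$, substitute $\TargetFillRate = \xvar/\ExpDemand$ so that $t = \ExpDemand(1-\xvar)/\xvar$, and then divide by $\DetGuar$. The only minor slip is that the expected minimum FR is \emph{at least} (not equal to) $\TargetFillRate\,\prob{D \le 1/\TargetFillRate}$, since the agent on whom supply runs out may still receive a positive fill rate---but this only helps the bound, and the paper likewise works with the inequality.
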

We provide the proof of \Cref{prop:coefvar} in Appendix \ref{apx:prop:coefvar}. We do not have a closed-form expression for this improved lower bound, but the single-variable concave maximization problem in \eqref{eq:coefvar} is easy to solve numerically. We note that the lower bound on the guarantee of the optimal \fixedthresh\ policy approaches $1$ as the CV approaches $0$, regardless of the supply scarcity~$\ExpDemand$. 

\revcolor{Even though the lower bound in \Cref{prop:coefvar} is not necessarily tight, as long as the CV is sufficiently small, this lower bound improves upon the bound in \Cref{thm:NonAdaptive} (which holds for an arbitrary coefficient of variation).}\footnote{\revcolor{The value for $\coefvar$ at which the two bounds cross depends on the supply scarcity $\ExpDemand$. Numerically, we find that if $\coefvar \leq 0.3$, the lower bound in \Cref{prop:coefvar} provides a better guarantee for any $\ExpDemand$.}} Furthermore, this bound establishes that in settings where the CV is below a threshold, the optimal \fixedthresh\ policy can provide an improved guarantee relative to the \PolicyNameAb\ policy.

In \Cref{fig:coefvar}, we illustrate the lower bound on the optimal \fixedthresh\ policy  when the supply scarcity $\ExpDemand = 1$, 
as a function of the CV (solid red line when the bound is given by \Cref{prop:coefvar}, dotted red line when the bound is given by \Cref{thm:NonAdaptive}). We compare this lower bound to the infimum of all lower bounds provided by the \PolicyNameAb\ policy. (This infimum, shown by the dashed blue line, occurs when $\NumAgents \rightarrow \infty$, and it is equal to $0.5$ as given by \Cref{thm:expost}.) In the case shown (where $\ExpDemand =1$), the lower bound on the guarantee of the optimal \fixedthresh\ policy given by \Cref{prop:coefvar} dominates the bound given by \Cref{thm:NonAdaptive}  when the CV is below $0.47$. Importantly, the optimal \fixedthresh\ policy can provide improvement compared to the \PolicyNameAb\ policy if the CV is less than $0.3$.\footnote{\revcolor{For the sake of brevity, we only show this comparison when the supply scarcity $\ExpDemand =1$, but similar results hold for other supply scarcity levels.}}} 

\revcolor{Thus, in cases where total demand is known to be well-concentrated, such as in instances with a large number of i.i.d. demands, \fixedthresh\ policies can perform quite well. In contrast, the \PolicyNameAb\ policy is particularly valuable when demand is correlated and highly variable. We emphasize that such demand sequences may arise in our motivating applications, such as when responding to a pandemic or natural disaster. In our case study presented in \Cref{sec:numerics}, we present a simple example for the COVID-19 pandemic based on commonly-used epidemiology models which exhibits correlation across demands and a high coefficient of variation for total demand.}

\subsubsection{\texorpdfstring{Proof of \Cref{thm:NonAdaptive}}{}}
\label{sec:non-adaptive-proof}
We now provide a proof of \Cref{thm:NonAdaptive}, with some details deferred until Appendix \ref{apx:adaptive-v2}.
We begin by placing a lower bound on the performance of the optimal \fixedthresh\ policy, and we then demonstrate the existence of a matching upper bound.

\begin{proof}{{\bf Proof of lower bound:}}
For any target fill rate $\TargetFillRate$, a \fixedthresh\ policy will achieve that fill rate if $\TargetFillRate\left(\sum_{i \in [\NumAgents]} \Demandi{i}\right) \leq 1$. Let us define $\invdemanddist$ as the cumulative distribution function (CDF) of the random variable $\invdemand \triangleq \frac{1}{\sum_{i \in [\NumAgents]} \Demandi{i}}$, where $\expect[\invdemand \sim \invdemanddist]{\frac{1}{\invdemand}}=\ExpDemand$. For ease of notation, we use $\InvDistStateSpace$ to denote the domain of all such CDFs. Given a CDF $\invdemanddist$, a \fixedthresh\ policy with target fill rate $\TargetFillRate$ achieves an expected minimum fill rate of at least $\TargetFillRate(1-\invdemanddist(\TargetFillRate))$, which implies that the optimal \fixedthresh\ policy
attains an expected minimum fill rate of at least $\max_{\TargetFillRate\in[0,1]}\TargetFillRate \left(1-\invdemanddist(\TargetFillRate)\right)$.
{In the following lemma, we establish a lower bound on $\max_{\TargetFillRate\in[0,1]}\TargetFillRate \left(1-\invdemanddist(\TargetFillRate)\right)$, which enables us to lower-bound the ex-post fairness guarantee that the optimal \fixedthresh\ policy achieves.}

\begin{lemma}[Tight Lower Bound for Optimal \fixedthresh\ Policy]
\label{lem:nonadaptlowbound}
Given a fixed number of agents $\NumAgents \in \mathbb{N}$ and 
supply scarcity $\ExpDemand \in \NonNegReals$, the following holds:
\begin{equation}
\label{eq:best-non-adaptive-v3}
    \underset{\invdemanddist \in \InvDistStateSpace}{\inf}\left\{\underset{\TargetFillRate\in[0,1]}{\max} \TargetFillRate \left(1-\invdemanddist(\TargetFillRate)\right)\right\} = \frac{1}{\ExpDemand +\sqrt{\ExpDemand^2 +1}}.
\end{equation}
This infimum is attained by the following CDF
\begin{equation}
\label{eq:worstthresholddist}
\hat{\invdemanddist}(\invdemand) =
\left\{
	\begin{array}{ll}
		0 &\quad\quad\mbox{if}~\invdemand\in[0,\hat{q}) \\
		1-{\hat{q}}/{\invdemand} &\quad\quad\mbox{if}~\invdemand\in[\hat{q},1)  \\
		1-\hat{q} &\quad\quad\mbox{if}~\invdemand\in[1,+\infty)\\
		1 &\quad\quad\mbox{if}~\invdemand=+\infty
	\end{array}
\right.
\end{equation}
where $\hat{q} = \frac{1}{\ExpDemand +\sqrt{\ExpDemand^2+1}}$.
\end{lemma}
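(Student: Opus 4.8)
The plan is to solve the min-max optimization over distributions $\invdemanddist \in \InvDistStateSpace$ directly, exploiting the connection to monopoly pricing mentioned in the introduction. Write $\invdemand$ for the random variable $1/\sum_i \Demandi{i}$, which is supported on $[0,\infty]$ with $\expect{1/\invdemand} = \ExpDemand$; think of $\invdemand$ as a ``value'' and of $\TargetFillRate \mapsto \TargetFillRate(1-\invdemanddist(\TargetFillRate))$ as the revenue of posting price $\TargetFillRate$ against a buyer with value distribution $\invdemanddist$. The adversary (nature) wants to choose $\invdemanddist$ to make the best posted price as bad as possible, subject to the moment constraint $\expect{1/\invdemand} = \ExpDemand$ (this is the only constraint, since any CDF on $[0,\infty]$ with that mean is realizable as the reciprocal of a sum of nonnegative demands — e.g. with a single agent). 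So I must show $\inf_{\invdemanddist}\max_{\TargetFillRate}\TargetFillRate(1-\invdemanddist(\TargetFillRate)) = \hat q := 1/(\ExpDemand+\sqrt{\ExpDemand^2+1})$.

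For the \textbf{upper bound} ($\le \hat q$), I would simply exhibit the candidate distribution $\hat{\invdemanddist}$ in \eqref{eq:worstthresholddist} and compute $\sup_{\TargetFillRate \in [0,1]}\TargetFillRate(1-\hat{\invdemanddist}(\TargetFillRate))$. On $[\hat q, 1)$ we have $\TargetFillRate(1-\hat{\invdemanddist}(\TargetFillRate)) = \TargetFillRate \cdot \hat q/\TargetFillRate = \hat q$, a constant; for $\TargetFillRate < \hat q$ the revenue is $\TargetFillRate < \hat q$; and at $\TargetFillRate = 1$ it is $1 - \hat q \le \hat q$ precisely because $\hat q \ge 1/2$ is false in general — wait, I need $1-\hat q \le \hat q$, i.e. $\hat q \ge 1/2$, which holds only when $\ExpDemand \le 3/4 \cdot(\dots)$; more carefully, $\hat q = 1/(\ExpDemand + \sqrt{\ExpDemand^2+1})$, and one checks directly that for the relevant range the flat part at height $\hat q$ dominates, so the max is exactly $\hat q$. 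I also need to verify the moment constraint $\expect[\hat{\invdemanddist}]{1/\invdemand} = \ExpDemand$: integrate $1/\invdemand$ against the density $\hat q/\invdemand^2$ on $[\hat q,1)$ plus the atom $\hat q$ at $+\infty$ (contributing $0$ to $\expect{1/\invdemand}$) — the computation should pin down that the atom location $\hat q$ and the mean constraint are consistent, which is where the quadratic $\hat q^2 + 2\ExpDemand \hat q - 1 = 0$ (equivalently $\hat q = 1/(\ExpDemand+\sqrt{\ExpDemand^2+1})$) comes from.

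For the \textbf{lower bound} ($\ge \hat q$), the cleanest route is a duality / ``every price does poorly'' argument: suppose for contradiction some $\invdemanddist$ achieves $\max_{\TargetFillRate}\TargetFillRate(1-\invdemanddist(\TargetFillRate)) < \hat q$. Then for every $\TargetFillRate \in (0,1]$, $1 - \invdemanddist(\TargetFillRate) < \hat q/\TargetFillRate$, i.e. $\invdemanddist(\TargetFillRate) > 1 - \hat q/\TargetFillRate$ (this is vacuous for $\TargetFillRate \le \hat q$). This is a pointwise \emph{lower} bound on the CDF, hence an \emph{upper} bound on the survival function $\bar{\invdemanddist}(\TargetFillRate) = 1-\invdemanddist(\TargetFillRate)$: $\bar{\invdemanddist}(t) \le \min\{1, \hat q/t\}$ for all $t \in (0,1]$, and for $t \ge 1$, $\bar{\invdemanddist}(t) \le \bar{\invdemanddist}(1) \le \hat q$. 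Now bound the mean: using $\expect{1/\invdemand} = \int_0^\infty \bar H(u)\,du$ where $\bar H$ is the survival function of $1/\invdemand$, or more directly by a layer-cake/integration-by-parts argument, I get
\[
\ExpDemand = \expect{1/\invdemand} \le \int_0^{\hat q}\!\! \frac{1}{t}\cdot(\text{mass near }0)\,\cdots
\]
— more carefully, I would write $\expect{1/\invdemand} = \int_0^\infty \prob{1/\invdemand > s}\,ds = \int_0^\infty \prob{\invdemand < 1/s}\,ds = \int_0^\infty \invdemanddist(1/s^-)\,ds$, substitute $t = 1/s$, and bound $\invdemanddist$ from above by $1$ trivially won't help; instead bound $1 - \invdemanddist$ from above. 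The key identity is $\expect{1/\invdemand} = \int_0^\infty \bar{\invdemanddist}\text{-type terms}$; concretely $\expect{1/\invdemand} \le \int_0^1 \frac{\bar{\invdemanddist}(t)}{t^2}\,dt \cdot(\dots)$ — I will use $\expect{1/\invdemand} = \int_0^\infty \frac{1}{t^2}\bar{\invdemanddist}(t)\,dt$ (valid for nonnegative $\invdemand$, by Fubini on $1/\invdemand = \int_0^\infty \mathbbm{1}[\invdemand < 1/s]ds$ then $s = 1/t$ giving $ds = dt/t^2$). Then
\[
\ExpDemand \le \int_0^{\hat q}\frac{1}{t^2}\,dt \quad\text{diverges},
\]
so that crude bound is useless near $0$; the fix is that $\bar{\invdemanddist}(t)\le 1$ forces the $[0,\hat q]$ contribution to be $\le \int_0^{\hat q} t^{-2}\,dt$ which diverges — meaning I must instead split at $\hat q$ and use $\bar{\invdemanddist} \le \hat q/t$ on $[\hat q,\infty)$: $\int_{\hat q}^\infty \frac{1}{t^2}\cdot\frac{\hat q}{t}\,dt$ also needs care at $\infty$... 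The honest approach: use $\expect{1/\invdemand} = \int_0^\infty \bar H_{1/\invdemand}(s)ds$ with $\bar H_{1/\invdemand}(s) = \invdemanddist((1/s)^-) \le 1$, and for $s \le 1$ (i.e. $1/s \ge 1$) we have the better bound... I will need to be careful, but the upshot is a bound of the form $\ExpDemand \le \frac{1}{2}\left(\frac{1}{\hat q} - \hat q\right)$ coming out of optimizing the adversary's mass placement, which rearranges exactly to $\hat q^2 + 2\ExpDemand\hat q - 1 \le 0$, i.e. $\hat q \le 1/(\ExpDemand+\sqrt{\ExpDemand^2+1})$ — wait, that's the wrong direction. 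Let me restate: assuming $\max_\TargetFillRate \TargetFillRate(1-\invdemanddist(\TargetFillRate)) =: q^* < \hat q$, the survival bound $\bar{\invdemanddist}(t) \le \min\{1, q^*/t\}$ gives $\ExpDemand \le \frac{1}{2}(1/q^* - q^*)$, which is decreasing in $q^*$; since $q^* < \hat q$, $\ExpDemand \le \frac12(1/q^* - q^*)$ is consistent — hmm, it does NOT give a contradiction directly.

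The \textbf{main obstacle}, then, is exactly this lower-bound direction: a naive survival-function bound is not tight because the adversary must also respect $\invdemanddist \le 1$ and cannot put the optimal ``$q/t$'' tail everywhere. The right fix is to observe that the worst case requires the CDF to \emph{equal} $1 - q^*/t$ on its whole support (any slack lets nature lower the mean and hence lower $q^*$ further, or conversely the constraint binds), plus an atom at $+\infty$ to absorb leftover probability $q^*$ without spending mean; then the mean constraint $\expect{1/\invdemand} = \ExpDemand$ becomes $\int_{q^*}^1 \frac{1}{t}\cdot \frac{q^*}{t^2}\,dt + (\text{atom at }\infty)\cdot 0 = q^*\int_{q^*}^1 t^{-3}\,dt = \frac{q^*}{2}(1/{q^*}^2 - 1) = \frac{1}{2}(1/q^* - q^*)$, so $\ExpDemand = \frac12(1/q^* - q^*)$ forces $q^* = 1/(\ExpDemand+\sqrt{\ExpDemand^2+1}) = \hat q$, a contradiction with $q^* < \hat q$. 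Making rigorous the claim that the extremal $\invdemanddist$ has this exact form — i.e. that slackness in the revenue bound can always be removed without violating the moment constraint — is the crux; I would handle it by a direct ironing/exchange argument (shift probability mass inward toward $+\infty$ and down toward the revenue frontier $\invdemanddist(t) = 1-q^*/t$, showing each such move weakly decreases the objective $\max_\TargetFillRate \TargetFillRate(1-\invdemanddist(\TargetFillRate))$ while preserving $\expect{1/\invdemand} = \ExpDemand$), which mirrors the standard argument in monopoly-pricing lower bounds and is presumably the ``two key properties of the worst-case distribution'' alluded to in the introduction.
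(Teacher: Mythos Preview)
Your overall plan is sound, and your direct integral approach to the lower bound is in fact cleaner than the paper's argument---but you derailed yourself with a formula error that flipped the direction of the key inequality. The layer-cake identity you want is
\[
\ExpDemand \;=\; \expect{1/\invdemand} \;=\; \int_0^\infty \invdemanddist(t^-)\,\frac{dt}{t^2},
\]
with $\invdemanddist$, \emph{not} $1-\invdemanddist$, inside the integral. (You actually wrote the correct version once: $\expect{1/\invdemand} = \int_0^\infty \invdemanddist((1/s)^-)\,ds$; substituting $t = 1/s$ gives the display above.) The revenue constraint $\TargetFillRate\bigl(1-\invdemanddist(\TargetFillRate)\bigr) \le q^*$ for all $\TargetFillRate \in [0,1]$ is a \emph{lower} bound on $\invdemanddist$: namely $\invdemanddist(t) \ge (1 - q^*/t)^+$ on $(0,1]$, and $\invdemanddist(t) \ge \invdemanddist(1) \ge 1 - q^*$ on $[1,\infty)$ by monotonicity. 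Plugging into the correct identity,
\[
\ExpDemand \;\ge\; \int_{q^*}^{1}\!\Bigl(1 - \frac{q^*}{t}\Bigr)\frac{dt}{t^2} \;+\; \int_{1}^{\infty}\!(1-q^*)\,\frac{dt}{t^2}
\;=\; \frac{1}{2}\Bigl(\frac{1}{q^*} - q^*\Bigr),
\]
which rearranges to ${q^*}^{2} + 2\ExpDemand\, q^* - 1 \ge 0$, i.e.\ $q^* \ge \hat{q}$. That is the entire lower-bound direction; no exchange argument is needed. Your inequality $\ExpDemand \le \tfrac12(1/q^* - q^*)$ went the wrong way precisely because you integrated $1-\invdemanddist$ against $t^{-2}$, which is not an expression for $\expect{1/\invdemand}$ (and diverges near $0$, as you noticed).

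For comparison, the paper takes the structural route you fall back on at the end: it works in quantile space and shows via two exchange-plus-rescaling claims that, without loss, the worst-case $\invdemanddist$ has a flat revenue curve on $[Q^{\invdemanddist}(1),1]$ and no mass on $(1,\infty)$; these two properties together with the mean constraint then uniquely pin down $\hat{\invdemanddist}$. That argument is valid and is what you would need if the goal were to \emph{derive} the extremal distribution from first principles, but once $\hat{\invdemanddist}$ is guessed, the three-line integral bound above suffices to establish both the value of the infimum and (together with the explicit check of $\hat{\invdemanddist}$) that it is attained.

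One minor slip in your upper-bound verification: at $\TargetFillRate = 1$ the revenue is $1\cdot\bigl(1-\hat{\invdemanddist}(1)\bigr) = \hat{q}$ (since $\hat{\invdemanddist}(1) = 1-\hat{q}$), not $1-\hat{q}$. So there is nothing to worry about regarding $\hat{q} \gtrless 1/2$, and the maximum over $[0,1]$ is exactly $\hat{q}$.
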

Before presenting the proof of the above lemma in \Cref{sec:key-lemma-non-adaptive}, 
which is the key step of the proof of \Cref{thm:NonAdaptive}, 
we establish a matching upper bound and complete the proof of the theorem.

\end{proof}

\begin{proof}{{\bf Proof of upper bound:}}
We show a matching upper bound by considering a two-agent instance. In this instance, only the first agent has stochastic demand. In particular,
$\Demand_1=\frac{1-\epsilon}{\invdemand}$ for $\invdemand\sim \hat{\invdemanddist}$ (defined in \cref{eq:worstthresholddist}) and $\Demand_2 = \epsilon\ExpDemand$ deterministically. Note that $\expect{\Demand_1 +\Demand_2} = \ExpDemand$. For any target fill rate $\TargetFillRate' = (1-\epsilon)\TargetFillRate$ where $\TargetFillRate \in [0,1]$, supply will be exhausted before the arrival of agent $2$ with probability $\hat{\invdemanddist}(\TargetFillRate)$, in which case the minimum FR will be $0$.  Therefore, the expected minimum fill rate of the optimal \fixedthresh\ policy in this instance is at most
\begin{align*}
    \max_{\TargetFillRate\in[0,1]} (1-\epsilon)^{-1}\TargetFillRate \left(1-\hat{\invdemanddist}(\TargetFillRate)\right) = 
    (1-\epsilon)^{-1}\frac{1}{\ExpDemand +\sqrt{\ExpDemand^2 +1}},
\end{align*}
where the equality follows from \Cref{lem:nonadaptlowbound}.

By allowing $\epsilon\to 0$, we conclude that there exists an instance where the expected minimum fill rate of the optimal \fixedthresh\ policy is $\frac{1}{\ExpDemand +\sqrt{\ExpDemand^2 +1}}$, which matches the lower bound from above. 
We remark that the construction of the above two-agent example clarifies why our upper bound does not depend on the number of agents: we can modify the example to an $\NumAgents$-agent one where the total demand of the first $\NumAgents-1$ agents have correlated demand equal to $\frac{1-\epsilon}{\invdemand}$ for $\invdemand\sim \hat{\invdemanddist}$ and the last agent has a deterministic demand of $\epsilon \ExpDemand$.
\end{proof}

With the above (matching) bounds, we complete
the proof of \Cref{thm:NonAdaptive} by 
 scaling this tight bound by our benchmark for deterministic demand, namely $\DetGuar = \min\{1, 1/\ExpDemand\}$, to arrive at the guarantee stated in \Cref{thm:NonAdaptive}.
\hfill\Halmos

\subsubsection{Proof of \texorpdfstring{\Cref{lem:nonadaptlowbound}}{} and Connections to Monopoly Pricing}
\label{sec:key-lemma-non-adaptive}
Having laid out the proof steps of \Cref{thm:NonAdaptive},
we now provide a constructive proof of the key lemma, i.e., Lemma \ref{lem:nonadaptlowbound}. We do so by identifying properties of the worst-case distribution against the optimal \fixedthresh\ policy, which enables us to exactly characterize that distribution.

 To aid in this proof, we introduce a one-to-one mapping of each target fill rate $\tau$ into the quantile space, such that quantile $q$ corresponds to \fixedthresh\ $\TargetFillRate$ if and only if there is sufficient supply to meet a fraction $\tfrvar$ of demand with probability exactly $q$.   
 We start by describing notation for this transformation, along with some basic properties, in the following definition.
 For simplicity of exposition, we assume all the distributions playing the role of $\invdemanddist$ are non-atomic.\footnote{This assumption is without loss of generality, as one can always add an infinitesimal continuous perturbation to each distribution, which does not change any of the arguments in this proof.}

\begin{definition}[\textbf{TFR in Quantile Space}] 
\label{def:rev}
Given a (non-atomic) CDF $\invdemanddist:\NonNegReals\rightarrow[0,1]$ and inverse total demand $\invdemand \sim \invdemanddist$, we define the following mappings.
\begin{itemize}
\item {\underline{\em TFR-to-quantile map $Q^\invdemanddist$}:}
    ~The quantile corresponding to TFR $\tfrvar \in [0,1]$ is $Q^\invdemanddist(\tfrvar)\triangleq 1-\invdemanddist(\tfrvar)$. In words,
    the probability of being able to meet a fraction $\tfrvar$ of total demand is $Q^\invdemanddist(\tfrvar)$. This map is monotone non-increasing.
    \item {\underline{\em Quantile-to-TFR map $\Tfrvar^\invdemanddist$}:}
    ~The TFR corresponding to quantile $q\in[0,1]$ is $\Tfrvar^\invdemanddist(q)\triangleq \invdemanddist^{-1}(1-q)$.
    In words, $\Tfrvar^\invdemanddist(q)$ is the TFR for which the probability of being able to meet a fraction $\Tfrvar^\invdemanddist(q)$ of total demand is $q$.  This map is monotone non-increasing and is the inverse of the TFR-to-quantile map, i.e., $\Tfrvar^\invdemanddist=\left({Q^\invdemanddist}\right)^{-1}$. 
    
    \item {\underline{\em The expected achievable fill rate (EAFR) curve $R^\invdemanddist$}:} 
   ~For $q\in[0,1]$,
    $R^\invdemanddist(q)\triangleq q\cdot \invdemanddist^{-1}(1-q)$ is the EAFR when the probability of meeting demand (given the TFR) is exactly equal to $q\in[0,1]$, i.e., the EAFR obtained by targeting a fill rate $\Tfrvar^\invdemanddist(q)$. 
\end{itemize}
\end{definition}

\begin{figure}[t]
 \centering
 \begin{subfigure}[b]{0.48\textwidth}
     \includegraphics[
     trim = {2.4cm 2.2cm 13.15cm 2cm}
     ,clip,width=.95\textwidth]{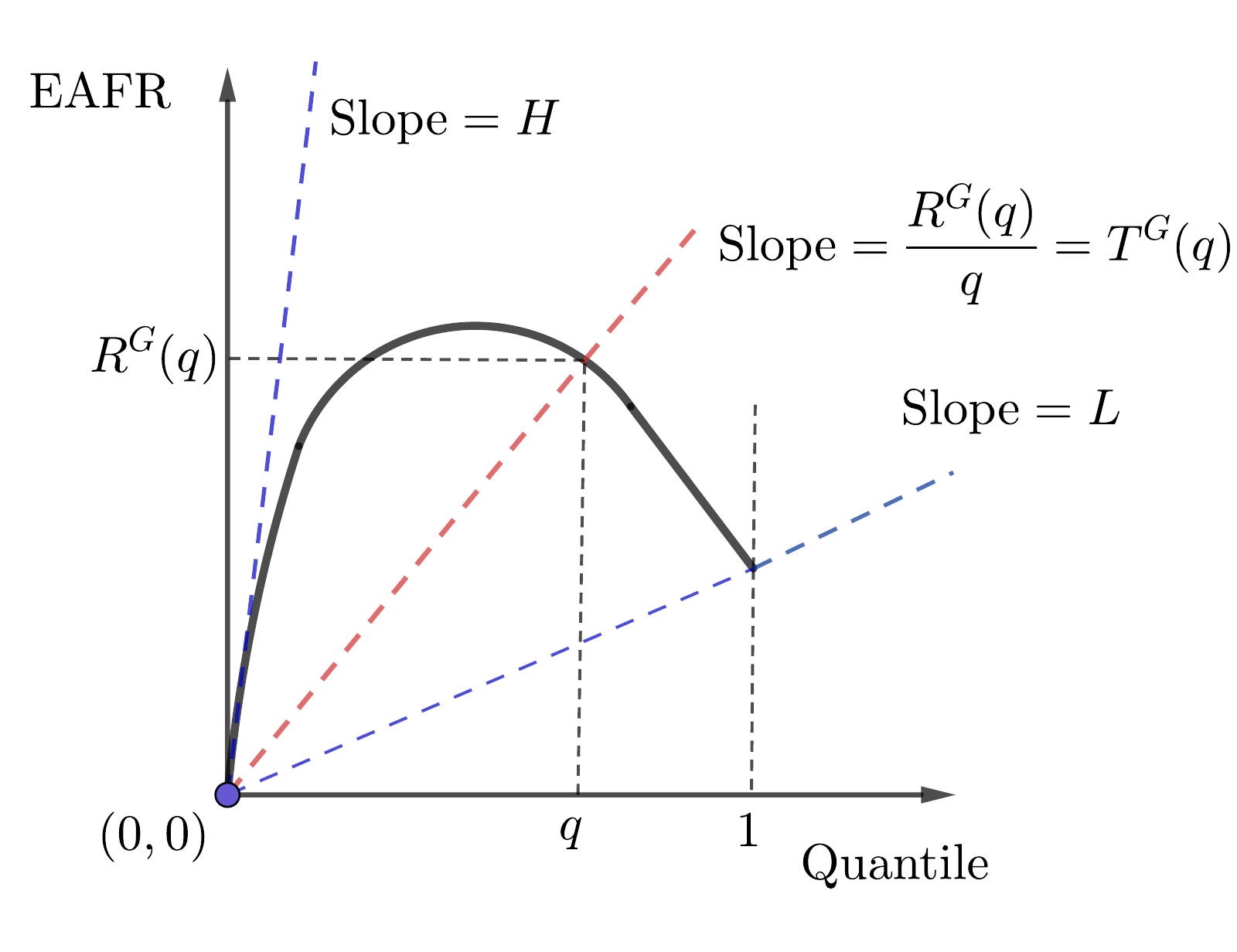}
    \caption{ }
    \label{fig:rev}
 \end{subfigure}
 \begin{subfigure}[b]{0.48\textwidth}
     \includegraphics[
     trim={5.4cm 1.5cm 9cm 2cm}
     ,clip,width=.95\textwidth]{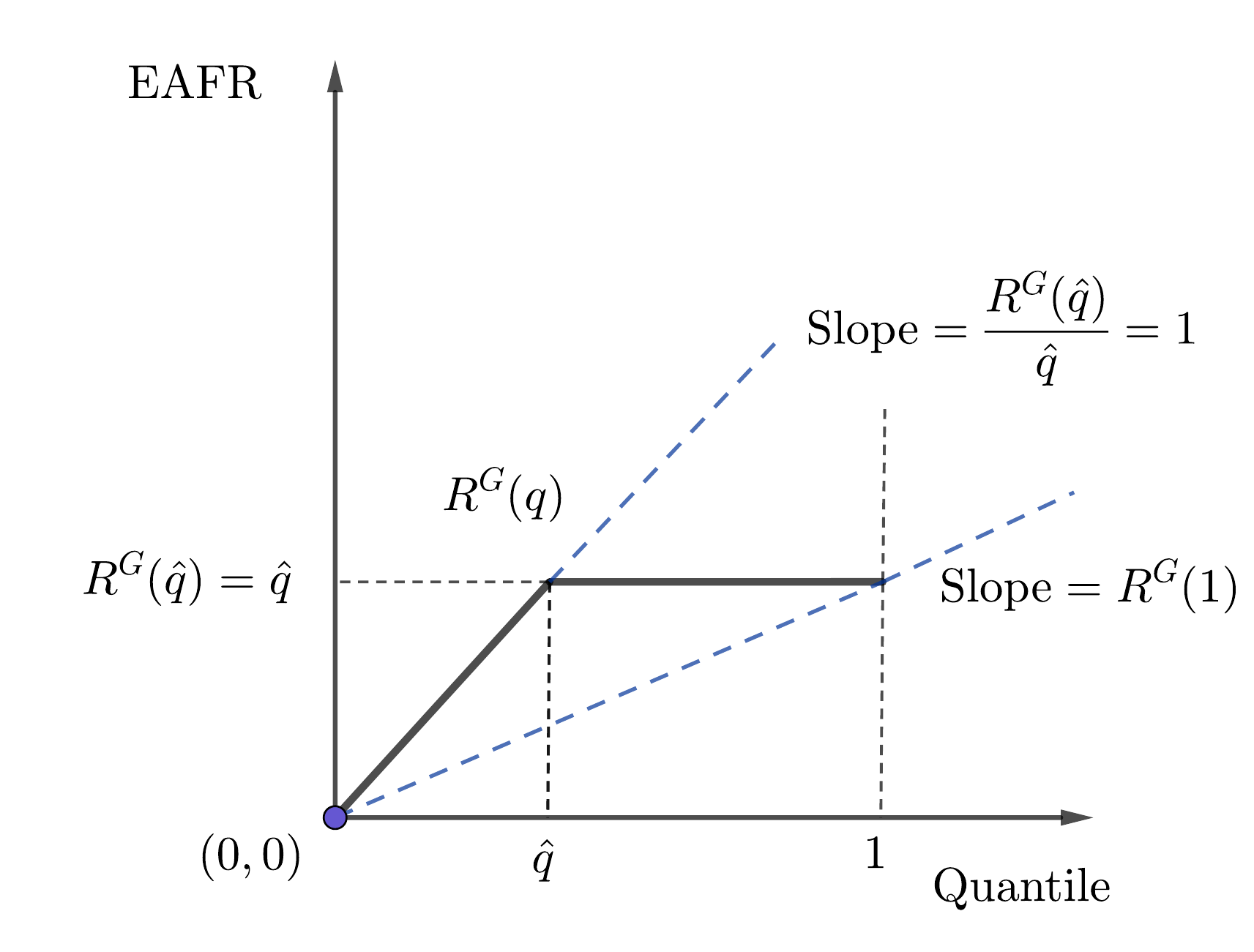}
     \caption{ }
    \label{fig:equal-rev}
 \end{subfigure}
 \caption{(a) The expected achievable fill rate (EAFR) curve, fill rates, and support in quantile space. (b) The EAFR curve of the worst-case distribution, as defined in \cref{eq:worstthresholddist}, in quantile space.}
\end{figure}
\begin{remark}
\label{rem:monopolypricing}
In light of the above transformation, we remark that there is a reduction from our setup to a single-parameter Bayesian mechanism design problem in which a monopolistic seller has an item to sell to a single buyer with private valuation $\invdemand \sim \invdemanddist$, where $\invdemanddist$ is the common prior valuation distribution. See \cite{alaei2019optimal} for an example of such a setting; also refer to \cite{hartline2013mechanism} for more details on monopoly pricing. In this reduction, target fill rates correspond to prices and the EAFR corresponds to the expected revenue in monopoly pricing (accordingly, the EAFR curve also corresponds to the revenue curve). The problem in this parallel monopoly pricing setting is identifying the worst-case distribution $\invdemanddist$  satisfying $\expect[\invdemand \sim \invdemanddist]{1/v}=\ExpDemand$, so that we minimize the maximum revenue obtained from selling the item at prices {constrained to be in} the interval $[0,1]$.  
\end{remark}

According to \Cref{def:rev}, $\tfrvar\left(1-\invdemanddist(\tfrvar)\right)$ is equivalent to $R^\invdemanddist\left(Q^\invdemanddist(\tfrvar)\right)$ for any TFR $\tfrvar \in [0,1]$. Based on this insight, to prove \Cref{lem:nonadaptlowbound}, it is sufficient to show that
\begin{equation}
    \label{eq:best-non-adaptive-lemma}
 \displaystyle\underset{\invdemanddist \in \InvDistStateSpace}{\inf}\left\{\underset{q\in[0,1]:\Tfrvar^\invdemanddist(q)\in[0,1]}{\max} R^\invdemanddist(q)\right\} = \frac{1}{\ExpDemand + \sqrt{\ExpDemand^2 + 1}}.
\end{equation}

Consider all cumulative distribution functions $\invdemanddist \in \InvDistStateSpace$. We first identify two additional constraints on $\invdemanddist$ that do not change the infimum in \cref{eq:best-non-adaptive-lemma}. These constraints enable us to
find the worst-case distribution that achieves the infimum value which establishes the desired result.

Before proceeding, we develop intuition using an illustrative example of the EAFR curve shown in \Cref{fig:rev}.
In general, 
if one draws $R^\invdemanddist(q)$ as a function of $q\in[0,1]$ (i.e., in the quantile space), then the slope of the line connecting the point $(0,0)$ to $\left(q,R^\invdemanddist(q)\right)$ is equal to $\Tfrvar^\invdemanddist(q)=R^\invdemanddist(q)/q$. This slope is monotone non-increasing in $q$ for any CDF $G$ according to Definition \ref{def:rev}.
Hence, given the EAFR curve $R^\invdemanddist(q)$, the support of the feasible fill rates 
is equal to  $[L,\min\{1,H\}]$, where $L\triangleq R^\invdemanddist(1)$ and $H\triangleq \underset{q\to 0}{\liminf}~{R^\invdemanddist(q)}/{q}$. 
The two constraints that we will add below, as stated in Claims \ref{clm:nonadaptive1} and \ref{clm:nonadaptive2},  imply that the outer optimization problem in \cref{eq:best-non-adaptive-lemma} will remain unchanged if we require the EAFR curve to be (i)  flat over quantiles corresponding to target fill rates in $[L,1]$, i.e., quantiles in the interval $[Q^\invdemanddist(1), 1]$, and (ii) a straight line with slope $1$ for quantiles in the interval $[0, Q^\invdemanddist(1)]$. 
With these two additional constraints, in Claim \ref{clm:nonadaptive3} we find the worst-case CDF, which has an EAFR curve as shown in \Cref{fig:equal-rev}.

\begin{claim}[Equal EAFR]
\label{clm:nonadaptive1}
Adding the constraint $R^\invdemanddist(q)=R^\invdemanddist(q'), \forall q,q'\in [Q^\invdemanddist(1),1]$ to the outer optimization in \cref{eq:best-non-adaptive-lemma} does not change its infimum value.
\end{claim}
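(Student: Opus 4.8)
The plan is to prove the only non-trivial direction. Because the family of $\invdemanddist$ satisfying the new constraint is a subset of $\InvDistStateSpace$, its infimum is automatically $\ge$ the infimum in \eqref{eq:best-non-adaptive-lemma}; so it suffices to show that every $\invdemanddist \in \InvDistStateSpace$ can be replaced by some $\invdemanddist' \in \InvDistStateSpace$ whose EAFR curve is constant on $[Q^{\invdemanddist'}(1),1]$ and whose value $\max_{q:\,\Tfrvar^{\invdemanddist'}(q)\in[0,1]} R^{\invdemanddist'}(q)$ is no larger than $M := \max_{q:\,\Tfrvar^{\invdemanddist}(q)\in[0,1]} R^{\invdemanddist}(q)$, the value attained by $\invdemanddist$.

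The construction is a flattening of the EAFR curve over the feasible quantile range --- equivalently, the ironing/pooling operation familiar from the monopoly-pricing reformulation in \Cref{rem:monopolypricing}. Let $q_0 = Q^{\invdemanddist}(1)$, so that quantiles in $[0,q_0]$ carry infeasible target fill rates (those exceeding $1$) while $[q_0,1]$ is the feasible range. I would keep the quantile-to-TFR map $\Tfrvar^{\invdemanddist}$ unchanged on $[0,q_0]$ and, on $[q_0,1]$, replace it by the equal-EAFR profile $q \mapsto c/q$ (an appropriately truncated equal-revenue piece of the inverse-demand law), where the constant $c$ is chosen so that this block contributes exactly the same amount to $\expect[\invdemand\sim\invdemanddist]{1/\invdemand}$ as before. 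Since $\int_{q_0}^1 \frac{dq}{\Tfrvar^{\invdemanddist}(q)} = \int_{q_0}^1 \frac{q\,dq}{R^{\invdemanddist}(q)}$ while the new block contributes $\int_{q_0}^1 \frac{q\,dq}{c} = \frac{1-q_0^2}{2c}$, this pins down $c = (1-q_0^2)\big/\big(2\int_{q_0}^1 \frac{dq}{\Tfrvar^{\invdemanddist}(q)}\big)$; leaving everything else untouched then preserves the total mean, so $\invdemanddist'\in\InvDistStateSpace$, and by design its EAFR curve is flat on $[Q^{\invdemanddist'}(1),1]$.

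The key inequality $c \le M$ is a one-line convexity estimate: since $R^{\invdemanddist}(q)\le M$ on $[q_0,1]$ and $x\mapsto 1/x$ is decreasing, $\int_{q_0}^1 \frac{q\,dq}{R^{\invdemanddist}(q)} \ge \int_{q_0}^1 \frac{q\,dq}{M} = \frac{1-q_0^2}{2M}$, hence $\frac{1}{c} \ge \frac{1}{M}$, i.e. $c\le M$. On the feasible range of $\invdemanddist'$ the EAFR curve equals the constant $c\le M$, while the retained quantiles $[0,q_0]$ contribute only infeasible target fill rates, so the value of $\invdemanddist'$ is at most $M$ --- exactly what is needed.

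The step I expect to be delicate is not the convexity bound but verifying that the pieced-together object is genuinely an element of $\InvDistStateSpace$ and that its flat EAFR block coincides with $[Q^{\invdemanddist'}(1),1]$. The only real issue is monotonicity of the quantile-to-TFR map at the seam $q_0$: when $c\le q_0$ one has $\Tfrvar^{\invdemanddist'}(q_0^+)=c/q_0\le 1\le\Tfrvar^{\invdemanddist}(q_0^-)$ and the pieces glue, but when $c>q_0$ the equal-EAFR block rises above $1$ and need not attach monotonically. I would handle this case by flattening a slightly larger interval $[q_1,1]$ with $q_1\le q_0$, choosing $q_1$ so that the new block meets the retained part continuously (which forces the flat constant to equal $q_1$) and re-solving for a still-valid, mean-preserving map on $[0,q_1]$; feasibility of this sub-problem reduces to the elementary inequality $\hat{q} < \ExpDemand - \sqrt{\ExpDemand^2-1}$ (for $\ExpDemand\ge 1$; vacuous otherwise), where $\hat{q}=(\ExpDemand+\sqrt{\ExpDemand^2+1})^{-1}$, together with the a priori bound $M\ge\hat{q}$ --- the latter obtained by integrating the pointwise bounds on $\invdemanddist(\TargetFillRate)$ implied by $\max_{\TargetFillRate\le1}\TargetFillRate(1-\invdemanddist(\TargetFillRate))=M$ against $\TargetFillRate^{-2}\,d\TargetFillRate$ and using $\ExpDemand=\expect{1/\invdemand}=\int_0^\infty \TargetFillRate^{-2}\invdemanddist(\TargetFillRate)\,d\TargetFillRate$. (Alternatively, a small continuous perturbation --- as the paper already invokes for non-atomicity --- sidesteps the boundary bookkeeping.) This case analysis, rather than the flattening idea itself, is where the actual work lies.
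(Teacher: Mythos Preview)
Your approach differs from the paper's. The paper argues by local perturbation: given a non-flat $R^G$ on $[Q^G(1),1]$, it shifts a small amount of mass upward in $v$-space (keeping $\max_{\tau\in[0,1]}\tau(1-G(\tau))$ unchanged while strictly decreasing $\expect{1/v}$), and then rescales $v$ to restore the mean, which can only shrink the maximum. You instead flatten globally, replacing $\Tfrvar^G$ on the feasible block $[q_0,1]$ by an equal-revenue profile $q\mapsto c/q$ with $c$ pinned down by the mean constraint, and the pointwise bound $\int_{q_0}^1 q/R^G(q)\,dq\ge (1-q_0^2)/(2M)$ gives $c\le M$ in one line. Your route is more constructive (it actually produces a flat-EAFR $G'$, which the paper's local step does not) and makes the ironing analogy to monopoly pricing explicit.

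The case $c>q_0$ is where your sketch needs tightening. The requirement $q_1\le q_0$ you state is not always compatible with feasibility of the redesigned map on $[0,q_1]$, which forces $q_1\ge\hat q$: one can have $q_0<\hat q$ (e.g.\ $\mu=1/2$ with $v$ concentrated near $1$ and $+\infty$, giving $q_0=1/2<\hat q\approx 0.618$, yet $c=3/4>q_0$). The remedy is simply to drop $q_1\le q_0$---once you redesign all of $[0,q_1]$ there is no reason to retain it---and take $q_1=\hat q$, which returns precisely $\hat G$. But then in this branch you are proving $M\ge\hat q$ directly via the integration bound you sketch and exhibiting $\hat G$ as the dominating distribution, which already establishes Lemma~\ref{lem:nonadaptlowbound} outright and renders Claims~\ref{clm:nonadaptive1}--\ref{clm:nonadaptive3} superfluous. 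So the argument is correct, but in the boundary case it is really an alternative proof of the lemma rather than of the claim in isolation.
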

We prove Claim \ref{clm:nonadaptive1} by contradiction:
 we show that for any CDF $G \in \InvDistStateSpace$, if the above condition does not hold, we can slightly modify $G$ to design a new distribution $\tilde{G} \in \InvDistStateSpace$ which has an EAFR curve with a lower maximum value. The details are presented in Appendix~\ref{apx:nonadaptclaim1}.
The above claim readily implies that we can focus on distributions for which the EAFR curve is flat in the interval $[Q^\invdemanddist(1),1]$. 

Next, 
we claim that we can restrict our attention to 
distributions where there is no probability mass for $\invdemand \in(1,+\infty)$. Said differently, the support of inverse demand is  $(0,1] \cup  \{+\infty\}$. 
\begin{claim}[Restricted Support for Inverse Demand]
\label{clm:nonadaptive2}
Adding the constraint $\invdemanddist(\invdemand) = \invdemanddist(1)$ for all $\invdemand \in [1, +\infty)$ and $\lim_{\invdemand \to +\infty}\invdemanddist(\invdemand)=1$ to the outer optimization in \cref{eq:best-non-adaptive-lemma} does not change its infimum value.
\end{claim}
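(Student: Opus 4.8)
The plan is to argue by contradiction, in the same spirit as the proof of Claim \ref{clm:nonadaptive1}: given any $\invdemanddist \in \InvDistStateSpace$ that has strictly positive mass on the interval $(1, +\infty)$, we construct a modified distribution $\tilde{\invdemanddist} \in \InvDistStateSpace$ that has no mass on $(1, +\infty)$, preserves the constraint $\expect[\invdemand \sim \tilde{\invdemanddist}]{1/\invdemand} = \ExpDemand$, and whose EAFR curve has a maximum value no larger than that of $\invdemanddist$. Since by Claim \ref{clm:nonadaptive1} we may already assume the EAFR curve of $\invdemanddist$ is flat over quantiles in $[Q^\invdemanddist(1), 1]$ (i.e., over target fill rates in $[L, 1]$), the only feasible target fill rates that matter for the inner maximization lie in $[L, \min\{1, H\}]$. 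The key observation is that mass placed at values $\invdemand > 1$ corresponds to target fill rates strictly greater than $1$, which are infeasible (a \fixedthresh\ policy has $\TargetFillRate \in [0,1]$), so such mass contributes nothing to the inner $\max$ over feasible target fill rates --- it only ``uses up'' expected inverse demand budget $\expect{1/\invdemand}$ in a way that could instead be redistributed.

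The construction I would use: take all the probability mass of $\invdemanddist$ on $(1, +\infty)$ --- say it has total probability $p = 1 - \invdemanddist(1)$ (with the convention that mass ``at'' $+\infty$ is already handled) --- and reassign it, partly to the atom at $+\infty$ (which corresponds to $1/\invdemand = 0$, contributing zero to $\expect{1/\invdemand}$) and partly by shifting it down to $\invdemand = 1$. Concretely, for mass originally at a point $\invdemand_0 \in (1, +\infty)$ contributing $1/\invdemand_0$ to the mean of $1/\invdemand$, I would split that mass in proportion $1/\invdemand_0 : (1 - 1/\invdemand_0)$ between a point mass at $\invdemand = 1$ (contributing $1$ per unit mass to $\expect{1/\invdemand}$) and the atom at $+\infty$ (contributing $0$); this exactly preserves the contribution to $\expect[\invdemand\sim\invdemanddist]{1/\invdemand}$ while ensuring $\tilde{\invdemanddist}$ is supported on $(0,1] \cup \{+\infty\}$. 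Then I would check that the EAFR curve $R^{\tilde{\invdemanddist}}(q)$ does not exceed $\max_q R^{\invdemanddist}(q)$ over the relevant quantile range: moving mass to $+\infty$ shifts the quantile axis but leaves the curve $R(q) = q \cdot \invdemanddist^{-1}(1-q)$ unchanged for the quantiles corresponding to $\invdemand \le 1$ (those are unaffected), and the newly created mass at $\invdemand = 1$ corresponds to TFR exactly $1$, whose EAFR value equals $Q^{\tilde{\invdemanddist}}(1) = 1 - \tilde{\invdemanddist}(1) \le 1 - \invdemanddist(1) = Q^\invdemanddist(1)$, which by the Equal-EAFR constraint of Claim \ref{clm:nonadaptive1} is the flat value $R^\invdemanddist$ attains on $[Q^\invdemanddist(1),1]$ anyway. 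So the maximum does not increase.

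The main obstacle I anticipate is bookkeeping around atoms and the non-atomicity assumption: the construction deliberately introduces an atom at $\invdemand = 1$ (and uses the atom at $+\infty$), so one has to either relax the ``non-atomic'' simplification invoked just before Definition \ref{def:rev} or, more cleanly, re-spread the newly created atom at $1$ into an infinitesimal continuous sliver just below $1$ (as the paper already notes such perturbations don't affect the arguments) and verify this does not push any feasible TFR above $1$. A second, more delicate point is confirming that after the modification the inner maximization is still taken over a nonempty feasible set and that the flat-EAFR property from Claim \ref{clm:nonadaptive1} is inherited (or can be re-established) by $\tilde{\invdemanddist}$, so that the two claims compose. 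Once these are handled, the inequality $\max_q R^{\tilde{\invdemanddist}}(q) \le \max_q R^{\invdemanddist}(q)$ follows, and since $\tilde{\invdemanddist}$ satisfies the added support constraint, taking the infimum over the restricted class cannot exceed the infimum over the full class; the reverse inequality is trivial since the restricted class is a subset. This establishes that adding the constraint leaves the infimum in \cref{eq:best-non-adaptive-lemma} unchanged.
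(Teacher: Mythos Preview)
Your approach is correct and takes a genuinely different route from the paper's. Both arguments start from a $G$ with mass on $(1,+\infty)$ and produce a competitor $\tilde G\in\InvDistStateSpace$ supported on $(0,1]\cup\{+\infty\}$ whose maximum EAFR over $\tau\in[0,1]$ is no larger. The paper, however, moves \emph{all} the mass from $(1,+\infty)$ to $+\infty$---which strictly lowers $\expect{1/v}$ to some $\bar\mu<\mu$---and then rescales $v\mapsto(\bar\mu/\mu)\,v$ to restore the mean constraint (the same rescaling trick used at the end of the proof of Claim~\ref{clm:nonadaptive1}); monotonicity of $\bar G$ then shows the rescaling can only decrease the max EAFR over $\tau\in[0,1]$. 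You instead perform a mean-preserving split of each mass element at $v_0>1$ between $v=1$ and $v=+\infty$ in proportion $1/v_0:(1-1/v_0)$, so $\expect[\tilde G]{1/v}=\mu$ holds by construction and no rescaling step is needed. Your route is a bit more direct; the paper's route avoids creating a finite atom.

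Two minor remarks. First, your appeal to Claim~\ref{clm:nonadaptive1} is unnecessary: since your modification only touches mass at $v\geq 1$, you have $\tilde G(\tau)=G(\tau)$ for every $\tau\in[0,1)$, so the EAFR $\tau(1-\tilde G(\tau))$ is unchanged there; at $\tau=1$ you get $1-\tilde G(1)\le 1-G(1)$, which is already a value attained by the original EAFR curve (namely at $\tau=1$) and hence is at most $\max_{\tau\in[0,1]}\tau(1-G(\tau))$. No flat-EAFR property is required, and you need not assume $G$ already satisfies Claim~\ref{clm:nonadaptive1}'s constraint. Second, the atom worry is exactly the technicality the paper already dismisses via infinitesimal perturbation (and indeed the worst-case $\hat G$ in \cref{eq:worstthresholddist} itself has atoms), so it is not a genuine obstacle; your composition concern is likewise moot for Claim~\ref{clm:nonadaptive2} as stated, which is a standalone statement about the infimum over all of $\InvDistStateSpace$.
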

We also prove Claim \ref{clm:nonadaptive2} by contradiction: we show that for any CDF $G \in \InvDistStateSpace$, if there is probability mass on $v \in (1, +\infty)$, we can construct a CDF $\tilde{G} \in \InvDistStateSpace$ which has an EAFR curve with a lower maximum value by shifting that mass to $+\infty$.
The details are presented in Appendix~\ref{apx:nonadaptclaim2}.
Again, note that this claim implies that we can focus on distributions for which the EAFR curve starts with a straight line up to quantile $Q^\invdemanddist(1)$.

Given the two claims above, the distribution that attains the infimum in \cref{eq:best-non-adaptive-lemma} must satisfy the two constraints introduced. \Cref{fig:equal-rev} summarizes the effect of these two restrictions on $R^\invdemanddist(q)$. 

\begin{claim}[Worst-case CDF]
\label{clm:nonadaptive3}
For any $\ExpDemand \in \NonNegReals$, the distribution $\hat{\invdemanddist}$ given in \cref{eq:worstthresholddist} is the unique distribution in $\InvDistStateSpace$
 satisfying the constraints introduced in Claims \ref{clm:nonadaptive1} and \ref{clm:nonadaptive2}. Therefore, this distribution attains the infimum in \cref{eq:best-non-adaptive-lemma}.
\end{claim}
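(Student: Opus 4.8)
The plan is to leverage Claims \ref{clm:nonadaptive1} and \ref{clm:nonadaptive2} as black boxes and prove that the set of CDFs in $\InvDistStateSpace$ satisfying \emph{both} of the introduced constraints is a singleton, equal to $\{\hat{\invdemanddist}\}$. Once this is shown, the conclusion of Claim \ref{clm:nonadaptive3} is immediate: by Claims \ref{clm:nonadaptive1} and \ref{clm:nonadaptive2} the infimum in \eqref{eq:best-non-adaptive-lemma} over all of $\InvDistStateSpace$ equals the infimum over the subset cut out by the two constraints, and an infimum over a singleton is attained at its unique point. So the entire content is the uniqueness statement, and I would prove it constructively by pushing the constraints through the maps of \Cref{def:rev}.

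First I would translate the two constraints into structural conditions on the EAFR curve. Let $\invdemanddist \in \InvDistStateSpace$ satisfy both, and write $q_0 \triangleq Q^{\invdemanddist}(1) = 1 - \invdemanddist(1)$. Claim \ref{clm:nonadaptive1}'s constraint says $R^{\invdemanddist}$ is constant on $[q_0,1]$; call that value $c$. Now $R^{\invdemanddist}(1) = 1\cdot \invdemanddist^{-1}(0) = \inf(\mathrm{supp}\,\invdemanddist)$ and $R^{\invdemanddist}(q_0) = q_0\cdot \invdemanddist^{-1}(\invdemanddist(1))$. I would argue $\invdemanddist^{-1}(\invdemanddist(1)) = 1$: using non-atomicity (so $\invdemanddist$ is continuous) together with the constancy of $R^{\invdemanddist}$ on $[q_0,1]$ — a flat stretch of $\invdemanddist$ on some $[a,1)$ with $a<1$ would make $\invdemanddist^{-1}(1-q)$ drop strictly below $1$ as $q$ crosses $q_0$ from above, producing a jump in $R^{\invdemanddist}$ at $q_0$, a contradiction. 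Hence $R^{\invdemanddist}(q_0) = q_0$, so $c = q_0$, and then also $\inf(\mathrm{supp}\,\invdemanddist) = R^{\invdemanddist}(1) = q_0$. The relation $q\,\invdemanddist^{-1}(1-q) = q_0$ for $q\in[q_0,1]$, via the substitution $\invdemand = q_0/q$, yields $\invdemanddist(\invdemand) = 1 - q_0/\invdemand$ on $[q_0,1]$; together with $\invdemanddist(\invdemand) = 0$ for $\invdemand<q_0$ and Claim \ref{clm:nonadaptive2}'s constraint ($\invdemanddist(\invdemand) = \invdemanddist(1) = 1-q_0$ on $[1,\infty)$ and an atom of mass $q_0$ at $+\infty$), this is exactly the CDF in \eqref{eq:worstthresholddist} with $\hat{q}$ replaced by $q_0$.

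It then remains only to pin down $q_0$, which I would do with the single remaining degree of freedom, namely $\invdemanddist \in \InvDistStateSpace$, i.e. $\expect[\invdemand\sim\invdemanddist]{1/\invdemand} = \ExpDemand$. For the CDF just derived the density on $[q_0,1]$ is $q_0/\invdemand^2$ and the atom at $+\infty$ contributes nothing, so $\expect{1/\invdemand} = \int_{q_0}^1 q_0\,\invdemand^{-3}\,d\invdemand = \tfrac12\!\left(1/q_0 - q_0\right)$; setting this equal to $\ExpDemand$ and taking the positive root of $q_0^2 + 2\ExpDemand q_0 - 1 = 0$ gives $q_0 = \frac{1}{\ExpDemand + \sqrt{\ExpDemand^2+1}} = \hat{q}$, so $\invdemanddist = \hat{\invdemanddist}$, establishing uniqueness. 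A short check closes the argument: $\hat{\invdemanddist}$ indeed lies in the constrained set, since its EAFR curve satisfies $R^{\hat{\invdemanddist}}(q) = q\cdot(\hat{q}/q) = \hat{q}$ for all $q \in [\hat{q},1] = [Q^{\hat{\invdemanddist}}(1),1]$ (Claim \ref{clm:nonadaptive1}'s constraint) and by construction $\hat{\invdemanddist}$ has the support demanded by Claim \ref{clm:nonadaptive2}. Reading off the common value $R^{\hat{\invdemanddist}}(\hat{q}) = \hat{q} = \frac{1}{\ExpDemand+\sqrt{\ExpDemand^2+1}}$ simultaneously yields \eqref{eq:best-non-adaptive-lemma}, hence \eqref{eq:best-non-adaptive-v3} of \Cref{lem:nonadaptlowbound}.

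The step I expect to be the main obstacle is the identification $c = q_0$: it requires a careful treatment of the generalized inverse $\invdemanddist^{-1}$ near $\invdemand = 1$, ruling out both a flat piece of $\invdemanddist$ on $[a,1)$ and any residual atom at $\invdemand = 1$, and it is where the non-atomicity assumption and the exact form of Claim \ref{clm:nonadaptive1}'s constraint are genuinely used. Everything downstream — solving for the support of $\invdemanddist$, the quadratic for $\hat{q}$, and verifying that $\hat{\invdemanddist}$ meets both constraints — is routine bookkeeping.
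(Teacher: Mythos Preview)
Your proof is correct and follows essentially the same approach as the paper: both fix $q_0 = Q^{\invdemanddist}(1)$, use the flat EAFR constraint to deduce $\invdemanddist(\invdemand)=1-q_0/\invdemand$ on $[q_0,1]$, append the tail from Claim~\ref{clm:nonadaptive2}, and then solve $\expect{1/\invdemand}=\ExpDemand$ for $q_0=\hat q$. The only cosmetic differences are that you compute $\expect{1/\invdemand}$ directly from the density (the paper routes through the CDF of $1/\invdemand$) and that you make explicit the identification $c=q_0$, which the paper obtains in one line by writing $\Tfrvar^{\invdemanddist}(q_0)=1$ ``by definition'' of $q_0$.
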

We prove Claim \ref{clm:nonadaptive3} in Appendix~\ref{apx:nonadaptclaim3}. Since the EAFR curve $R^{\hat{\invdemanddist}}(q)$ has a maximum value of $\hat{q} = \frac{1}{\ExpDemand +\sqrt{\ExpDemand^2+1}}$, we have shown that the optimal \fixedthresh\ policy always achieves an EAFR of at least $\hat{q}$.
This completes the proof of \Cref{lem:nonadaptlowbound}.\hfill\halmos

\subsection{Ex-ante Fairness}
\label{subsec:ex-ante}
In this section,
 we study our second notion of fairness, namely, ex-ante fairness.
 As we did for ex-post fairness, we first establish an upper bound on the ex-ante fairness guarantee  achievable by any policy.  More importantly, we then show that our \PolicyNameAb\ policy achieves this worst-case ex-ante fairness bound.
The following theorem  establishes our matching upper and lower bounds on the ex-ante fairness guarantee.

\begin{theorem}[Ex-ante Fairness Guarantee of \PolicyNameAb\ Achieves Upper Bound]
\label{thm:exante}
Given a fixed number of agents $\NumAgents\in\mathbb{N}$ and supply scarcity $\ExpDemand\in \NonNegReals$, no sequential allocation policy obtains an ex-ante fairness guarantee (see \Cref{def:polperf}) greater than $\LowerboundFunExAnte$, defined as
\begin{equation}
\label{eq:kappa-exante}
   \LowerboundFunExAnte 
   =\begin{cases}
1-\frac{\ExpDemand}{4}, &\ExpDemand \in [0, 1) \\
\ExpDemand\left(1-\frac{\ExpDemand}{4}\right), & \ExpDemand \in [1, 2) \\
1, &\ExpDemand \in [2, +\infty)
\end{cases}.
\end{equation}
Further, the \PolicyNameAb ~policy achieves an ex-ante fairness guarantee of at least $\LowerboundFunExAnte$.
\end{theorem}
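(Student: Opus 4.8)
The plan is to route both halves of the theorem through a single scalar function. Let $\phi(z)\triangleq\min\{1,1/z\}$ for $z\ge0$ (with $\phi(0)\triangleq1$), and let $\phi^{\ast\ast}$ be its lower convex envelope; an elementary computation (the chord from $(0,1)$ to $(2,1/2)$ is tangent to $1/z$ at $z=2$) gives $\phi^{\ast\ast}(z)=1-z/4$ on $[0,2]$ and $\phi^{\ast\ast}(z)=1/z$ on $[2,\infty)$, so that $\phi^{\ast\ast}(\ExpDemand)/\DetGuar$ equals exactly the claimed $\LowerboundFunExAnte$ in each of the three regimes (recall $\DetGuar=\min\{1,1/\ExpDemand\}$). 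I will show that the worst-case marginal expected fill rate of every agent is at most $\phi^{\ast\ast}(\ExpDemand)$ for an appropriate instance under any policy, and at least $\phi^{\ast\ast}(\ExpDemand)$ under \PolicyNameAb; dividing by $\DetGuar$ then gives both statements.

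\emph{Upper bound.} I would exhibit a single ``fully correlated'' hard instance that works verbatim for every $\NumAgents$ (this is precisely what makes the bound $\NumAgents$-independent, in contrast to the ex-post case). Fix a nonnegative random variable $D$ with $\expect{D}=\ExpDemand$ and set $\Demandi{i}=D/\NumAgents$ for all $i$, so the entire demand sequence is revealed at the first arrival. For any policy, on a sample path with $D>0$ we have $\sum_i\Alloci{i}\le\min\{1,\sum_i\Demandi{i}\}=\min\{1,D\}$, hence $\frac{1}{\NumAgents}\sum_i\frac{\Alloci{i}}{\Demandi{i}}=\frac{\sum_i\Alloci{i}}{D}\le\phi(D)$, while on sample paths with $D=0$ this average equals $1=\phi(0)$ by convention. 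Therefore $\ExanteObj^{\DemandJoinDist}(\Policy)=\min_i\expect{\Alloci{i}/\Demandi{i}}\le\frac{1}{\NumAgents}\sum_i\expect{\Alloci{i}/\Demandi{i}}=\expect{\frac{1}{\NumAgents}\sum_i\Alloci{i}/\Demandi{i}}\le\expect{\phi(D)}$ for every policy $\Policy$. Choosing the law of $D$ with $\expect{D}=\ExpDemand$ that minimizes $\expect{\phi(D)}$ — the two-point law on $\{0,2\}$ with $\Pr[D=2]=\ExpDemand/2$ when $\ExpDemand<2$, and the point mass at $\ExpDemand$ when $\ExpDemand\ge2$, neither deterministically zero — yields $\expect{\phi(D)}=\phi^{\ast\ast}(\ExpDemand)$, so the ex-ante fairness of any policy on this instance is at most $\phi^{\ast\ast}(\ExpDemand)/\DetGuar=\LowerboundFunExAnte$, hence so is its ex-ante fairness guarantee.

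\emph{Lower bound for \PolicyNameAb.} Let $\ExpDemandi{i}\triangleq\expect{\sum_{j\ge i}\Demandi{j}\condition\DemandVec_{[1:i-1]}}$ and $\nu_i\triangleq\ExpDemandi{i}/\Supplyi{i}$, so $\nu_1=\ExpDemand$. The crux — and the step I expect to be the main obstacle — is the supermartingale property $\expect{\nu_{i+1}\condition\DemandVec_{[1:i-1]}}\le\nu_i$, which telescopes to $\expect{\nu_i}\le\ExpDemand$ for every $i$. To prove it, note $\expect{\Demandi{i}+\ExpDemandi{i+1}\condition\DemandVec_{[1:i-1]}}=\ExpDemandi{i}$ by the tower rule, and a short case analysis according to whether the \PolicyNameAb\ cap binds at agent $i$ shows $\nu_{i+1}\le\frac{\Demandi{i}+\ExpDemandi{i+1}}{\Supplyi{i}}$ pointwise: equality (via $\Supplyi{i+1}=\Supplyi{i}\frac{\ExpDemandi{i+1}}{\Demandi{i}+\ExpDemandi{i+1}}$) when the cap does not bind, and $\nu_{i+1}=\frac{\ExpDemandi{i+1}}{\Supplyi{i}-\Demandi{i}}\le\frac{\Demandi{i}+\ExpDemandi{i+1}}{\Supplyi{i}}$ (which rearranges to $\Demandi{i}(\Supplyi{i}-\Demandi{i}-\ExpDemandi{i+1})\ge0$) when it binds; the degenerate case $\ExpDemandi{i+1}=0$ gives $\nu_{i+1}=0$ by Remark~\ref{rem:never-run-out}. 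Conditioning the pointwise inequality on $\DemandVec_{[1:i-1]}$ gives the claim. Now \PolicyNameAb's update rule makes agent $i$'s fill rate equal to $\phi(\rho_i)$ with $\rho_i\triangleq\frac{\Demandi{i}+\ExpDemandi{i+1}}{\Supplyi{i}}$ (and $\ge\phi(\rho_i)$ when $\Demandi{i}=0$), and $\expect{\rho_i\condition\DemandVec_{[1:i-1]}}=\nu_i$, so $\expect{\rho_i}\le\ExpDemand$. Bounding $\phi$ below by the supporting line of $\phi^{\ast\ast}$ at $\ExpDemand$ — namely $\phi(z)\ge1-z/4$ when $\ExpDemand<2$ (equivalently $(z/2-1)^2\ge0$) and $\phi(z)\ge2/\ExpDemand-z/\ExpDemand^2$ when $\ExpDemand\ge2$ (by AM--GM on $\frac1z+\frac{z}{\ExpDemand^2}$ when $z\ge1$, trivially when $z<1$) — and calling this affine, decreasing function $\ell$, we get $\expect{\Alloci{i}/\Demandi{i}}\ge\expect{\ell(\rho_i)}=\ell(\expect{\rho_i})\ge\ell(\ExpDemand)=\phi^{\ast\ast}(\ExpDemand)$ for every $i$. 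Dividing by $\DetGuar$ yields $\ExanteObj^{\DemandJoinDist}(\PolicyNameAb)/\DetGuar\ge\phi^{\ast\ast}(\ExpDemand)/\DetGuar=\LowerboundFunExAnte$ for every feasible $\DemandJoinDist$, matching the upper bound; the remaining work is routine bookkeeping around zero demands and exhausted supply using the paper's conventions.
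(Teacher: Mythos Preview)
Your proposal is correct and follows essentially the same route as the paper's proof in Appendix~\ref{apx:ex-ante}. The upper bound uses the same hard instances (total demand $D$ supported on $\{0,2\}$ when $\ExpDemand<2$, point mass at $\ExpDemand$ when $\ExpDemand\ge2$), and the lower bound hinges on the same inductive inequality $\expect{(\Demandi{i}+\ExpDemandi{i+1})/\Supplyi{i}}\le\ExpDemand$---which you phrase as a supermartingale for $\nu_i$, and the paper as Claim~\ref{clm:exanteproof}---followed by bounding the fill rate below by the convex envelope $\phi^{\ast\ast}$ (the paper's $h$) and applying Jensen; your explicit supporting-line step is just Jensen unpacked.
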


Like its counterpart for ex-post fairness, $\LowerboundFunExAnte$ depends on the supply scarcity, $\ExpDemand$, and is at its lowest when expected demand equals supply, which highlights the loss due to stochasticity when trying to achieve efficiency and equity ex ante. 
However, unlike the bound for ex-post fairness, the ex-ante fairness bound is independent of the number of agents. In fact, this bound is identical to the ex-post fairness bound in the single-agent case, i.e. $\LowerboundFunExAnte = \LowerboundFunExPostOne$ (which is shown by the dotted line in \Cref{fig:AdaptivityGap}).

For intuition about this relationship, note that one feasible policy is to allocate supply to each agent proportional to their expected demand. Since the ex-ante problem only depends on marginal FRs, this reduces ex-ante fairness to the minimum ex-ante fairness across $\NumAgents$ single-agent instances (where in each instance, the supply scarcity is $\ExpDemand$). In a single-agent instance, ex-ante fairness is equal to ex-post fairness, which implies that any lower bound on single-agent ex-post fairness also serves as a lower bound on ex-ante fairness with $\NumAgents$ agents.
Furthermore, since demands can be perfectly correlated, any single-agent instance can be expressed as an instance with $\NumAgents$ agents for any $\NumAgents \in \mathbb{N}$. This implies that any upper bound on single-agent ex-post fairness also serves as a upper bound on ex-ante fairness with $\NumAgents$ agents.

To prove the upper bound in \Cref{thm:exante}, we build on the hard instances from the proof of \Cref{thm:hardness-ex-post}. To prove the lower bound, we use ideas similar to the proof of Theorem \ref{thm:expost}. We
 show that when following the \PolicyNameAb\ policy, the expected FR for each agent is a decreasing and convex function of the ratio of expected remaining demand to remaining supply upon their arrival. We inductively place an upper bound on the ex-ante expected value of that ratio for each agent, which enables us to provide a lower bound on ex-ante fairness.
See Appendix \ref{apx:ex-ante} for a detailed proof.

\section{Numerical Results}
\label{sec:numerics}
\revcolor{
We complement our theoretical developments with an illustrative case study 
motivated by the allocative challenges of 
rationing medical supplies in the midst of 
a pandemic. 
First, we describe a simple compartmental model  that governs the need for medical supplies experienced by different locations, which is based on models used to forecast the COVID-19 pandemic. 
{We illustrate that in such a setting, the demand is sequential, highly variable, and has complex correlation structure due to network effects.}
Then, we study this dynamic rationing problem within our framework and illustrate the effectiveness of our \PolicyNameAb\ policy by comparing it to (i) its  theoretical guarantee (presented in \Cref{thm:expost}), (ii) the optimal \fixedthresh\ policy (defined in ~\Cref{subsec:AdaptivityGap}), and (iii) a DP approach (similar to the one described in Appendix \ref{apx:DP}). We conclude this section by 
demonstrating  the efficiency of our policy as well as its robustness to mis-specification of model parameters.

\subsection{Pandemic Demand Model}
We model the spread of a pandemic across inter-connected locations using a standard SEIR (susceptible, exposed, infectious, recovered) model, with different compartments for different locations. Such compartmental models are commonly-used and frequently influence practice. (See, e.g., \citealt{morozova2021one}.) Each location represents an agent in our allocation framework, and we use the peak number of infected individuals in a location as a proxy for that agent's need.

\subsubsection{SEIR Model Dynamics}
In the following, we briefly overview the SEIR model that we borrow from the epidemiology literature \citep{anderson1992infectious, diekmann2000mathematical}.
{In this model, there are $L$ locations where location $i$ has population $p_i$.}
Individuals interact with each other according to a time-varying rate $\transmissparam_t$ (which we will specify later). Individuals in location $i$ predominantly interact with members of their own location, but a small fraction $\adjacencyparam_i$ of their interactions occur with a member of an adjacent {location} $j$, where $j$ is equally likely to be any of location $i$'s neighbors (denoted by the set $N_i$). 
When a susceptible individual interacts with an infectious individual, the susceptible individual becomes exposed. Exposed individuals become infectious after a random time drawn from an exponential
distribution with mean $1/\exptoinfparam$, and infectious individuals become recovered after a random time drawn from an exponential
distribution with mean $1/\inftorecparam$.  
For large networks, this system approaches the deterministic
dynamics presented in \eqref{eq:rec}, with $S_i(t)$, $E_i(t)$, $I_i(t)$, $R_i(t)$ representing the fraction of the population that is susceptible, exposed, infected, and recovered (respectively) in location $i$ at time $t$. 
We will use these dynamics to simulate pandemic trajectories. 
\begin{align}
    \frac{\partial S_i(t)}{\partial t} \ &= \ -\transmissparam_t S_i(t) \left((1-\adjacencyparam_i)I_i(t) + \frac{\adjacencyparam_i}{|\neighborset_i|}\sum_{j \in \neighborset_i}I_j(t) \right), \qquad \qquad \quad  \frac{\partial I_i(t)}{\partial t} \ = \ \exptoinfparam E_i(t) - \inftorecparam I_i(t) \nonumber \\
    \frac{\partial E_i(t)}{\partial t} \ &= \  \transmissparam_t S_i(t) \left((1-\adjacencyparam_i)I_i(t) + \frac{\adjacencyparam_i}{|\neighborset_i|}\sum_{j \in \neighborset_i}I_j(t) \right) - \exptoinfparam E_i(t), \qquad
    \frac{\partial R_i(t)}{\partial t} \ = \ \inftorecparam I_i(t) \label{eq:rec}
\end{align}

\subsubsection{SEIR Model Primitives}
We calibrate the parameters of our numerical studies in accordance with epidemiological estimates specific to the COVID-19 pandemic \citep{aleta2020modelling, li2020substantial,  park2020reconciling, walsh2020duration}. In particular, the state transition rates $\exptoinfparam$ and $\inftorecparam$ are set to be deterministic, while the initial interaction rate $\transmissparam_0$ is drawn from a (truncated) Normal distribution. {This distribution} is parameterized such that the range for the basic reproductive number in our model ($R_0 = \transmissparam_0/\inftorecparam$) reflects the wide range of estimates that appeared in the early stages of the pandemic. The interaction rate $\transmissparam_t$ then varies over time, according to the dynamics 
\begin{equation}
\transmissparam_t = \transmissparam_0 \  \text{exp}\left(\sum_{\tau =1}^t\randomwalk_\tau\right),
\end{equation} 
where each $\randomwalk_\tau$ is an independently-drawn Normal random variable with mean $\hypermean$ and standard deviation $\hypersd$. The term $\sum_{\tau =1}^t\randomwalk_\tau$ represents a random walk. This modeling choice is based on approaches taken in \citet{bhatt2020semi, morozova2021one}, and  \citet{mishra2021comparing}, as well as in the first version of the widely-used Los Alamos National Laboratory forecasts.\footnote{\revcolor{See \url{https://covid-19.bsvgateway.org/}.}} Allowing for stochastic changes in the interaction rate is a natural way of capturing the unknown future impact of factors such as seasonality, disease mutation, government interventions, and changes in individual behavior. The mean and standard deviation of the random walk that governs the time-varying interaction rate are drawn from uniform distributions which allow for a wide range of plausible trajectories. 
Finally, we fix the population size of all locations to be $1000$. We summarize the instance parameters and their associated distributions in \Cref{table:variables}.

\begin{table}[t]
\footnotesize
    \caption{\revcolor{SEIR model parameters and distributions.}}
    \centering
    \revcolor{\begin{tabular}{cll}
    & &\\ \hline
    Notation & \multicolumn{1}{c}{Description} & \multicolumn{1}{c}{Distribution}\\
    \hline 
         $\population_i$ & Population of Location $i$ & $1,000$ \ (Fixed, equal for all $i$) \\
         $\adjacencyparam_i$ & Prop. of interaction from adjacent locations & 0.015 (Fixed, equal for all $i$) \\
         $\exptoinfparam$ & Transition rate from $E$ to $I$ (per day) & 0.25 \ (Fixed) \\
         $\inftorecparam$ & Transition rate from $I$ to $R$  (per day) & 0.10 \ (Fixed) \\
            $\transmissparam_0$ & Initial interaction rate  (per day)& Normal$(0.4, 0.15)$, truncated at $0$ and $1$ \\
         $\hypermean$ & Interaction rate random walk mean& Unif$(-0.008, 0.002)$ \\
         $\hypersd$& Interaction rate random walk std. dev.& Unif$(0.0, 0.1)$ \\
         \hline
    \end{tabular}}
    \label{table:variables}
\end{table}

\subsubsection{Instance and Simulated Demands}
Our simulation study focuses on a simple four-location setting with equal populations where adjacency is given by a line graph.\footnote{\revcolor{
Focusing on this simple setting is partly necessitated by the challenges of computing the DP solutions for larger problems.}} (Location 1 is adjacent only to Location 2, Location 2 is adjacent to Locations 1 and 3, etc.) 
Initially, $0.01\%$ of the population of Location 1 is assumed to be exposed, and all other individuals are susceptible. 
In each simulation, we first independently draw instance parameters  according to distributions in \Cref{table:variables}, and we then draw realizations for the random walk governing the time-varying interaction rate.
Using those parameters, we then compute the demand trajectory based on the dynamics of the SEIR model described by the system of equations in \eqref{eq:rec}. As mentioned before, we use the peak number of infected individuals in each location as a proxy for its need, i.e., $\Demand_i = \max_{t} I_i(t)$.

\subsubsection{Properties of the Demand Sequence} Having described our simulation setting, we next illustrate that even in this simple setting, the demand is highly variable and there is a complex correlation structure across locations. 
In the left panel of \Cref{fig:sims}, we present a histogram of total demand across 1,000 simulations of the model. We highlight that the standard deviation is large relative to expected demand: the coefficient of variation in the simulations is $0.662$. Furthermore, in the right three panels of \Cref{fig:sims}, we present a scatter plot which highlights the strong correlation between locations' demands that results from our SEIR model.  We remark that the complex correlation structure goes beyond adjacency-based network effects and further depends on the stochastic and time-varying nature of the interaction rate. Due to the linear nature of adjacency in this setting, the sequence of peak demands across the 1,000 simulations is consistent (first, location 1 reaches its peak demand, then location 2, etc.).\footnote{\revcolor{We remark that in general, it is possible for locations' demands to realize out-of-order. As discussed in \Cref{subsec:ourpolicy}, our \PolicyNameAb\ policy does not need to know the order of future arrivals; however, the consistent sequence avoids complications when computing the DP solution.}} The average temporal gap between peak demands is around three weeks in our simulations.

\begin{figure}[t]
    \centering
    \includegraphics[trim={0.0cm 0.5cm 0.0cm 0.2cm},clip,width = .98\textwidth]{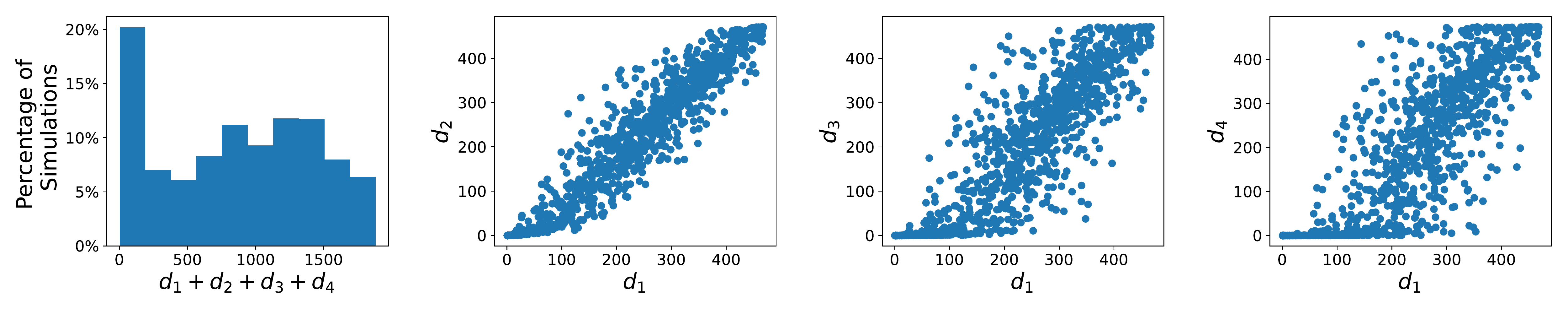}
    \caption{\revcolor{Left: Histogram of total demand. Middle and Right: Scatter plot of $(d_1, d_i)$ for $i \in \{2,3,4\}$.}}
\label{fig:sims}
\end{figure}

\subsection{Policy Performance}
For each of the 1,000 simulations of the setting described above, we numerically evaluate the performance of three policies: (i) our \PolicyNameAb\ policy, (ii) the optimal \fixedthresh\ policy, and (iii) a DP approach. We assume that initial supply is equal to the average demand across the simulations, which corresponds to a supply scarcity of $\ExpDemand = 1$. 

\subsubsection{Computing Policies}
In order to compute the allocation decision of our \PolicyNameAb\ policy we need to have access to conditional first moments of individual demand. Similarly, to compute the optimal \fixedthresh\ policy and the DP, we need access to the distribution of total demand and the full joint distribution, respectively.
Given that the joint distribution $\DemandJoinDist$ does not have an explicit representation, we approximate these quantities by their empirical averages. In particular, in order to estimate $\expect[\vec{\Demand}\sim\DemandJoinDist]{\ \sum_{j \in [i+1: \NumAgents]}\Demandi{j} \ \Big| \  \DemandVec_{[1:i]}}$, we first generate 1,000 sample paths for the setting described above. Then, for a given $\DemandVec_{[1:i]}$, we use a $k$-nearest neighbors approach (where $k=10$) to estimate conditional first moments of future demands.\footnote{\revcolor{Note that the $k$-nearest neighbors estimator serves as our oracle to generate samples for future demand: Fixing $\DemandVec_{[1:i]}$, it is unlikely that we have a sufficient number of sample paths with the exact same subsequence of demand. As such, we use the $k$ sample paths with the closest history to estimate the future demand. See Appendix \ref{apx:DP:hardness-input-model} for a more abstract discussion.}}
To compute the optimal \fixedthresh\ policy, we approximate $\sum_{i \in [\NumAgents]}\Demand_i$ by its empirical average, and we then numerically find the target threshold that performs best, which in this case is $\tau = 1$.

Since computing the DP requires the complete joint distribution of future demand, we increase the number of samples to 1,000,000 to provide a more accurate empirical average of the distribution.
We then follow the DP formulation presented in Appendix \ref{apx:DP:general} along with the discretization approach described in Appendix \ref{apx:DP:independent}. 
We stress that unlike the setting of Appendix \ref{apx:DP:independent}, here we have correlated demand. As such, we must augment the state space to include the entire demand history. 
Said differently, the Bellman equation we solve is a discrete version of \eqref{eq:bellman}.
However, given the small size of the problem, we are able to compute the DP solution when allowing for $50$ different discrete values for demand, which corresponds to $\epsilon \approx 0.01$ in the approach described in Appendix \ref{apx:DP:independent}.

\subsubsection{Policy Comparison}
In the first column of Table \ref{table:full_sims}, we present the average ex-post fairness for the aforementioned policies.\footnote{\revcolor{We remark that ex-post fairness is well-concentrated around its average value for all of the considered policies.}} In addition, we compute the average ex-post fairness achieved by the optimum offline policy. (For a given realization of demands $\DemandVec$, the optimum offline policy achieves a minimum FR of $\min\{1, s/\sum_{i \in [\NumAgents]}\Demand_i\}$.)
We make the following observations:

\begin{itemize} 
\item  {\bf PPA vs. Guarantee:} The \PolicyNameAb\ policy performs 30\% better than its theoretical guarantee of $0.60$ (from \Cref{thm:expost} when $\ExpDemand = 1$ and $\NumAgents = 4$) in this non-adversarial setting. 
\item {\bf PPA vs. Optimal TFR:} The \PolicyNameAb\ policy outperforms the optimal \fixedthresh\ policy by 44\%, and we highlight that the optimal \fixedthresh\ policy performs even worse than the \PolicyNameAb\ guarantee. Intuitively, in this pandemic-based setting where correlation is strong and the total demand has a relatively large coefficient of variation, adaptivity is particularly valuable.
\item {\bf PPA vs. DP:} The solution to the DP, despite being challenging to compute even in this simple setting, exhibits nearly identical performance to the \PolicyNameAb\ policy. This further implies that using additional information beyond first conditional moments does not provide significant benefit {in our numerical simulations}. 
\item {\bf PPA vs. Optimum Offline:}  The PPA policy even achieves $94\%$ of the optimal performance when the demand sequence is known in advance. (We remind that \Cref{prop:offlinecomparison} established that in general, no online policy can provide a constant-factor guarantee relative to this benchmark. However, in this practical instance, we are able to provide strong performance relative to the optimum offline.)
\end{itemize}

\begin{table}[h]
    \caption{\revcolor{Ex-post fairness and waste of three policies across 1,000 simulations when the sample paths are accurately drawn (columns 1-2) and for two different types of mis-specification (columns 3-6).}}
\footnotesize
    \centering
    \revcolor{\begin{tabular}{lcccccccc}
         & \multicolumn{2}{c}{Base Setting}&& \multicolumn{2}{c}{Mis-specified $\hypermean$}&&\multicolumn{2}{c}{Mis-specified $\inftorecparam$} \\\cmidrule{2-3} \cmidrule{5-6} \cmidrule{8-9}
        &Ex-post Fairness&Waste&&Ex-post Fairness&Waste&&Ex-post Fairness&Waste  \\\cmidrule{2-3} \cmidrule{5-6} \cmidrule{8-9}
         \PolicyNameAb\ Policy &0.782&0.007&&0.776&0.010&&0.778&0.008 \\
         Opt. \fixedthresh\ Policy \ \  &0.544&0&&0.469&0.224&&0.544&0  \\
        DP Solution &0.785&0.011&&0.783&0.016&&0.744&0.014 
        \\
        Optimum Offline &0.831&0&&0.831&0&&0.831&0 
    \end{tabular}}
    \vspace{4mm}

    \label{table:full_sims}
\end{table}

\subsubsection{Additional Practical Considerations} Moving beyond the comparison in terms of ex-post fairness, we also 
examine the allocative efficiency of our policy by reporting the average fraction of wasted supply in the second column of \Cref{table:full_sims}. Supply is \emph{wasted} whenever there is remaining supply that could have been allocated to an agent without exceeding their demand, which is undesirable in our motivating applications. Mathematically, the fraction of wasted supply is given by $\frac{\min\{s, \sum_{i \in [\NumAgents]}\Demandi{i}\} - \sum_{i \in [\NumAgents]}\Alloci{i}}{s}$. By definition, the optimum offline policy never wastes supply, and similarly, the optimal \fixedthresh\ policy in this setting -- where the optimal target fill rate is one -- also does not waste supply. The \PolicyNameAb\ policy does waste some supply in expectation, but this waste is minimal (less than $1\%$) and slightly less than the amount of waste under the DP solution.

Furthermore, in practice, various model parameters may be mis-specified. Therefore, it is practically important to rely on policies that are not too sensitive to such model mis-specifications. In the following, we examine the robustness of the three aforementioned policies and show that our \PolicyNameAb\ policy outperforms both the TFR policy and the DP in terms of robustness.
Specifically,
we consider two different scenarios of mis-specification in the parameters of the SEIR model. In scenario (i), when generating sample paths, the hyper-parameter $\hypermean$ (which governs the drift of the time-varying interaction rate) is sampled uniformly from the interval $[-0.05, 0.05]$ instead of the interval $[-0.08, 0.02]$, which leads to over-estimates of demand by an average of around $25\%$.\footnote{\revcolor{In a complementary scenario where the distribution of the hyper-parameter is sampled from $[-0.11, -0.01]$, the performance of the \PolicyNameAb\ policy is no worse.}} In scenario (ii),  the expected length of infection is underestimated by $20\%$ (i.e, $\inftorecparam$ is assumed to be $0.125$ instead of $0.1$), which leads to under-estimates of demand by an average of around $25\%$.\footnote{\revcolor{In a complementary scenario where the expected length of infection is over-estimated by $20\%$, the performance of the \PolicyNameAb\ policy is no worse.}}
In the middle (resp. right) two columns of \Cref{table:full_sims}, we present the ex-post fairness and the waste of all three policies on the same $1,000$ simulations, but when the policies are calibrated on sample paths generated by the mis-specified model from scenario (i) (resp. scenario (ii)).

We observe that the performance of the \PolicyNameAb\ policy remains relatively stable, despite the substantial levels of mis-specification. In contrast, the optimal \fixedthresh\ policy's performance changes somewhat dramatically in scenario (i). In this scenario, where demand is over-estimated, the optimal target fill rate drops precipitously from $1.0$ to $0.492$ due to a greater perceived risk of exhausting supply. This new target fill rate impacts both the ex-post fairness and the average fraction of supply that is wasted.  Similarly, the performance of the DP solution is substantially impacted in one of the two scenarios, this time in scenario (ii). In this scenario, when demand is under-estimated, the DP can be overly aggressive in its allocation decisions and end up with a near-$0$ minimum FR in certain simulations. The heavier left-tail for the distribution of the DP's minimum FR drives the average performance down, ultimately leading to a decrease in ex-post fairness by more than $5\%$.

We conclude this section by highlighting that we are only able to compute the DP solution due to limiting the setting to four agents. Yet despite this simplified setting, the solution to the DP also suffers from the additional shortcomings described in \Cref{rem:online}: it requires full distributional knowledge (which may be difficult to acquire and more prone to mis-specification), it achieves worse ex-ante fairness than the \PolicyNameAb\ policy (by nearly 5\% in these simulations), and its allocation is not transparent. Transparency is particularly important in this setting, as numerous states questioned the allocation procedures implemented by the federal government (see Footnote \ref{foot:untransparent} for one such example). In contrast, both the \PolicyNameAb\ policy and the optimal \fixedthresh\ policy follow strategies that can be easily explained to stakeholders. 
}

\section{Concluding Remarks, Extensions, and Future Directions}
We conclude the paper by summarizing our main findings, discussing a few extensions of our base framework, and listing a few future directions.
\label{sec:discussions}

\smallskip
\subsection{Summary} In this paper, we initiate the theoretical study of fair dynamic rationing by introducing a simple yet fundamental and well-motivated framework. In a nutshell, we design sequential policies for allocating limited supply to a sequence of arbitrarily correlated demands given an objective which encompasses the dual goals of efficiency and equity. Based on our formalized notions of ex-post and ex-ante fairness, we establish upper bounds on the fairness guarantees achievable by any sequential allocation policy which depend on the supply's scarcity level and the number of demanding agents.
More importantly, we show that our simple PPA policy achieves  the ``best of both worlds'' by attaining the upper bound on both the ex-post and ex-ante fairness guarantees.
In addition to enjoying optimal fairness guarantees, our PPA policy is practically appealing: it is interpretable as well as computationally efficient since it does not rely on
distributional knowledge beyond the conditional first moments.

Our framework 
lends itself to extensions such as considering generalized objectives and rationing multiple types of resources. More broadly, it serves as a base model for {theoretically} studying sequential allocation problems with an objective beyond utility maximization, which in turn opens several new research directions. 
In the rest of this section, we first discuss the aforementioned extensions of our base model and then finish the paper by discussing future directions.

\subsection{Extensions}
\label{subsec:variants}

\subsubsection{Generalized Social Welfare Objective Functions}
\label{subsec:differentobjectives}
Throughout the paper, we {have focused} on the minimum FR 
as a social welfare objective that combines elements of equity and efficiency (i.e., $\SocialWelfare (\vec{x})= \min_{i \in [\NumAgents]}\{ \frac{\Alloci{i}}{\Demandi{i}}\}$). 
However, this social welfare function---which is also known as the Rawlsian social welfare function thanks to the philosophical work of John Rawls \citeyearpar{rawls1973theory}---is only a special case of a more general 
class of social welfare functions that we call the \emph{weighted power mean (WPM) social welfare} {family of} functions. More precisely, this family is parameterized by $\FairParam \in [0, +\infty)$ and defined as 
\begin{equation}
\label{eq:social-welfare}
    \SocialWelfare_\FairParam(\vec{\Alloc}) \triangleq \begin{cases}\left( \frac{1}{\sum_{i \in [\NumAgents]} \Demandi{i}} \sum_{i \in [\NumAgents]} \Demandi{i} \left(\frac{\Alloci{i}} {\Demandi{i}}\right)^{1-\FairParam}\right)^{1/(1-\FairParam)}, &\FairParam \neq 1 \\
    \\
    \prod_{i \in [\NumAgents]} \left(\frac{\Alloci{i}} {\Demandi{i}}\right)^{\Demandi{i} / \sum_{i \in [\NumAgents]} \Demandi{i}}, &\FairParam = 1
    \end{cases}
\end{equation}
Note that the above is a weighted version of the celebrated power mean functions, introduced in \cite{atkinson1970measurement}, 
that provides a broad class of social welfare functions which balance equity and efficiency to varying degrees. 
Having weights proportional to the demands in \cref{eq:social-welfare} ensures that equity is measured {\em relative} to demand and not simply based on the absolute allocation.\footnote{Further, this family of functions has a one-to-one relationship with the $\alpha$-fairness social welfare functions introduced in \cite{mo2000fair}. In fact, the two families of functions are the same up to a transformation via a one-to-one, increasing function, which means that the maximizing vectors are identical for a given $\FairParam$.}
By varying the parameter $\alpha$ from $0$ to $+\infty$, the focus of the planner is shifted from extreme efficiency towards more equitable allocations. When $\FairParam = 0$, a utilitarian allocation (i.e., any allocation without waste) maximizes social welfare. In the limit as $\FairParam \rightarrow 1$, proportional fairness (i.e., a generalization of the Nash bargaining solution) maximizes social welfare. Finally, in the limit as $\FairParam \rightarrow +\infty$, maximizing the minimum FR maximizes social welfare. In fact, we highlight that the value of this social welfare function exactly approaches our objective in the base model (i.e., the minimum FR, or equivalently, the Rawlsian social welfare function).

For any parameter $\FairParam$ (including when $\FairParam \rightarrow +\infty$, which corresponds to our base model), the optimal policy when demands are deterministic is to allocate the supply proportionally. Such a policy achieves the optimal social welfare of $\DetGuar$ (defined in \cref{eq:det:fair}).\footnote{Note that {in a deterministic setting,} maximizing any social welfare objective function as in \cref{eq:social-welfare} is a concave maximization. By writing KKT conditions it is not hard to see that  any such function attains its maximum at a feasible proportional allocation, i.e., $x_i=\min\left\{d_i,\frac{d_i}{\sum_{j\in[n]}d_j}\right\}$, under deterministic demands. The maximum is then equal to $\DetGuar=\min\{1,1/\ExpDemand\}.$} However, the value of the parameter $\FairParam$ impacts the optimal policy when demands are stochastic. To study this impact, we naturally generalize our notion of ex-post fairness to WPM social welfare functions, i.e., ex-post fairness is given by  $\expect[\vec{\Demand}\sim\DemandJoinDist]{\SocialWelfare_{\FairParam}}/\DetGuar$. We remark that in the limit as $\FairParam \rightarrow +\infty$, this is equivalent to the notion of ex-post fairness introduced in \Cref{sec:prelim}.
In the following  corollary of  \Cref{thm:expost}, we establish that our \PolicyNameAb\ policy achieves an ex-post fairness guarantee of at least $\LowerboundFunEx$ {for any $\FairParam \in [0, +\infty)$.}

\begin{corollary}[PPA's Guarantee for WPM Objectives]
\label{cor:generalfunctions}
Given a fixed number of agents $\NumAgents\in\mathbb{N}$, supply scarcity $\ExpDemand\in\NonNegReals$, and any $\FairParam \in [0, +\infty)$, the \PolicyNameAb ~policy achieves an ex-post fairness guarantee of at least 
$\LowerboundFunEx$ (defined in \cref{eq:kappa-expost}) when social welfare is measured by a WPM function {(defined in \cref{eq:social-welfare})} with parameter $\FairParam$.
\end{corollary}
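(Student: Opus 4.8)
The plan is to reduce the WPM case to the Rawlsian (minimum-FR) case that is already handled by \Cref{thm:expost}, rather than re-running the inductive machinery of \Cref{lem:minfilllowerbound}. The key observation is that for any allocation vector $\vec{\Alloc}$ and demand vector $\vec{\Demand}$, the WPM social welfare $\SocialWelfare_\FairParam(\vec{\Alloc})$ is a weighted power mean of the individual fill rates $\{\Alloci{i}/\Demandi{i}\}_{i \in [\NumAgents]}$ with nonnegative weights $\Demandi{i}/\sum_j \Demandi{j}$ summing to $1$. Since a power mean (for any exponent, including the $\FairParam=1$ geometric-mean case and the $\FairParam\to\infty$ min case) always lies weakly above the minimum of its arguments, we get the pointwise bound $\SocialWelfare_\FairParam(\vec{\Alloc}) \geq \min_{i\in[\NumAgents]}\{\Alloci{i}/\Demandi{i}\}$ on every sample path. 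Crucially, this holds for the \PolicyNameAb\ allocations in particular. Taking expectations over $\vec{\Demand}\sim\DemandJoinDist$ preserves the inequality, so $\expect[\vec{\Demand}\sim\DemandJoinDist]{\SocialWelfare_\FairParam(\vec{\Alloc}^{\textsc{PPA}})} \geq \expect[\vec{\Demand}\sim\DemandJoinDist]{\min_{i}\{\Alloci{i}^{\textsc{PPA}}/\Demandi{i}\}} = \ExpostObj^{\DemandJoinDist}(\textsc{PPA})$.

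From here the argument is immediate: dividing both sides by the normalization factor $\DetGuar = \min\{1, 1/\ExpDemand\}$ (which is the same normalization used in the generalized notion of ex-post fairness for WPM objectives, since the deterministic-demand optimum is the proportional allocation achieving $\DetGuar$ for every $\FairParam$, as noted in the text), we obtain
\[
\frac{\expect[\vec{\Demand}\sim\DemandJoinDist]{\SocialWelfare_\FairParam(\vec{\Alloc}^{\textsc{PPA}})}}{\DetGuar} \;\geq\; \frac{\ExpostObj^{\DemandJoinDist}(\textsc{PPA})}{\DetGuar} \;\geq\; \LowerboundFunEx,
\]
where the last inequality is exactly \Cref{thm:expost}. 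Since this holds for every joint distribution $\DemandJoinDist$ with $\NumAgents$ agents and supply scarcity $\ExpDemand$, taking the infimum over such $\DemandJoinDist$ yields the claimed ex-post fairness guarantee of $\LowerboundFunEx$ for the WPM objective with parameter $\FairParam$.

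The only step requiring a little care — and the place I would expect to do actual (though still elementary) work — is verifying the pointwise domination $\SocialWelfare_\FairParam(\vec{\Alloc}) \geq \min_i\{\Alloci{i}/\Demandi{i}\}$ uniformly in $\FairParam \in [0,+\infty)$, handling the boundary and degenerate cases cleanly: the $\FairParam = 1$ geometric-mean branch of \cref{eq:social-welfare} separately, the convention $\Alloci{i}/\Demandi{i}=1$ when $\Alloci{i}=\Demandi{i}=0$, and agents with $\Demandi{i}=0$ (which receive weight zero and can be dropped from both the mean and the minimum without affecting either side). Once that monotonicity-of-power-means fact is in hand — it is standard, following from Jensen applied to the convex/concave map $t\mapsto t^{1-\FairParam}$ or $t \mapsto \log t$ — the corollary follows with no further use of the structure of the \PolicyNameAb\ policy beyond \Cref{thm:expost} itself.
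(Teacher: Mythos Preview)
Your proposal is correct and takes essentially the same approach as the paper: both reduce to \Cref{thm:expost} via the pointwise inequality $\SocialWelfare_\FairParam(\vec{\Alloc}) \geq \min_{i}\{\Alloci{i}/\Demandi{i}\}$, then divide by $\DetGuar$. The only cosmetic difference is that the paper justifies this inequality by constructing an auxiliary allocation $\vec{\Alloc}'$ with all fill rates equalized at the minimum (so $\SocialWelfare_\FairParam(\vec{\Alloc}')$ equals that minimum) and then invoking coordinatewise monotonicity of $\SocialWelfare_\FairParam$, whereas you invoke the standard ``weighted power mean is at least the minimum'' fact directly; these are two phrasings of the same elementary observation.
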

We prove Corollary \ref{cor:generalfunctions} in Appendix \ref{apx:cor:generalfunctions}.

\subsubsection{Rationing Multiple Types of Resources}
\label{subsec:multiplegoods}
In our base model, we assume that agents have demand for only a single type of resource. However, in many of the motivating applications that we consider, agents may have concurrent demand for multiple types of resources. For example, states may need many different types of medical supplies during the peak of a pandemic.

Our setup readily extends to the sequential allocation of $\NumGoods$ different resource types where an arriving agent {\em simultaneously} demands $\NumGoods$ types of supply. We allow the demands to be correlated across agents as well as resource types. For the sake of brevity, we refrain from repeating the setup and we simply augment our notation for various quantities (e.g., supply, demand, and allocation) by adding a superscript $j \in [\NumGoods]$. 
 In this generalized model, we define agent $i$'s utility to be their 
\emph{weighted} FR,  defined as:
\begin{align}
    \label{eq:wFR}
    \sum_{j \in [m]} \Weightj{j} \frac{\Alloci{i}^j}{\Demandi{i}^j},
\end{align}
where we normalize the weights $\Weightj{j}$ to satisfy $\sum_{j \in [m]} \Weightj{j} = 1$. 
A simple corollary of  \Cref{thm:expost}, as stated below, ensures that 
independently following our \PolicyNameAb\ policy for each resource achieves a lower bound on the expected minimum weighted FR which is a weighted sum of the expected minimum FR guaranteed by the \PolicyNameAb\ for one resource, i.e. $\LowerboundFunPj \max\{1, \ExpDemand^j\}$ for resource $j$.

\begin{corollary}[PPA's Guarantee on Expected Minimum Weighted FR]
\label{cor:multiplegoods}
{Consider any instance with $\NumAgents\in\mathbb{N}$ 
agents and $\NumGoods \in \mathbb{N}$ resources, where 
the initial supply for resource $j$ is $\Supply^{j} \in \NonNegReals$. For any  joint demand distribution over all agents and resources $\DemandJoinDist \in \Delta (\NonNegReals^{\NumAgents \times \NumGoods})$, independently following the \PolicyNameAb\ policy for each resource achieves an expected minimum weighted FR (as defined in \cref{eq:wFR}) of at least $\sum_{j \in [\NumGoods]} \Weightj{j} \LowerboundFunPnoarg\left(\frac{\ExpDemand^j}{\Supply^j}, \NumAgents\right) \max\left\{1, \frac{\ExpDemand^j}{\Supply^j}\right\}$, where $\ExpDemand^{j} \in \NonNegReals$ is the
expected total demand for resource $j$.}
\end{corollary}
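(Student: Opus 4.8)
The plan is to obtain this as a direct consequence of \Cref{thm:expost} applied separately to each of the $\NumGoods$ resources, combined through a ``minimum of a sum dominates a sum of minima'' inequality. First I would record the purely pointwise observation that for any realized demands $\{\Demandi{i}^{j}\}$ and allocations $\{\Alloci{i}^{j}\}$ and any weights satisfying $\sum_{j\in[\NumGoods]}\Weightj{j}=1$,
\begin{equation*}
\min_{i\in[\NumAgents]}\ \sum_{j\in[\NumGoods]}\Weightj{j}\,\frac{\Alloci{i}^{j}}{\Demandi{i}^{j}}\ \geq\ \sum_{j\in[\NumGoods]}\Weightj{j}\,\min_{i\in[\NumAgents]}\frac{\Alloci{i}^{j}}{\Demandi{i}^{j}},
\end{equation*}
which holds because the index $i^{*}$ attaining the outer minimum on the left satisfies $\Alloci{i^{*}}^{j}/\Demandi{i^{*}}^{j}\geq\min_{i}\Alloci{i}^{j}/\Demandi{i}^{j}$ simultaneously for every resource $j$. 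Taking expectations over $\DemandJoinDist$ and applying linearity of expectation, the expected minimum weighted fill rate is at least $\sum_{j\in[\NumGoods]}\Weightj{j}\,\expect[\DemandJoinDist]{\min_{i\in[\NumAgents]}\Alloci{i}^{j}/\Demandi{i}^{j}}$, so it suffices to lower bound, one resource at a time, the expected minimum (unweighted) fill rate that \PolicyNameAb\ attains for that resource.

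For the per-resource bound I would argue that, since \PolicyNameAb\ is run independently on each resource, the trajectory $(\Supplyi{i}^{j},\Demandi{i}^{j},\Alloci{i}^{j})_{i\in[\NumAgents]}$ evolves exactly according to the single-resource \PolicyNameAb\ update rule started from supply $\Supply^{j}$. Since fill rates are scale-invariant ratios and both the \PolicyNameAb\ update rule and the conditional first moments $\ExpDemand_{i+1}^{j}$ transform consistently under rescaling the supply and all of resource $j$'s demands by $1/\Supply^{j}$, this is literally an instance of the base model of \Cref{sec:prelim} with unit supply and supply scarcity $\ExpDemand^{j}/\Supply^{j}$. Invoking \Cref{thm:expost} for that instance gives
\begin{equation*}
\expect[\DemandJoinDist]{\min_{i\in[\NumAgents]}\frac{\Alloci{i}^{j}}{\Demandi{i}^{j}}}\ \geq\ \LowerboundFunPnoarg\left(\frac{\ExpDemand^{j}}{\Supply^{j}},\NumAgents\right)\cdot\DetGuar^{j},
\end{equation*}
where $\DetGuar^{j}=\min\{1,\Supply^{j}/\ExpDemand^{j}\}=\big(\max\{1,\ExpDemand^{j}/\Supply^{j}\}\big)^{-1}$ is the per-resource normalization factor of \Cref{def:normalization:factor}; I am using here that \Cref{thm:expost} bounds the \emph{fairness guarantee}, i.e.\ the ratio of the expected minimum fill rate to $\DetGuar^{j}$. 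Summing these bounds with weights $\Weightj{j}$ produces exactly the bound in the statement.

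The only step that requires care --- and hence the main obstacle --- is pinning down the conditioning used by the per-resource \PolicyNameAb: in the multi-resource model the planner may compute $\ExpDemand_{i+1}^{j}$ by conditioning on the entire observed history across all resources up to period $i$, rather than on resource $j$'s own past demands alone. I would dispose of this by noting that the backwards-induction proof of \Cref{lem:minfilllowerbound}, which underlies \Cref{thm:expost}, never uses anything about the conditioning filtration beyond the tower property of conditional expectations, while the recursions defining $\beta_{i}^{j}$ and $\Supplyi{i}^{j}$ involve only resource $j$'s realized demands and allocations. Consequently the invariant \eqref{eq:invariant}, specialized to resource $j$, continues to hold with respect to this enlarged filtration, and evaluating it at $i=1$ yields the same lower bound on $\expect[\DemandJoinDist]{\min_{i}\Alloci{i}^{j}/\Demandi{i}^{j}}$. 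Everything else --- the pointwise inequality, linearity of expectation, and the rescaling --- is routine, so the complete write-up should be short.
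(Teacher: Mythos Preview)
Your approach is the same as the paper's: apply the pointwise inequality $\min_{i}\sum_{j}\Weightj{j}\,\Alloci{i}^{j}/\Demandi{i}^{j}\ge\sum_{j}\Weightj{j}\,\min_{i}\Alloci{i}^{j}/\Demandi{i}^{j}$, take expectations, and invoke \Cref{thm:expost} one resource at a time; your extra care about which filtration the per-resource \PolicyNameAb\ conditions on (and the observation that \Cref{lem:minfilllowerbound} only needs the tower property) is a point the paper simply elides. One caveat: your derivation correctly yields $\sum_{j}\Weightj{j}\,\LowerboundFunPnoarg(\ExpDemand^{j}/\Supply^{j},\NumAgents)\cdot\DetGuar^{j}$ with $\DetGuar^{j}=\big(\max\{1,\ExpDemand^{j}/\Supply^{j}\}\big)^{-1}$, so the $\max$ in the corollary's displayed bound belongs in the denominator, not the numerator---this appears to be a typo in the paper (it recurs verbatim in the paper's own proof and in the Endowment Optimization program), and your expression, not the literal statement, is what is actually established.
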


In addition, 
{since demand can be correlated across agents, we can re-use the hard instances of \Cref{thm:hardness-ex-post} to construct a joint distribution which establishes an upper-bound on the performance of any policy matching the lower bound given in Corollary \ref{cor:multiplegoods}. We state this upper bound} as a corollary of \Cref{thm:hardness-ex-post} {below.}

\begin{corollary}[Upper Bound on Expected Minimum Weighted FR]
\label{cor:multiplegoodsupper}
{For any $\NumAgents\in\mathbb{N}$ 
agents, $\NumGoods \in \mathbb{N}$ resources, and any initial supply for resource $j$ of $\Supply^{j} \in \NonNegReals$, there exists a joint demand distribution over all agents and resources $\DemandJoinDist \in \Delta (\NonNegReals^{\NumAgents \times \NumGoods})$ for which no policy can achieve an expected minimum weighted FR greater than $\sum_{j \in [\NumGoods]} \Weightj{j} \LowerboundFunPnoarg\left(\frac{\ExpDemand^j}{\Supply^j}, \NumAgents\right) \max\left\{1, \frac{\ExpDemand^j}{\Supply^j}\right\}$, where $\ExpDemand^{j} \in \NonNegReals$ is the
expected total demand for resource $j$.}
\end{corollary}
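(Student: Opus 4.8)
The plan is to build the hard joint distribution by \emph{coupling}, across the $\NumGoods$ resources, the single-resource hard instances from the proof of \Cref{thm:hardness-ex-post}, and then to certify an upper bound on every policy's expected minimum weighted FR via a combined factor-revealing linear program whose optimum is $\sum_{j\in[\NumGoods]}\lambda^j B_j$, where $B_j\triangleq\kappa_{\textsc{p}}\!\left(\mu^j/s^j,\NumAgents\right)\min\{1,s^j/\mu^j\}$ is the tight single-resource bound of \Cref{thm:hardness-ex-post} for resource $j$ with supply $s^j$ and total expected demand $\mu^j$ (that is, the $j$-th summand of the lower bound in \Cref{cor:multiplegoods}). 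Following \Cref{thm:hardness-ex-post}, I would split according to whether each resource is over- or under-demanded, taking for resource $j$ the corresponding single-resource hard instance rescaled to supply $s^j$ and total expected demand $\mu^j$. Both of those instances are governed by a ``surge length'' $\sigma^j$ --- the number of initial agents that realize positive (and, for resource $j$, equal) demand $\delta^j$, while all later agents realize zero demand for resource $j$ --- and, exactly as in the proof of \Cref{thm:hardness-ex-post}, any policy on the coupled instance is summarized by expected allocations $y_i^j=\mathbb{E}[x_i^j]$ obeying the per-resource feasibility constraints $\sum_{i\in[\NumAgents]}y_i^j\le s^j$ and $y_i^j\le\delta^j$, with the per-scenario minimum weighted FR controlled by these $y_i^j$'s.

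The transparent case is when every resource is over-demanded: there each $\sigma^j$ is uniform on $\{1,\dots,\NumAgents\}$ and $\delta^j=\tfrac{2\mu^j}{\NumAgents+1}$, so I couple by a single surge length $\sigma\sim\mathrm{Unif}\{1,\dots,\NumAgents\}$, letting agent $i$ demand $\delta^j$ of resource $j$ exactly when $i\le\sigma$ (one checks that resource $j$'s total expected demand is $\mu^j$). In scenario $\sigma$ the agents $i>\sigma$ have zero demand and hence weighted FR $1$, so the minimum weighted FR is at most the weighted FR of agent $\sigma$; taking expectations, averaging over $\sigma$, and exchanging the order of summation yields
\begin{equation*}
\mathbb{E}\!\left[\min_{i\in[\NumAgents]}\sum_{j\in[\NumGoods]}\lambda^j\,\frac{x_i^j}{d_i^j}\right]\;\le\;\sum_{j\in[\NumGoods]}\frac{\lambda^j}{\NumAgents\,\delta^j}\sum_{i\in[\NumAgents]}y_i^j\;\le\;\sum_{j\in[\NumGoods]}\frac{\lambda^j\,s^j}{\NumAgents\,\delta^j}\;=\;\sum_{j\in[\NumGoods]}\lambda^j\,\frac{(\NumAgents+1)\,s^j}{2\NumAgents\,\mu^j},
\end{equation*}
and $\tfrac{(\NumAgents+1)s^j}{2\NumAgents\mu^j}$ equals $B_j$ in the over-demanded regime. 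This argument uses only per-resource feasibility, so it is unaffected by any cross-resource information the coupling might reveal.

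For under-demanded (or mixed) resources the single-resource hard instances have resource-dependent, non-uniform surge-length distributions, so I would couple them comonotonically --- draw one $U\sim\mathrm{Unif}[0,1]$ and set each $\sigma^j$ to be $U$ pushed through resource $j$'s surge-length quantile function --- so that all resources surge together and the agents that are critical in the single-resource analyses line up across resources rather than scattering. The bound should then follow from the combined factor-revealing LP in the variables $\{y_i^j\}$ together with the per-scenario weighted-FR variables, whose constraint set is the union over resources of the single-resource constraints from \Cref{thm:hardness-ex-post} plus the monotonicity of the minimum weighted FR across scenarios; a feasible dual is the $\lambda^j$-weighted superposition of the per-resource dual certificates constructed in the proof of \Cref{thm:hardness-ex-post}, which caps the LP value at $\sum_j\lambda^j B_j$. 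Since \Cref{cor:multiplegoodsupper} is stated in absolute terms, no final rescaling is needed. The main obstacle is precisely this last step: because $\min_i\sum_j\lambda^j x_i^j/d_i^j$ is \emph{not} $\sum_j\lambda^j\min_i x_i^j/d_i^j$, the single-resource arguments cannot simply be added, and one must verify both that the comonotone coupling grants no policy useful leverage (ideally, as in the clean over-demanded case, the argument should bottom out in per-resource feasibility only) and that the caps $y_i^j\le\delta^j$ --- exactly what pulls the single-resource bound below $1$ when $\mu^j<s^j$ --- survive the superposition of dual certificates. The over-demanded computation above, where ``plug in agent $\sigma$'' bypasses all of this, is the template the general argument should follow.
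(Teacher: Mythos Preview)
Your argument for the all-over-demanded case is correct and coincides with the paper's: with a common surge length $\sigma$, bounding the minimum weighted FR by agent $\sigma$'s weighted FR and using $\sum_i y_i^j \le s^j$ gives exactly $\sum_j \lambda^j \tfrac{(\NumAgents+1)s^j}{2\NumAgents\mu^j}$.

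The gap is in the mixed regime. Your comonotone coupling does \emph{not} align the surge lengths: the over-demanded quantile function is $U \mapsto \lceil \NumAgents U \rceil$, whereas an under-demanded resource $j'$ maps $U$ to $0$ on $[0,p_0^{j'})$ and to $\lceil \NumAgents(U-p_0^{j'})/(1-p_0^{j'}) \rceil$ thereafter, so for generic $U$ the last nonzero agent differs across resources. With misaligned $\sigma^j$'s there is no single agent whose weighted FR controls every resource simultaneously, the identity $\min_i \sum_j = \sum_j \min_i$ you need fails, and you are pushed into the combined LP you only sketch. Whether the $\lambda^j$-weighted superposition of the single-resource duals remains feasible against the cross-resource scenario structure is precisely what you flag as the obstacle, and it is not verified.

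The paper avoids the LP entirely via one observation you are missing: in \emph{both} the over- and under-demanded hard instances of \Cref{thm:hardness-ex-post}, the surge length \emph{conditioned on being positive} is uniform on $[\NumAgents]$. Hence one can couple by drawing a single $\sigma \sim \mathrm{Unif}[\NumAgents]$ and, for each under-demanded resource, a zero/nonzero coin; whenever two resources both have nonzero demand, their last nonzero agent is exactly the same agent $\sigma$. A short exchange argument (within each resource all nonzero demands are equal, the weights $\lambda^j$ are common across agents, and each agent is equally likely to be last) shows that without loss of generality the optimal policy's allocations are nonincreasing in $i$ for every resource, so agent $\sigma$ attains the minimum FR for every resource with nonzero demand and FR $1$ for the rest. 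On every sample path the minimum weighted FR therefore equals $\sum_j \lambda^j \min_i x_i^j/d_i^j$ exactly, and taking expectations splits the bound into $\sum_j \lambda^j$ times the single-resource upper bound from \Cref{thm:hardness-ex-post}. No combined factor-revealing LP is needed.
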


Together, these two corollaries (which we prove in Appendix \ref{apx:discussionsection}) establish that independently following the \PolicyNameAb\ policy for each resource $j$ provides the best possible guarantee on the expected minimum weighted FR. Consequently, we can use our \PolicyNameAb\ policy to
shed light on how the social planner can prepare for demand across multiple types of resources.
If the initial endowment of different resource types is not exogenously set, then the social planner can solve an outer endowment optimization problem to maximize the guarantee on the expected minimum weighted FR subject to a budget constraint.

We remark that such an endowment optimization problem is a max-max-min problem where the social planner first optimizes the initial endowment across resource types subject to a budget constraint (the outer problem). Then, given the initial endowment, the social planner  maximizes over policies the minimum over demand distributions of our objective, i.e., the expected minimum weighted FR among agents.
We solve the outer maximization of the multiple resource-type problem by determining the optimal initial endowment across resource types when the social planner independently follows the \PolicyNameAb\ policy for each resource. To be concrete, suppose the social planner has a fixed budget $\Budget$ that can be used to procure an initial endowment $\vec{\Supply} = (\Supply^1, \Supply^2, \dots, \Supply^\NumGoods)$.
Further, suppose the per unit cost of resource $j \in [m]$ is $\Costj{j}$. Then, the outer endowment optimization problem can be formulated as follows: 
\begin{equation}
\label{eq:budget-opt}
\tag{\textsc{Endowment Optimization}}
\arraycolsep=1.4pt\def\arraystretch{1}
\begin{array}{lllr}
 \underset{\displaystyle\vec{\MultiGoodSupply}\in{\NonNegReals^\NumGoods}}{\textrm{max}}\quad\quad&
\displaystyle \sum_{j \in [m]} \Weightj{j} \LowerboundFunPnoarg\left(\frac{\ExpDemand^j}{\MultiGoodSupply^j}, \NumAgents\right) \max\left\{1, \frac{\ExpDemand^j}{\MultiGoodSupply^j}\right\}
\qquad\qquad&\text{s.t.}&
\\[1.4em]
& \displaystyle \Budget  ~\geq~ \sum_{j \in [m]}\Costj{j}\MultiGoodSupply^j & &
\end{array}
\end{equation}

We highlight that to formulate this optimization problem, we crucially use the parameterized characterization of 
the ex-post fairness guarantee in Theorem \ref{thm:expost}, as opposed to the worst-case guarantee for any set of parameters.  Further, we remark that the above maximization problem is linearly separable and concave in the decision variables $\vec{\MultiGoodSupply}$, meaning that it can be solved efficiently.\footnote{It is not difficult to check that the function $\LowerboundFunPnoarg\left(\frac{\ExpDemand^j}{\MultiGoodSupply^j}, \NumAgents\right) \max\left\{1, \frac{\ExpDemand^j}{\MultiGoodSupply^j}\right\}$ is concave in $\MultiGoodSupply^j$ for any choice of $\ExpDemand^j$ and $\NumAgents$; check \cref{eq:kappa-expost} for a definition of $\kappa_\textsc{p}(\cdot, \cdot)$. We omit this purely algebraic proof for brevity.} Based on Corollaries \ref{cor:multiplegoods} and \ref{cor:multiplegoodsupper}, the optimal solution $\vec{\MultiGoodSupply}^*$, combined with independently implementing the \PolicyNameAb\ policy for each resource, is indeed the optimal solution of the max-max-min multiple resource-type problem.

\subsection{Future Directions}
Our paper 
can be viewed as an analog of the {classic} prophet inequality problem~\citep{krengel1978semiamarts,samuel1984comparison} for equitably allocating divisible goods. As such, similar to prophet inequalities, many interesting variants of our setting arise. We discussed two such variants above, and for both, we established an achievable lower bound by employing our \PolicyNameAb\ policy. However, in the former variant, we do not establish a matching upper bound.  Establishing tight bounds on the achievable performance in such a setting---which may require the use of a different policy---is an interesting direction for future research.   
Further, understanding the inefficiency (unused supply) which may occur in sequential allocation due to our focus on an egalitarian objective is a fruitful research direction that we plan to pursue. 
Finally, here we made no assumption about the correlation structure underlying the demand sequence. It would be compelling to investigate whether including a (well-motivated) correlation structure can result in improved fairness guarantees. 

\section*{Acknowledgment}
The authors would like to thank Itai Ashlagi, Amin Saberi, and Ed Kaplan for helpful comments and insights at early stages of this work.

\setlength{\bibsep}{0.0pt}
\bibliographystyle{plainnat}
\OneAndAHalfSpacedXI
{\footnotesize
\bibliography{refs}}
 
 \newpage
 \renewcommand{\theHsection}{A\arabic{section}}
 \begin{APPENDIX}{}

\section{\texorpdfstring{Missing Proofs of \Cref{subsec:expostupperbound}}{}}
\label{apx:upperboundproofs}
\subsection{Proof of \texorpdfstring{\Cref{thm:hardness-ex-post}}{}  (\texorpdfstring{\Cref{subsec:expostupperbound}}{})}
\label{apx:upper-bound}
We prove the theorem by considering two separate cases corresponding to the over-demanded regime ($\ExpDemand\geq 1+\tfrac{1}{\NumAgents}$) and the under-demanded regime ($\ExpDemand < 1+\tfrac{1}{\NumAgents}$). For each regime, we provide an instance of the problem under which no sequential allocation policy obtains ex-post fairness larger than $\LowerboundFunEx$ restricted to that regime.

{\bf Over-demanded regime ($\bm{\ExpDemand \geq 1 +\frac{1}{\NumAgents}}$):} Consider an instance with $\NumAgents$ equally likely scenarios, where in scenario $\sigma$ all agents $i \in [\sigma]$ have demand $\Demandi{i} = \frac{2\ExpDemand}{\NumAgents+1}$. This instance is depicted in \Cref{fig:hardexample}. We remark that the total expected demand is equal to $\ExpDemand$, simply because
$$\sum_{\sigma \in [\NumAgents]} \frac{1}{\NumAgents} \cdot \frac{2\ExpDemand}{\NumAgents+1} \cdot \sigma = \ExpDemand.$$

In such a setting, whenever agent $i$ has non-zero demand, every agent $j$ where $j < i$ must also have had non-zero demand. Since the policy cannot distinguish among scenarios $i, i+1, \dots, \NumAgents$, its allocation decision must be independent of the scenario. Therefore, any policy can be described by a set of allocations $\vec{\AppAlloc} = (\AppAlloci{1}, \AppAlloci{2}, \dots, \AppAlloci{\NumAgents})$, such that if agent $i$ has non-zero demand, then they receive an expected allocation $\AppAlloci{i}$.  Furthermore, when making the allocation decision for agent $\NumAgents$, there is only one possible history: every other agent also had non-zero demand. Thus, any feasible sequential allocation policy must respect the constraint $\sum_{i \in [\NumAgents]} \AppAlloci{i} \leq 1$. 

Let us define $\AppMinFilli{\sigma}$ as the expected minimum FR of the given policy in scenario $\sigma$ (i.e., if only the first $\sigma$ agents have non-zero demand). By convention, we set $\AppMinFilli{0} = 1$ and we must have $\AppMinFilli{\sigma} \leq \AppMinFilli{\sigma-1}$ by definition. In addition, $\AppMinFilli{\sigma}$ must be less than the expected FR of agent $\sigma$, so $\AppMinFilli{\sigma} \leq \frac{(\NumAgents+1)}{2\ExpDemand}\AppAlloci{\sigma}$. Given $\vec{\AppMinFill}$, the expected minimum FR in this instance is equal to $ \frac{1}{\NumAgents}\sum_{\sigma \in [\NumAgents]} \AppMinFilli{\sigma}$. Based on these constraints and the objective, we can formulate a linear program whose optimal solution is an upper bound on the expected minimum FR achievable by any feasible sequential allocation policy. This linear program was originally presented in \Cref{subsec:expostupperbound}, but we replicate it here (\textsc{Primal-LP1}) along with its dual program (\textsc{Dual-LP1}).

\begin{equation*}
\arraycolsep=1.4pt\def\arraystretch{1}
\begin{array}{llll|lll}
(\textsc{Primal-LP1})& &&&~~(\textsc{Dual-LP1})  && \\[1.4em]
\underset{\displaystyle\vec{\AppAlloc},\vec{\AppMinFill}\in \NonNegReals^\NumAgents}{\textrm{max}}\quad\quad&
\displaystyle\frac{1}{\NumAgents}\sum_{\sigma\in[\NumAgents]}\AppMinFilli{\sigma}
\quad\qquad\qquad\qquad\qquad&\text{s.t.}&
&\quad\quad\underset{\substack{\displaystyle\vec{\gamma},\vec{\delta}\in\NonNegReals^\NumAgents\\\displaystyle \omega\in \NonNegReals}}{\textrm{min}}\qquad\quad &\displaystyle\omega + \gamma_1 
\quad\quad\qquad\qquad&\text{s.t.}\qquad\qquad \\[1.4em]
& \displaystyle \AppMinFilli{\sigma}~\leq~\frac{(\NumAgents+1) \AppAlloci{\sigma}}{2\ExpDemand} & \sigma\in[\NumAgents]& & &\displaystyle\omega~\geq~\frac{(\NumAgents+1)}{2\ExpDemand}\delta_i&i\in[\NumAgents]\\[1.4em]
& \displaystyle \AppMinFilli{\sigma}~\leq~\AppMinFilli{\sigma-1},~~\AppMinFilli{0}=1 & \sigma\in[\NumAgents]& & &\displaystyle\delta_i~\geq~\frac{1}{\NumAgents}+\gamma_{i+1}-\gamma_i\qquad\qquad&i\in[\NumAgents-1]\\[1.4em]
& \displaystyle\sum_{i\in[\NumAgents]}\AppAlloci{i}\leq 1 & & & &\displaystyle\delta_\NumAgents~\geq~\frac{1}{\NumAgents}-\gamma_\NumAgents&\\[1.4em]
\end{array}
\end{equation*}

To upper-bound the value of the program \textsc{Primal-LP1}, we find a feasible assignment for the dual program \textsc{Dual-LP1}. Consider the assignment where $\delta_i = \frac{1}{\NumAgents}$ and $\gamma_i = 0$ for all $i \in [\NumAgents]$, and where $\omega = \frac{\NumAgents+1}{2 \NumAgents \ExpDemand}$.  Under this assignment, all dual variables are non-negative and all constraints are satisfied (in fact, all are tight). Thus, this assignment is feasible in \textsc{Dual-LP1}. It also attains an objective value of $\frac{\NumAgents+1}{2\NumAgents \ExpDemand}$. By weak duality, this represents an upper bound on the optimal value of \textsc{Primal-LP1}, and hence an upper bound on the expected minimum FR of any policy in the over-demanded regime.

\begin{figure}
    \centering
     \includegraphics[trim={5cm 2cm 5cm 2cm},clip,width=.65\textwidth]{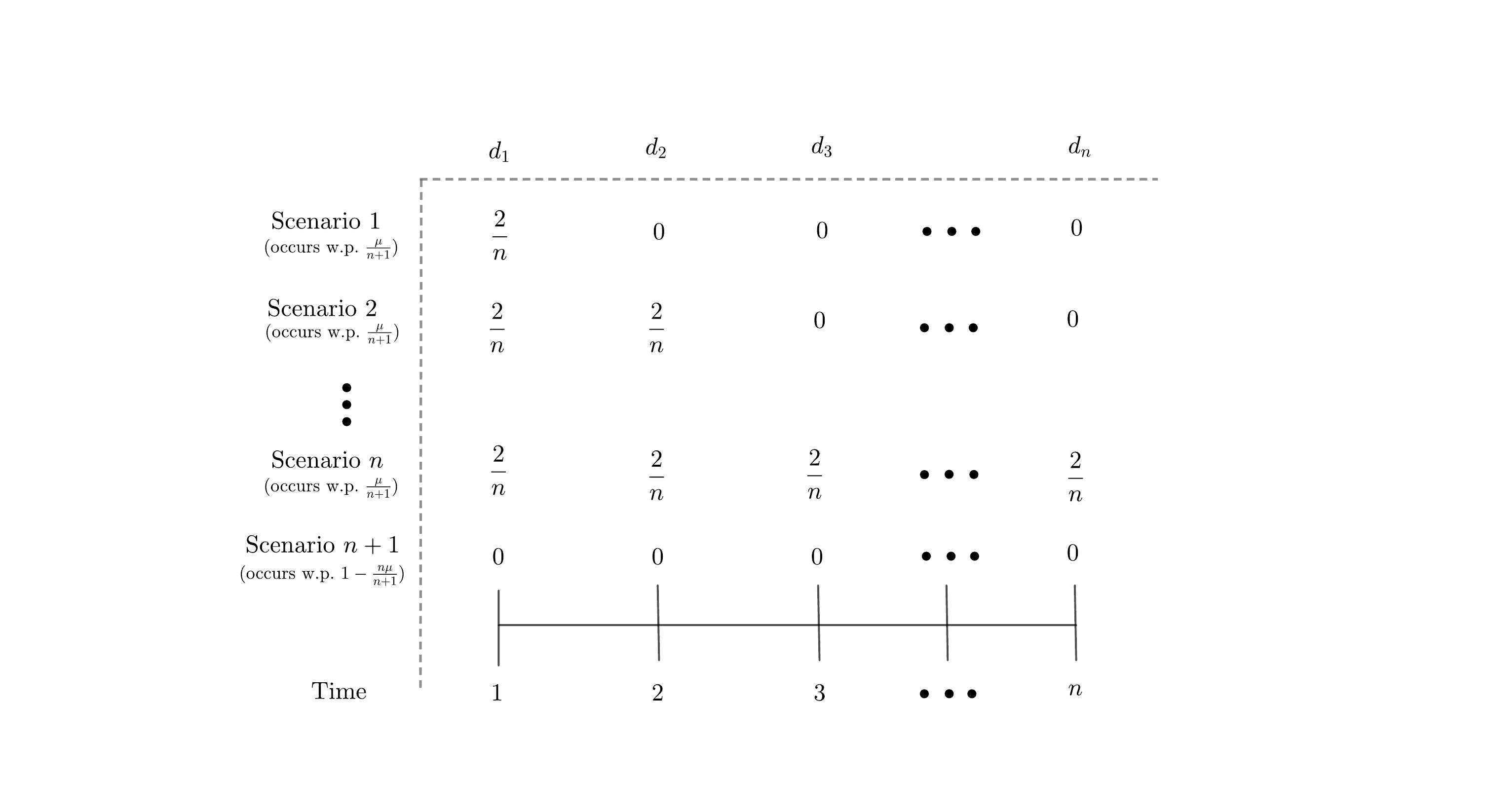}    \caption{The instance which establishes an upper bound of $\bm{\LowerboundFunEx}$ when $\bm{\ExpDemand < 1 +\frac{1}{\NumAgents}}$ (which we refer to as the under-demanded regime).}
    \label{fig:hardexample_underdemand}
\end{figure}

{\bf Under-demanded regime ($\bm{\ExpDemand < 1 +\frac{1}{\NumAgents}}$):} Consider an instance with $\NumAgents+1$ scenarios, where the first $\NumAgents$ scenarios each occur with equal probability of $\frac{\ExpDemand}{\NumAgents+1}$ and scenario $\NumAgents+1$ occurs with probability $1- \frac{\NumAgents\ExpDemand}{\NumAgents+1} $. In scenario $\NumAgents+1$, there is no demand. In scenario $\sigma$ for $\sigma \in [\NumAgents]$, all agents $i \in [\sigma]$ have demand $\Demandi{i} = \frac{2}{\NumAgents}$. This instance is depicted in \Cref{fig:hardexample_underdemand}. We remark the total expected demand is equal to $\ExpDemand$, simply because
$$
\sum_{\sigma \in [\NumAgents]} \frac{\ExpDemand}{\NumAgents+1} \cdot\frac{2}{\NumAgents}\cdot \sigma = \ExpDemand.
$$

As was the case in the over-demanded regime, any sequential allocation policy can be  described by a set of allocation decisions such that if agent $i$ has non-zero demand, then they receive an expected allocation $\AppAlloci{i}$. We again define $\AppMinFilli{\sigma}$ as the expected minimum FR in scenario $\sigma$, and we note that $\AppMinFilli{\NumAgents+1} = 1$. Thus, the expected minimum FR in this instance is equal to $\frac{\ExpDemand}{\NumAgents+1}\sum_{\sigma\in[\NumAgents]}\AppMinFilli{\sigma} + 1-\frac{\NumAgents\ExpDemand}{\NumAgents+1}$. By imposing the constraints described for \textsc{Primal-LP1}, we can formulate a slightly different linear program whose optimal solution is an upper bound on the expected minimum FR of any feasible sequential allocation policy in the under-demanded regime. This linear program (\textsc{Primal-LP2}), along with its dual program (\textsc{Dual-LP2}), is presented below.

\begin{equation*}
\arraycolsep=1.4pt\def\arraystretch{1}
\begin{array}{llll|lll}
(\textsc{Primal-LP2})& &&&~~(\textsc{Dual-LP2})  && \\[1.4em]
\underset{\displaystyle\vec{\AppAlloc},\vec{\AppMinFill}\in \NonNegReals^\NumAgents}{\textrm{max}}\quad\quad&
\displaystyle\frac{\ExpDemand}{\NumAgents+1}\sum_{\sigma\in[\NumAgents]}\AppMinFilli{\sigma} + 1-\frac{\NumAgents\ExpDemand}{\NumAgents+1}
\quad\quad\quad\qquad&\text{s.t.}&
&\quad\quad\underset{\substack{\displaystyle\vec{\gamma},\vec{\delta}\in\NonNegReals^\NumAgents\\\displaystyle \omega\in \NonNegReals}}{\textrm{min}}\quad\quad &\displaystyle\omega + \gamma_1  + 1-\frac{\NumAgents\ExpDemand}{\NumAgents+1}
\quad\quad\qquad\qquad&\text{s.t.} \\[1.4em]
& \displaystyle \AppMinFilli{\sigma} ~\leq~\frac{\NumAgents \AppAlloci{\sigma}}{2} & \sigma\in[n]& & &\displaystyle\omega~\geq~\frac{\NumAgents}{2}\delta_i&i\in[\NumAgents]\\[1.4em]
& \displaystyle \AppMinFilli{\sigma}~\leq~\AppMinFilli{\sigma-1},~~\AppMinFilli{0}=1 & \sigma\in[\NumAgents]& & &\displaystyle\delta_i~\geq~\frac{\ExpDemand}{\NumAgents+1}+\gamma_{i+1}-\gamma_i&i\in[\NumAgents-1]\\[1.4em]
& \displaystyle\sum_{i\in[\NumAgents]}\AppAlloci{i}\leq 1 & & & &\displaystyle\delta_\NumAgents~\geq~\frac{\ExpDemand}{\NumAgents+1}-\gamma_\NumAgents&\\[1.4em]
\end{array}
\end{equation*}

To upper-bound the value of the program \textsc{Primal-LP2}, we find a feasible assignment for its dual. Consider $\delta_i = \frac{\ExpDemand}{\NumAgents+1}$ and $\gamma_i = 0$ for all $\sigma \in [\NumAgents]$, and $\omega = \frac{\NumAgents\ExpDemand}{2(\NumAgents+1)} $. Under this dual assignment, all dual variables are non-negative and all constraints are satisfied (and again tight). Thus, this assignment is feasible in \textsc{Dual-LP2}. It also attains an objective value of $ \frac{\NumAgents\ExpDemand}{2(\NumAgents+1)} 1-\frac{\NumAgents\ExpDemand}{\NumAgents+1}= 1- \frac{\NumAgents\ExpDemand}{2(\NumAgents+1)}$. By weak duality, this represents an upper bound on the optimal value of $\textsc{Primal-LP2}$, and hence an upper bound on the expected minimum FR attainable by any sequential allocation policy when $\ExpDemand < 1+ \frac{1}{\NumAgents}$.

We conclude the proof by scaling the obtained upper bounds on the expected minimum FR by our normalization factor, namely $\DetGuar = \min\{1, 1/\ExpDemand\}$. This establishes an upper bound of $\LowerboundFunEx$ on the ex-post fairness guarantee (see \Cref{def:polperf}) achievable by any sequential allocation policy. \hfill\Halmos

\revcolor{
\subsection{\texorpdfstring{Proof of \Cref{prop:offlinecomparison} (\Cref{subsec:expostupperbound})}{}}
\label{apx:offline}
To establish this result, we make use of the hard instance when $\ExpDemand \geq 1+1/\NumAgents$ which was presented in the proof of \Cref{thm:hardness-ex-post} and illustrated in \Cref{fig:hardexample}. For ease of reference, we describe the instance below.

In this instance, there are $\NumAgents$ possible equally-likely scenarios, i.e.,  scenario $\sigma$ happens with probability $1/\NumAgents$ for $\sigma \in [\NumAgents]$. 
In scenario $\sigma$, the first $\sigma$ agents have equal demand of $\frac{2\ExpDemand}{\NumAgents+1}$ and the rest have no demand. As established in the proof of \Cref{thm:hardness-ex-post}, the expected demand in this instance is $\ExpDemand$, and no online policy can achieve an expected minimum FR greater than $\frac{\NumAgents+1}{2\NumAgents \ExpDemand}$.

The optimal offline solution is to divide supply proportionally across all \emph{realized} demand. This solution achieves a minimum FR of either $1$ or $\frac{1}{\sum_{i\in[\NumAgents]}\Demandi{i}}$. For this example, we will assume that $\frac{2\ExpDemand}{\NumAgents+1} \geq 1$, which means that each demand exceeds supply. This ensures that the minimum FR achieved by the optimal offline solution is exactly $\frac{1}{\sum_{i\in[\NumAgents]}\Demandi{i}}$.

Hence, the expected minimum FR achieved by the optimal offline solution is simply equal to 
\begin{align}
\expect[\vec{\Demand}\sim\DemandJoinDist]{ \frac{1}{\sum_{i\in[\NumAgents]}\Demandi{i}}} \quad = \quad \frac{1}{\NumAgents}\left(\sum_{\sigma \in [\NumAgents]} \frac{\NumAgents+1}{2\ExpDemand \sigma}\right) \nonumber 
\quad \geq  \quad \frac{\NumAgents+1}{2\NumAgents\ExpDemand}\int_{1}^{\NumAgents+1} \frac{1}{x}\ \partial x \label{eq:offline1} \quad = \quad \frac{\NumAgents+1}{2\NumAgents\ExpDemand} \log(\NumAgents+1) \nonumber
\end{align}
The inequality holds due to the summation representing a left Riemann sum of a decreasing function. As noted above, the upper bound on the 
expected minimum FR of any online policy is $\frac{\NumAgents+1}{2\NumAgents\ExpDemand}$. Thus, the ratio between the two is at least $\log(\NumAgents+1)$, which completes the proof of \Cref{prop:offlinecomparison}. \hfill\Halmos
}

\revcolor{
\section{Dynamic Programming for Optimum Online (\texorpdfstring{\Cref{subsec:expostupperbound}}{})}
\label{apx:DP}
In this section we describe the dynamic programming for computing the optimum online policy for maximizing the expected minimum FR. Given a joint demand distribution $\DemandJoinDist$, the optimum online policy $\Policy^*$ maximizes $\ExpostObj^{\DemandJoinDist}(\Policy^*)=\expect[\Vec{\Demandi{}}\sim\DemandJoinDist]{\underset{i\in[n]}{\min}\left\{\frac{\Alloci{i}^*}{\Demandi{i}}\right\}}$, where $\{\Alloci{i}^*\}_{i\in[n]}$ denote the allocation decisions of policy $\Policy^*$ when is run on the stochastic demand vector $\vec{d}\sim\DemandJoinDist$. In Appendix \ref{apx:DP:general} we describe this DP in its most general form, including for instances with a continuous state space. However, even in the case of a discrete state space, such a DP suffers from the curse of dimensionality when demand is arbitrarily correlated. We elaborate more on this issue in Appendix \ref{apx:DP:hardness-input-model} by providing additional details on the computational model (as well as the input model) that we consider. In Appendix \ref{apx:DP:independent}, we focus on the special case of independent demands, where the curse of dimensionality does not occur; we show that by appropriate discretization, we can obtain a fully polynomial-time {(near-optimal)} approximation scheme (FPTAS) for this DP. 
Finally, in Appendix \ref{apx:DPexample}, we present examples that illustrate the drawbacks of the DP approach beyond computational challenges. By convention, in the remaining of this section we assume that demands and supply are normalized such that the total supply $s = 1$.
\label{apx:DP:details}
\subsection{Exponential-sized DP for the General Case}
\label{apx:DP:general}

 We start describing our dynamic programming by its state space. Upon the discrete arrival time of agent $i$, the state of our dynamic programming captures the index of the current agent $i \in [n]$, the sample path of all previous agents' realized demands $\vec{d}_{[1:i-1]}$, the remaining supply $s_i$, and the minimum FR $f_i$ up to (not including) the arrival of agent $i$. Note that it is necessary to include the entire sample path of previous demands in our state, as the demand is arbitrarily correlated. Thus, the conditional distribution of the vector of future demands, and hence the performance of the optimum online policy, depends on the realized sample path so far.
 As we discuss in more detail in Appendix \ref{apx:DP:hardness-input-model}, the size of such a state space can be exponential in $n$ (even after discretization), as there might be exponentially many sample paths $\vec{d}$ that have non-zero probability of happening.

Now, we define $\mathcal{V}^{\DemandJoinDist}(i,\vec{d}_{[1:i-1]},s_i,f_i)$ to be the expected minimum FR of the optimum online policy starting from the state $\left(i,\vec{d}_{[1:i-1]},s_i,f_i\right)$. The goal of the dynamic programming is filling the table $\mathcal{V}^{\DemandJoinDist}\left(\cdot\right)$ for all possible state entries. 
For ease of exposition, we add a dummy agent $i=n+1$ at the end of the time horizon, who deterministically has $0$ demand. As the base of the dynamic programming, we set $\mathcal{V}^{\DemandJoinDist}(n+1,\vec{d}_{[1:n]},s_{n+1},f_{n+1})=f_{n+1}$ for all possible values of $\vec{d}_{[1:n]}\in\textrm{support}(\DemandJoinDist)$, $s_{n+1}\in[0,1]$ and $f_{n+1}\in[0,1]$. Now the DP table can be updated in a backward fashion (i.e., from time $i=n$ to $i=1$) following this simple Bellman equation:
\begin{equation}
\label{eq:bellman} 
    \mathcal{V}^{\DemandJoinDist}(i,\vec{d}_{[1:i-1]},s_i,f_i)=\expect[]{\underset{x_i\in\left[0, \min(s_i,d_i)\right]}{\max}\left\{ \mathcal{V}^{\DemandJoinDist}\left(i+1,\vec{d}_{[1:i]},s_i-x_i,\min(f_i,\frac{x_i}{d_i})\right)\right\}\Big|\vec{d}_{[1:i-1]}}
\end{equation}
Given the filled DP table for all possible values of the state space and given a state $\vec{d}_{[1:i-1]}$, $s_i$ and $f_i$ ahead of the arrival of agent $i$, upon realization of $d_i$ the optimum online policy picks the allocation decision $x^*_i(\vec{d}_{[1:i]},s_i,f_i)$ such that:
$$
x^*_i(\vec{d}_{[1:i]},s_i,f_i)=\underset{x_i\in\left[0, \min(s_i,d_i)\right]}{\argmax}\left\{ \mathcal{V}^{\DemandJoinDist}\left(i+1,\vec{d}_{[1:i]},s_i-x_i,\min(f_i,\frac{x_i}{d_i})\right)\right\}
$$
\subsection{Discussion on the Input Model and Computing DP}
\label{apx:DP:hardness-input-model} 

Having described the Bellman equation for the DP, we next discuss its computation. For the sake of this discussion, let us assume that all demands are discrete and can take $\Delta$ different values. Even in this special case, the support of the joint distribution $\DemandJoinDist$ can be $\Delta^n$ because we do not impose any restrictions on the structure of the joint distribution. 
Given that the DP table scales with the cardinality of the support of  $\DemandJoinDist$, it cannot be computed in polynomial time {(except for special cases, e.g., when the demands are mutually independent, as we discuss in Appendix \ref{apx:DP:independent})}. Said differently, in general even if we fix $i$, $s_i$, and $f_i$, the function $\mathcal{V}^{\DemandJoinDist}(i,\vec{d}_{[1:i-1]},s_i,f_i)$ can take $\Delta^{i-1}$ different values, corresponding to the number of possible realized
demand histories $\vec{d}_{[1:i-1]}$.\footnote{\revcolor{We would like to highlight that if the support of $\DemandJoinDist$ has size equal to $L\in\mathbb{N}$, i.e., if this distribution only assigns non-zero probabilities to $L$ different demand sequences, then the size of the DP table will be polynomial in $L$. Hence, under a naive unary  computational model where any algorithm should first read the input distribution as a $L$-dimensional vector in $[0,1]^n$, DP's running time can be thought of as a polynomial in its input size $L$. However, in principle, $L$ can be as large  as $\Delta^n$ (hence exponentially large in $n$), which makes reading the input distribution computationally inefficient and highly non-practical. As we explain in the next paragraph, we consider a different computational model to circumvent the issue of reading the input distribution. (We emphasize that this computational model does not reduce the exponential size of the DP table's state space.)}}

At the same time, computing first-order moments for such distributions can also be challenging: we cannot read the distribution as an input unless $\DemandJoinDist$ has a succinct representation. However, we argue that the aforementioned challenge 
can be circumvented when computing our \PolicyNameAb\ policy, but not when computing the DP. 
In particular, \revcolor{instead of considering a computational model in which any algorithm first reads the input joint distribution $\DemandJoinDist$ explicitly, we consider a more practical computational model with \emph{sample access} to this distribution, i.e.,} we assume that we have access to an oracle that can generate sample paths from distribution $\DemandJoinDist$. More precisely, given $\vec{d}_{[1:i-1]}$, the oracle can generate independent samples $\vec{d}_{[i:n]}$ according to the joint distribution conditioned on $\vec{d}_{[1:i-1]}$. Equipped with such an oracle, 
 at any time during the run of our \PolicyNameAb\ policy 
we can accurately estimate $\expect[\vec{\Demand}\sim\DemandJoinDist]{\ \sum_{j \in [i: \NumAgents]}\Demandi{j} \ \Big| \  \DemandVec_{[1:i-1]}}$ 
to make the next allocation decision. 
However, computing the optimum online policy according to  \eqref{eq:bellman} would still require filling an exponential-sized DP table, even when assuming access to such an oracle.
This further highlights the computational challenges of finding the DP solution. 

Finally, we remark that having access to such an oracle is practical and well-motivated. In fact, demand forecasting models in contexts such as a pandemic rely on parametric models for which the set of parameters are drawn from appropriate distributions (or uncertainty sets). As such, even though such distributions do not have succinct representations, it is easy to simulate them. We provide one such example in our numerical study in  \Cref{sec:numerics}, where we present a simple SEIR model to forecast the number of infections for the COVID-19 pandemic. As is common in the epidemiology literature, we assume that some of the parameters of the model are uncertain and drawn from given distributions. The resulting continuous joint demand distribution does not have a compact representation and induces demands that are highly correlated; however, it is easy to generate independent sample paths by drawing parameters and then following the dynamics of the SEIR model. 

\subsection{Special Case of Independent Demands: a Fully Polynomial-time Approximation Scheme}
\label{apx:DP:independent} 
We start this subsection by considering a special case of the dynamic programming in Appendix \ref{apx:DP:general} when demands are independent. However, we still allow for a continuous state space (i.e., the demand, the remaining supply, the minimum FR, and the allocation decision at each period $i$ can be any non-negative real numbers). Later we show how to discretize this DP to obtain a fully polynomial-time (near-optimal) approximation scheme. 

\vspace{2mm}
\smallskip
\noindent\textbf{Continuous DP:} When the demands $d_1,d_2,\ldots,d_n$ are independent, the dynamic programming for the optimum online policy (discussed in Appendix \ref{apx:DP:general}) \emph{does not} need to consider the sample path of all previous agents' realized demands $\vec{d}_{[1:i-1]}$ as part of the state, simply because future demand does not depend on this demand history. As a result, we can define $\mathcal{V}_{\textsc{ind}}^{\DemandJoinDist}(i,s_i,f_i)$ to be the expected minimum FR of the optimum online policy starting from time $i$, when it is initialized with a remaining supply of $s_i$ and a current minimum FR of $f_i$. For this special case, the Bellman update equation in \cref{eq:bellman} can be simplified as follows: 
\begin{equation}
    \mathcal{V}_{\textsc{ind}}^{\DemandJoinDist}(i,s_i,f_i)=\expect[]{\underset{x_i\in\left[0, \min(s_i,d_i)\right]}{\max}\left\{ \mathcal{V}_{\textsc{ind}}^{\DemandJoinDist}\left(i+1,s_i-x_i,\min(f_i,\frac{x_i}{d_i})\right)\right\}}
\end{equation}
Similar to the general setting, we add a dummy agent $i=n+1$ at the end of the time horizon, who deterministically has $0$ demand, and as the base of the dynamic programming, we set $\mathcal{V}_{\textsc{ind}}^{\DemandJoinDist}(n+1,s_{n+1},f_{n+1})=f_{n+1}$ for all possible values of $s_{n+1}, f_{n+1}\in[0,1]$. Given the filled DP table for all possible values of the state space and given a state $s_i$ and $f_i$ ahead of the arrival of agent $i$, upon realization of $d_i$ the optimum online policy picks the allocation decision $x^*_i(d_i, s_i, f_i)$ such that:
$$
x^*_i(d_i, s_i, f_i)=\underset{x_i\in\left[0, \min(s_i,d_i)\right]}{\argmax}\left\{ \mathcal{V}_{\textsc{ind}}^{\DemandJoinDist}\left(i+1,s_i-x_i,\min(f_i,\frac{x_i}{d_i})\right)\right\}
$$

Note that the above DP does not suffer from the curse of dimensionality. 
However, its state space is still continuous. In the following, we show that through proper discretization of the state space and the space of allocation decisions, we can develop a fully polynomial-time (near-optimal) approximation scheme for the continuous DP (denoted $\textsc{DP}^{\texttt{CONT}}$) by solving the discretized DP (denoted $\textsc{DP}^{\texttt{DISC}}$).

\vspace{2mm}
\smallskip
\noindent\textbf{Discretizing the DP:} Suppose $\DemandJoinDist=\mathcal{F}_1\times \mathcal{F}_2\times\ldots\times\mathcal{F}_n$. We reiterate our assumption (without loss of generality) that supply and demands are normalized such that $s=1$, and we further assume that the (normalized) demands are bounded by $d_H$. Fixing a parameter $\epsilon>0$ such that $\epsilon^{-1} \in \mathbb{N}$, 
we then create an $\epsilon$-grid bounded between $0$ and $\lceil\max\{1, d_H\}\rceil$, i.e., $G^{\epsilon}
\triangleq\{k\epsilon: k=0,1,\ldots, \lceil \frac{\max\{1,d_H\}}{\epsilon}\rceil\}$, which we will use to discretize our state space.

Given $d_i\sim\mathcal{F}_i$, we define the random variable $\tilde{d}_i$ to be $d_i$ rounded up to the closest multiple of $\epsilon$, i.e., $\tilde{d}_i$ is the smallest member of the set $G^{\epsilon}$ such that $\tilde{d}_i \geq d_i$. Let $\tilde{\mathcal{F}}_i$ denote the CDF of $\tilde{d}_i$. We also discretize the space for the allowable values of remaining supply using the same $\epsilon$-grid $G^{\epsilon}$. Based on this restriction along with the assumption that $\epsilon$ evenly divides $1$, the allowable allocations are likewise contained within the set $G^{\epsilon}$. (Of course, the allowable allocation remains bounded by the current remaining supply and current demand.) Furthermore, the set of possible minimum FRs is bounded by $|G^{\epsilon}|^2$, since an FR is determined by the allocated supply and the realized demand. Thus, the state space of $\textsc{DP}^{\texttt{DISC}}$ is at most $n|G^{\epsilon}|^3$. For each possible realized demand (of which there are at most $|G^{\epsilon}|$), determining the optimal allocation requires a search over at most $|G^{\epsilon}|$ feasible allocations. As a consequence, filling the entire table requires on the order of $n|G^{\epsilon}|^5$ operations.

\vspace{2mm}
\smallskip
\noindent\textbf{Providing a near-optimal approximation:}
To implement $\textsc{DP}^{\texttt{DISC}}$, whenever a demand $d_i$ is realized, we round it up to $\tilde{d}_i$. We use the solution of $\textsc{DP}^{\texttt{DISC}}$, denoted $x^{\texttt{DISC}}_i$, to guide our allocation decision. 
We also update the state of  $\textsc{DP}^{\texttt{DISC}}$, in particular the remaining supply, based on $x^{\texttt{DISC}}_i$.
However, to ensure consistency with the discretization scheme, we only allocate the portion of $x^{\texttt{DISC}}_i$ necessary to ensure that the actual minimum FR precisely matches the discretized minimum FR, and we discard the remaining supply.\footnote{\revcolor{It is easy to verify that discarding supply cannot improve performance.}} To be precise, we allocate $\bar{x}^{\texttt{DISC}}_i \triangleq x^{\texttt{DISC}}_i \frac{d_i}{\tilde{d}_i}$. This convention ensures that the expected minimum FR of $\textsc{DP}^{\texttt{DISC}}$ on the actual (continuous) instance is identical to its expected minimum FR on the discretized instance. 
We now show that this expected minimum FR provides a near-optimal approximation of the expected minimum FR obtained by $\textsc{DP}^{\texttt{CONT}}$.

\begin{proposition}
\label{prop:fptas}
For any $\epsilon>0$ such that $\epsilon^{-1} \in \mathbb{N}$, the expected minimum FR obtained by $\textsc{DP}^{\texttt{DISC}}$ is at least $1-2n\epsilon$ times the expected minimum FR obtained by $\textsc{DP}^{\texttt{CONT}}$.
\end{proposition}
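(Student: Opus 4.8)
The plan is to prove the inequality directly: I would exhibit a feasible policy for the discretized instance whose expected minimum FR is within a factor $1-2n\epsilon$ of $\textsc{DP}^{\texttt{CONT}}$'s value, and then invoke the optimality of $\textsc{DP}^{\texttt{DISC}}$. Normalize $s=1$, and observe the bound is vacuous when $2n\epsilon\ge 1$ (the right-hand side is then nonpositive), so assume $2n\epsilon<1$ and set $\lambda\triangleq 1-2n\epsilon\in(0,1]$. Recall from the construction that the left-hand side of the proposition equals the value of $\textsc{DP}^{\texttt{DISC}}$ on the discretized instance, i.e. the value of the best grid-valued online policy against the rounded demands $\tilde d_i=\lceil d_i/\epsilon\rceil\epsilon$ (a policy that at step $i$ sees $\tilde d_i$, allocates a multiple of $\epsilon$, and keeps the remaining supply on the $\epsilon$-grid). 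Write $V^{\texttt{CONT}}$ for the expected minimum FR obtained by $\textsc{DP}^{\texttt{CONT}}$, attained by the optimal continuous policy $\pi^{\texttt{CONT}}$.

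I would build a randomized grid-valued policy $\hat\pi$ for the discretized instance by ``shadowing'' $\pi^{\texttt{CONT}}$. When $\hat\pi$ observes the rounded demand $\tilde d_i$, it internally samples a continuous demand $\hat d_i$ from the conditional law of $d_i$ given $\lceil d_i/\epsilon\rceil\epsilon=\tilde d_i$ (this is well-defined since $\lceil d_i/\epsilon\rceil\epsilon$ is a discrete random variable). Because demands are independent, these internal draws are independent with marginals $\mathcal{F}_i$, so $(\hat d_1,\dots,\hat d_n)$ has exactly the joint law $\DemandJoinDist$, and by construction $\hat d_i\le\tilde d_i\le\hat d_i+\epsilon$. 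Feeding $\hat d_1,\dots,\hat d_i$ online to a simulated copy of $\pi^{\texttt{CONT}}$ yields its recommended allocation $x_i^{\star}\le\hat d_i$, with $\sum_i x_i^{\star}\le 1$ by feasibility. The policy $\hat\pi$ then allocates $\hat a_i\triangleq\epsilon\big\lceil \lambda\,(x_i^{\star}/\hat d_i)\,\tilde d_i/\epsilon\big\rceil$ to the true (rounded) demand $\tilde d_i$ — that is, it rounds up to the grid the allocation that would deliver a $\lambda$-fraction of $\pi^{\texttt{CONT}}$'s fill rate on $\tilde d_i$ — using the usual convention that the fill rate is $1$ whenever $\tilde d_i=0$.

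The verification then reduces to three short steps. (i) \emph{Feasibility:} from $x_i^{\star}\le\hat d_i$, $\lambda\le1$, and $\tilde d_i/\hat d_i\le 1+\epsilon/\hat d_i$ one gets $\hat a_i\le \lambda x_i^{\star}+2\epsilon$, hence $\sum_i\hat a_i\le\lambda\sum_i x_i^{\star}+2n\epsilon\le\lambda+2n\epsilon=1$; thus $\hat\pi$ never runs out of supply, and since $\lambda(x_i^\star/\hat d_i)\le 1$ we also have $\hat a_i\le\tilde d_i$, so $\hat\pi$ is a legitimate grid-valued policy (remaining supplies and fill rates stay on the grid). (ii) \emph{Per-agent fill rate:} rounding up gives $\hat a_i/\tilde d_i\ge\lambda\,(x_i^{\star}/\hat d_i)$ for every $i$, so on every realization the minimum FR of $\hat\pi$ is at least $\lambda$ times the minimum FR of $\pi^{\texttt{CONT}}$ run on $(\hat d_1,\dots,\hat d_n)$. (iii) \emph{Expectation:} since $(\hat d_i)$ has law $\DemandJoinDist$ and $\pi^{\texttt{CONT}}$ is fed this sequence online, the expected minimum FR of $\pi^{\texttt{CONT}}$ under this draw equals $V^{\texttt{CONT}}$; hence the expected minimum FR of $\hat\pi$ is at least $\lambda V^{\texttt{CONT}}=(1-2n\epsilon)V^{\texttt{CONT}}$. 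Because $\textsc{DP}^{\texttt{DISC}}$ is an optimal policy for the discretized instance, its value dominates that of any (even randomized) policy, in particular $\hat\pi$, which yields the claim.

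The step I expect to be the main obstacle is reconciling the information asymmetry between the two dynamic programs: $\textsc{DP}^{\texttt{CONT}}$ reacts to the exact demand $d_i$, whereas $\textsc{DP}^{\texttt{DISC}}$ only ever observes $\tilde d_i$, so naively ``running $\pi^{\texttt{CONT}}$ inside $\textsc{DP}^{\texttt{DISC}}$'' is not even well-defined as a grid-valued policy. The conditional re-sampling of $\hat d_i$ from $\tilde d_i$ is what resolves this, letting $\hat\pi$ reconstruct a demand sequence with the correct joint distribution purely from the rounded observations. After that, the only real subtlety is the budget arithmetic: rounding demands \emph{up} can make the ratio $\tilde d_i/\hat d_i$ arbitrarily large for tiny $\hat d_i$, yet it only inflates $\hat\pi$'s supply usage by an \emph{additive} $2\epsilon$ per agent rather than by a multiplicative factor, and it is exactly this $2n\epsilon$ total overshoot that forces the scaling $\lambda=1-2n\epsilon$.
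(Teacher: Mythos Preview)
Your proof is correct and reaches the same conclusion as the paper, but along a noticeably different and in one respect more careful route. The paper proceeds in two steps: first it argues $\textsc{DP}^{\texttt{DISC}}(1)\ge\textsc{DP}^{\texttt{CONT}}(1-2n\epsilon)$ by shadowing the continuous DP (run with reduced supply $1-2n\epsilon$) and rounding each of its allocations up via $\hat x_i=\epsilon\lceil x^{\texttt{CONT}}_i/\epsilon+1\rceil$; second it observes that scaling all allocations by $1-\delta$ yields $\textsc{DP}^{\texttt{CONT}}(1-\delta)\ge(1-\delta)\,\textsc{DP}^{\texttt{CONT}}(1)$. You instead collapse both steps into a single construction: you keep $\pi^{\texttt{CONT}}$ at full supply, pre-shrink its \emph{fill rate} by $\lambda=1-2n\epsilon$, apply that fill rate to the rounded demand $\tilde d_i$, and only then round up to the grid. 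The arithmetic is essentially the same (the $2\epsilon$ per-agent overshoot is exactly what forces $\lambda=1-2n\epsilon$), but your packaging avoids the intermediate quantity $\textsc{DP}^{\texttt{CONT}}(1-2n\epsilon)$.

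The point where your argument genuinely goes beyond the paper is the information issue you flag at the end. The paper's shadowing policy for the discretized DP is written as a function of $x^{\texttt{CONT}}_i$, which in turn depends on the unrounded $d_i$; strictly speaking this is not an admissible online policy for $\textsc{DP}^{\texttt{DISC}}$, which only observes $\tilde d_i$. Your conditional re-sampling of $\hat d_i$ from the law of $d_i$ given $\tilde d_i$ (valid because the demands are independent) produces a genuine randomized policy measurable with respect to the discretized filtration, and then the optimal deterministic DP dominates it. So your approach buys a cleaner measurability story, while the paper's two-step decomposition buys a slightly more modular argument (the second step, $\textsc{DP}^{\texttt{CONT}}(1-\delta)\ge(1-\delta)\textsc{DP}^{\texttt{CONT}}(1)$, is a reusable fact). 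Either way the $1-2n\epsilon$ factor falls out for the same additive-budget reason.
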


\begin{proof}{Proof of \Cref{prop:fptas}:}
The proof relies on two key claims. For ease of presentation, we use the name of each DP to also denote the expected minimum FR it obtains. Furthermore, we let this expected minimum FR depend on the initial supply (which we remind is normalized to $1$). First, we will establish a connection between the expected minimum FR obtained by the discretized DP, i.e., $\textsc{DP}^{\texttt{DISC}}(1)$, and the expected minimum FR obtained by the continuous DP \emph{when initialized with less supply}.
\begin{claim}
\label{clm:fptas1}
For any $\epsilon>0$ such that $\epsilon^{-1} \in \mathbb{N}$, 
    $\textsc{DP}^{\texttt{DISC}}(1) \geq \textsc{DP}^{\texttt{CONT}}(1-2n\epsilon)$
\end{claim}
\begin{proof}{Proof of Claim \ref{clm:fptas1}:}
We proceed by finding a feasible solution in $\textsc{DP}^{\texttt{DISC}}(1)$ that always obtains a minimum FR at least as large as $\textsc{DP}^{\texttt{CONT}}(1-2n\epsilon)$. {Fix a demand sample path $\{\Demandi{i}\}_{i\in[1:n]}$ along with the corresponding rounded up demands  $\{\tilde{\Demandi{i}}\}_{i\in[1:n]}$.} Further, let $x^{\texttt{CONT}}_i\in[0,\Demandi{i}]$ denote the allocation decisions of $\textsc{DP}^{\texttt{CONT}}(1-2n\epsilon)$. Consider the solution $\hat{x}^{\texttt{DISC}}_i = \epsilon \lceil\frac{x^{\texttt{CONT}}_i}{\epsilon}+1\rceil$.
We will show that allocating $\min\{\tilde{\Demandi{i}},\hat{x}^{\texttt{DISC}}_i,\}$ at each time $i$ is a feasible allocation in the discretized DP, i.e., it never allocates more than initial supply.
We first note that $\hat{x}^{\texttt{DISC}}_i \leq \epsilon (\frac{x^{\texttt{CONT}}_i}{\epsilon}+2) = x^{\texttt{CONT}}_i + 2\epsilon$. Furthermore, in order to ensure feasibility, the total allocation $\sum_{i \in [\NumAgents]} x^{\texttt{CONT}}_i$ can never exceed $1-2n\epsilon$ regardless of the sample path. As a consequence of these two observations, we must always have that $\sum_{i \in [\NumAgents]} \min\{{\Demandi{i}},\hat{x}^{\texttt{DISC}}_i\}\leq \sum_{i \in [\NumAgents]} \hat{x}^{\texttt{DISC}}_i \leq 1$, which implies that our proposed solution  is feasible in $\textsc{DP}^{\texttt{DISC}}(1)$.

We now compare the FRs under the two policies. Recall our allocation convention for the discretized DP, which ensures that agent $i$ attains an FR of $\min\left\{1,\frac{\hat{x}^{\texttt{DISC}}_i}{\tilde{d}_i} \right\}$. To bound the second term in that minimum, we note: $$\frac{\hat{x}^{\texttt{DISC}}_i}{\tilde{d}_i} \quad \geq \quad \frac{\hat{x}^{\texttt{CONT}}_i + \epsilon}{d_i + \epsilon} \quad \geq \quad \frac{\hat{x}^{\texttt{CONT}}_i}{d_i}.$$
The first inequality comes from the facts that under $\hat{x}^{\texttt{DISC}}_i$, (i) we add at least an amount $\epsilon$ to the allocation decision of the continuous DP, and (ii) we round realized demand up by at most $\epsilon$. Thus, we are lower-bounding the numerator and upper-bounding the denominator. The second inequality comes from the fact that for any $a,b, c$ such that $b \geq a \geq 0$ and $c \geq 0$, $\frac{a+c}{b+c} \geq \frac{a}{b}$. As a result, the FR for agent $i$ under the proposed feasible policy in $\textsc{DP}^{\texttt{DISC}}(1)$ is lower-bounded by the FR for agent $i$ in $\textsc{DP}^{\texttt{CONT}}(1-2n\epsilon)$, simply because $\hat{x}^{\texttt{CONT}}_i\leq d_i$ and hence
$$
\min\left\{1,\frac{\hat{x}^{\texttt{DISC}}_i}{\tilde{d}_i} \right\}\quad \geq \quad \frac{\hat{x}^{\texttt{CONT}}_i}{d_i}.
$$

This comparison holds for the FR of each agent, {which means that the minimum FR under the proposed feasible solution in $\textsc{DP}^{\texttt{DISC}}(1)$ is at least the minimum FR in $\textsc{DP}^{\texttt{CONT}}(1-2n\epsilon)$ under any realized sample path $\{\Demandi{i}\}_{i\in[1:n]}$. This is a sufficient condition to prove Claim~\ref{clm:fptas1}, as the same comparison holds after taking an expectation over the stochasticity in the demand sequence.} 
\hfill \halmos
\end{proof}
Next, we establish a lower bound on the value of the continuous DP when initialized with less supply. 
\begin{claim}
\label{clm:fptas2}
For any $\delta > 0$, $\textsc{DP}^{\texttt{CONT}}(1-\delta) \geq (1-\delta)\textsc{DP}^{\texttt{CONT}}(1)$
\end{claim}
\begin{proof}{Proof of Claim \ref{clm:fptas2}:}
We proceed by finding a feasible allocation policy in $\textsc{DP}^{\texttt{CONT}}(1-\delta)$ that attains a value of at least $(1-\delta)\textsc{DP}^{\texttt{CONT}}(1)$. Let $x^{\texttt{CONT}}_i$ denote the allocation decisions of $\textsc{DP}^{\texttt{CONT}}(1)$. Now consider an alternative policy which follows a different allocation decision, 
$\hat{x}^{\texttt{CONT}}_i = (1-\delta)x^{\texttt{CONT}}_i$ at any given state and realized demand. By feasibility, the total allocation under $\textsc{DP}^{\texttt{CONT}}(1)$ can never exceed $1$. Since each allocation decision under this alternate policy is reduced by a factor $1-\delta$, the total allocation under this alternative policy will never exceed $1-\delta$. Thus, this alternative policy is feasible for $\textsc{DP}^{\texttt{CONT}}(1-\delta)$.

Furthermore, the minimum FR under this alternative policy will always be within a factor $1-\delta$ of the minimum FR under $\textsc{DP}^{\texttt{CONT}}(1)$, since the demands are the same and the allocation decisions are off by at most a factor of $1-\delta$. As a result, there is a feasible solution to $\textsc{DP}^{\texttt{CONT}}(1-\delta)$ that obtains an expected minimum FR of at least $(1-\delta)\textsc{DP}^{\texttt{CONT}}(1)$. This is a sufficient condition to prove Claim \ref{clm:fptas2}. \hfill \halmos
\end{proof}
Together, Claims \ref{clm:fptas1} and \ref{clm:fptas2} establish that $\textsc{DP}^{\texttt{DISC}}(1) \geq (1-2n\epsilon)\textsc{DP}^{\texttt{CONT}}(1)$, which completes the proof of \Cref{prop:fptas}. \hfill \halmos
\end{proof}
}

\revcolor{We end this subsection by noting that if we set $\epsilon=\frac{\epsilon'}{2n}$ when designing our discrete grid, then  $\textsc{DP}^{\texttt{DISC}}(1)\geq (1-\epsilon')\textsc{DP}^{\texttt{CONT}}(1)$. Moreover, filling the DP table for $\textsc{DP}^{\texttt{DISC}}(1)$ requires on the order of $n|G^{\frac{\epsilon'}{2n}}|^5$ operations, which results in a running time of $\mathcal{O}\left(n^6\left(\frac{ \max\{1,d_{H}\}}{\epsilon'}\right)^5\right)$.}\footnote{\revcolor{We would like to emphasize that although the running time of this FPTAS is polynomial in $n$ (for constant $\epsilon'$), it is much larger than the running time of the PPA policy, which is linear in $n$. Note also that the running time of the PPA policy remains linear in $n$ even when the demands are arbitrarily correlated.}}

\revcolor{
\subsection{Illustrating Limitations of Dynamic Programming (\texorpdfstring{\Cref{subsec:expostupperbound}}{})}
\label{apx:DPexample}
In Appendix \ref{apx:DP:hardness-input-model}, we illustrated the challenges of computing the DP solution when faced with an arbitrarily correlated demand sequence. However, even in the special case of independent demands where the DP solution can be efficiently approximated (see Appendix \ref{apx:DP:independent}), the DP solution suffers from a number of additional drawbacks. We highlighted these limitations in Remark \ref{rem:online} of \Cref{subsec:expostupperbound}, and we summarize these points in \Cref{table:DP:Sucks}.   

While we have already discussed the first two rows of \Cref{table:DP:Sucks}, in the following, we illustrate the third limitation using  \Cref{ex:DPBad}. Finally, note that because the DP allocation decisions is based on solving a complex stochastic optimization problem, it inherently lacks transparency and interpretability. However, to highlight lack of interpretability even further, we use  \Cref{ex:DPBad} to show how a small change in the demand of one agent can drastically change its DP allocation decision. Unlike the DP solution, the allocation decisions of the \PolicyNameAb\ policy are minimally impacted by small changes in demand.\footnote{\revcolor{Here we assume that the small change in demand does not impact the future demand distribution.}}

\begin{table}[t]

    \caption{\revcolor{Optimum Online (DP) vs. \PolicyNameAb}}
    \centering
    \footnotesize
    \revcolor{\begin{tabular}{| l | c | c |}
   \hline 
    \multicolumn{1}{|c|}{\bf Desired Property} & \multicolumn{1}{|c|}{\bf \ Optmium Online (DP) \  }& \multicolumn{1}{|c|}{\bf \ Our PPA Policy \  } \\ \hline
    Computationally  Efficient  & \emph{No} & \emph{Yes} \\ \hline
    Based on Minimal Distributional Knowledge & \emph{No} & \emph{Yes} \\ \hline
    Best Ex-Ante Fairness Guarantee & \emph{No} & \emph{Yes} \\ \hline
    Transparent/Interpretable Decisions &\emph{No} & \emph{Yes} \\ \hline
    \end{tabular}}
    \label{table:DP:Sucks}
\end{table}

\begin{example}
\label{ex:DPBad}
Consider an instance with two agents ($\NumAgents = 2$) 
with independent demand
where the total expected demand is almost twice the amount of supply ($\ExpDemand = 2+\epsilon$ for small $\epsilon>0$). There are two possible demand sequences that occur with equal probability, either $\DemandVec = (\frac{4}{3}+\epsilon, \frac{4}{3})$ or $\DemandVec = (\frac{4}{3}+\epsilon, 0)$.
\end{example}
 In this example, the first agent has deterministic demand $\Demandi{1} = \frac{4}{3}+\epsilon$. The second agent either has no demand (with probability $0.5$) or has demand $\Demandi{2} = \frac{4}{3}$ (also with probability $0.5$). Suppose the first allocation is deterministically given by $x_1$. (We note that neither the optimum online policy nor the \PolicyNameAb\ policy are randomized policies, and thus the first agent will receive a deterministic allocation under each policy.) Then, the expected minimum FR is equal to $$0.5\left(\frac{3x_1}{4+3\epsilon}\right) + 0.5 \left(\min\left\{\frac{3x_1}{4+3\epsilon}, \frac{3(1-x_1)}{4}\right\}\right).$$ It is easy to verify numerically that this is maximized at $x_1^*(d_1) = \frac{4+3\epsilon}{8+3\epsilon}$, which implies that if the second demand is realized, $x_2^*(d_2) = \frac{4}{8+3\epsilon}$. This achieves an expected minimum FR of $\frac{3}{8+3\epsilon}$. 
 
 In contrast, the PPA policy allocates $x_1^{\PolicyNameAb}(d_1) = \frac{4/3 + \epsilon}{4/3 + \epsilon + (0.5)4/3} = \frac{4+3\epsilon}{6+3\epsilon}$, which implies that if the second demand is realized, $x_2^{\PolicyNameAb}(d_2) = \frac{2}{6+3\epsilon}$. Based on our formula above, this achieves an expected minimum FR of $0.5\left(\frac{3}{6+3\epsilon}\right) + 0.5\left(\frac{1.5}{6+3\epsilon}\right) = \frac{3}{8+4\epsilon}$, 
 which is quite close to the expected minimum FR achieved by the optimal online policy.

We now make the following two observations based on this example:

\begin{enumerate}[label=(\roman*)]
\item {\it The DP solution achieves sub-optimal ex-ante fairness.} The FR of the first agent under the optimum online solution is deterministically equal to $\frac{x_1^*(d_1)}{d_1} = \frac{3}{8+3\epsilon}$, while the expected FR of the second agent is equal to $\expect{\frac{x_2^*(d_2)}{d_2}} = 0.5(1) + 0.5\left(\frac{3}{8+3\epsilon}\right)$, which is larger. Thus, the ex-ante minimum FR  that results from the DP solution is $\frac{3}{8+3\epsilon}$. 

In contrast, under the \PolicyNameAb\ policy, the FR of the first agent is deterministically equal to $\frac{x_1^{\PolicyNameAb}(d_1)}{d_1}=\frac{1}{2+\epsilon}$, while the expected FR of the second agent is equal to $\expect{\frac{x_2^{\PolicyNameAb}(d_2)}{d_2}} = 0.5(1) + 0.5\left(\frac{1}{4+2\epsilon}\right)$, which is larger. Thus, the ex-ante minimum FR that results from the \PolicyNameAb\ policy is $\frac{1}{2+\epsilon}$. We highlight that not only is the minimum ex-ante minimum FR of the DP solution less than that of our \PolicyNameAb\ policy, the corresponding ex-ante fairness for the DP solution is less than the \emph{guarantee} on ex-ante fairness provided our policy (as given in \Cref{thm:exante}).

\item {\it The DP solution lacks interpretability and is sensitive to small perturbations.} To see this, we show that the above DP solution can vary significantly if the demand distribution is slightly different. Suppose that we perturb this instance by decreasing the first agent's demand by $2\epsilon$. In this case, the demand sequence is equally likely to be $\DemandVec = (\frac{4}{3}-\epsilon, \frac{4}{3})$ or $\DemandVec = (\frac{4}{3}-\epsilon, 0)$, and the ex-post minimum FR for an initial allocation decision $x_1$ is given by $$0.5\left(\frac{3x_1}{4-3\epsilon}\right) + 0.5 \left(\min\left\{\frac{3x_1}{4-3\epsilon}, \frac{3(1-x_1)}{4}\right\}\right).$$
Despite the small change in the demand sequence, the DP solution changes dramatically, as this expression is maximized at $x_1^*(d_1) = 1$. 
This new expected minimum FR of $\frac{3}{8-6\epsilon}$ is essentially unchanged from the original setting, but the first agent (deterministically) receives almost twice as much supply as before.
Not only does this demonstrate that the DP solution is highly sensitive, but it also highlights that the DP solution can suffer from a lack of interpretability: the first agent receives more supply in this case, even though the only change was a deterministic and negligible decrease in that agent's demand. In sharp contrast, the allocation decision of our \PolicyNameAb\ policy barely changes, as the first agent deterministically receives $x_1^{\PolicyNameAb}(d_1) = \frac{4/3 - \epsilon}{4/3 - \epsilon + (0.5)4/3} = \frac{4-3\epsilon}{6-3\epsilon}$.
\end{enumerate}
}

\section{\texorpdfstring{Simple Backward Dynamic Programming for Optimum Offline (\Cref{subsec:ourpolicy})}{}}
\label{apx:offline:DP}
Suppose the planner has access to all the demand realizations $\DemandVec$, and
let $f_i$ be the minimum FR of the policy at the end of period $i-1$, i.e.,  $f_1=1,~f_{i}=\min\{f_{i-1},\frac{\Alloci{i-1}}{\Demandi{i-1}}\}$ for $i\in[2:n+1]$. We will show via backward induction that for any remaining supply $\Supplyi{i}$ and minimum FR $f_i$, the maximum-achievable minimum FR is $\min\{f_i, \frac{s_i}{\sum_{j \in [i:n]} d_j}\}$, which is achieved by a policy $\Alloc^*_i=\min\left\{\Demandi{i},\Supplyi{i}\frac{\Demand_i}{\sum_{j \in [i:n]}\Demandi{j}}\right\}$.

Clearly, this is true when $i=n$, as the optimal policy is to allocate as much supply as possible, i.e. $\Alloci{n}^* = \min\{\Demandi{n}, \Supplyi{n}\}$. This policy achieves a minimum FR of $$f_{n+1} = \min\left\{f_{n}, \frac{\Alloci{n}}{\Demandi{n}}\right\} = \min\left\{f_{n}, \frac{\min\{\Demandi{n}, \Supplyi{n}\}}{\Demandi{n}}\right\} = \min\left\{f_{n}, \frac{\Supplyi{n}}{\Demandi{n}}\right\}.$$ 

We now assume this is true for $i > k$. In that case, given an allocation to agent $k$ of $\Alloci{k}$, the minimum FR at the end of period $k$ is given by $f_{k+1} = \min\left\{f_k, \frac{\Alloc{k}}{\Demandi{k}}\right\}$ and the remaining supply is $\Supplyi{k+1} = \Supplyi{k} - \Alloci{k}$. Based on our inductive hypothesis, the maximum-achievable minimum FR is thus $$\min\left\{f_{k+1}, \frac{\Supplyi{k+1}}{\sum_{j \in [k+1:n]} d_j}\right\} = \min\left\{f_k, \frac{\Alloc{k}}{\Demandi{k}}, \frac{\Supplyi{k} - \Alloci{k}}{\sum_{j \in [k+1:n]} d_j}\right\}.$$
This is maximized when the second and third terms are equal, which occurs when $\Alloci{k} = \Supplyi{k}\frac{\Demand_k}{\sum_{j \in [k:\NumAgents]}\Demandi{j}}$. If this allocation is infeasible $\left(\text{i.e., if } \Supplyi{k}\frac{\Demand_k}{\sum_{j \in [k:\NumAgents]}\Demandi{j}} \geq \Demandi{k}\right)$, then an allocation of $\Alloci{k} = \Demandi{k}$ is optimal because this allocation ensures that $f_k$ must be the minimum of the three terms. This completes the proof by backward induction that the maximum-achievable minimum FR is $\min\left\{f_i, \frac{s_i}{\sum_{j \in [i:n]} d_j}\right\}$, which is achieved by a policy $\Alloc^*_i=\min\left\{\Demandi{i},\Supplyi{i}\frac{\Demand_i}{\sum_{j \in [i:n]}\Demandi{j}}\right\}$.

\section{\texorpdfstring{Analysis of Non-adaptive Fixed-allocation Policies (\Cref{subsec:AdaptivityGap})}{}}
\label{apx:fixedallocation}
In this appendix section, we consider another class of non-adaptive policies which we call \fixedalloc\ policies. A \fixedalloc\ policy is one which pre-determines an allocation $\Alloc_i$ for each agent $i \in [\NumAgents]$.\footnote{As explained in Footnote \ref{foot:detpolicies}, we focus on deterministic policies without loss of generality.} The optimal \fixedalloc\ policy is the policy which, given a joint demand distribution $\DemandJoinDist$, pre-determines a vector of allocations $\vec{\Alloc} = (\Alloci{1}, \Alloci{2}, \dots, \Alloci{\NumAgents})$ which maximizes $\expect[\Vec{\Demand}\sim\DemandJoinDist]{\min_{i\in[\NumAgents]}\left\{\frac{\Alloci{i}}{\Demandi{i}}\right\}}$.

\begin{proposition}[Ex-post Fairness Guarantee of the Optimal Fixed-allocation Policy]
\label{prop:fixedalloc}
Given a fixed number of agents $\NumAgents\in\mathbb{N}$ and supply scarcity $\ExpDemand\in \NonNegReals$, the optimal fixed-allocation policy achieves an ex-post fairness guarantee of
\begin{equation}
   \BoundFixedAll = \max\{1, \ExpDemand\}\begin{cases}
   1-\frac{\NumAgents\ExpDemand}{4}, &\ExpDemand\NumAgents \in [0, 2) \\
   \frac{1}{\NumAgents \ExpDemand}, &\ExpDemand\NumAgents \in [2, +\infty)
   \end{cases}.
\end{equation}
\end{proposition}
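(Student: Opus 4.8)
The plan is to prove the two matching directions of $\BoundFixedAll=\inf_{\DemandJoinDist}\big[\ExpostObj^{\DemandJoinDist}(\text{opt. fixed-allocation})/\DetGuar\big]$ separately. Since $\DetGuar=1/\max\{1,\ExpDemand\}$, it suffices to identify $\inf_{\DemandJoinDist}\max_{\vec\Alloc}\ExpostObj^{\DemandJoinDist}(\vec\Alloc)$ and then multiply by $\max\{1,\ExpDemand\}$. Here the maximum ranges over feasible allocation vectors $\vec\Alloc$ with $\sum_{i\in[\NumAgents]}\Alloci{i}\le1$, and for such vectors the remaining-supply constraint never binds, so $\ExpostObj^{\DemandJoinDist}(\vec\Alloc)=\expect[\vec\Demand\sim\DemandJoinDist]{\min_{i\in[\NumAgents]}\min\{1,\Alloci{i}/\Demandi{i}\}}$.

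\textbf{Universal lower bound.} First I would evaluate the uniform allocation $\Alloci{i}=1/\NumAgents$ for every $i$. Along any sample path its minimum fill rate equals $g\big(\max_{i}\Demandi{i}\big)$, where $g(t)\triangleq\min\{1,1/(\NumAgents t)\}$ with the convention $g(0)=1$. Since $g$ is non-increasing and $\max_i\Demandi{i}\le\sum_i\Demandi{i}$, this is at least $g\big(\sum_i\Demandi{i}\big)$. Next I would compute the lower convex envelope $\check g$ of $g$ on $\NonNegReals$: the line through $(0,1)$ tangent to the branch $1/(\NumAgents t)$ touches it at $t_0=2/\NumAgents$ with slope $-\NumAgents/4$, so $\check g(t)=1-\tfrac{\NumAgents}{4}t$ on $[0,2/\NumAgents]$ and $\check g(t)=1/(\NumAgents t)$ for $t\ge2/\NumAgents$. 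As $\check g\le g$ and $\check g$ is convex, Jensen's inequality gives $\ExpostObj^{\DemandJoinDist}(\vec\Alloc)\ge\expect{g(\sum_i\Demandi{i})}\ge\expect{\check g(\sum_i\Demandi{i})}\ge\check g(\ExpDemand)$, and $\check g(\ExpDemand)$ equals $1-\tfrac{\NumAgents\ExpDemand}{4}$ when $\NumAgents\ExpDemand<2$ and $1/(\NumAgents\ExpDemand)$ otherwise. Multiplying by $\max\{1,\ExpDemand\}$ shows the optimal fixed-allocation policy achieves ex-post fairness at least $\BoundFixedAll$ on \emph{every} instance.

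\textbf{Matching upper bound.} To show tightness I would exhibit a worst-case distribution in which at most one agent ever has positive demand (so automatically $\max_i\Demandi{i}=\sum_i\Demandi{i}$), in which total demand sits exactly on the two atoms realizing $\check g$ at $\ExpDemand$, and in which the identity of the active agent is drawn uniformly from $[\NumAgents]$ to neutralize cherry-picking. Concretely: if $\NumAgents\ExpDemand<2$, then with probability $1-\tfrac{\NumAgents\ExpDemand}{2}$ all demands are $0$ and otherwise a uniformly random agent has demand $2/\NumAgents$; if $\NumAgents\ExpDemand\ge2$, a uniformly random agent has demand $\ExpDemand$. In both cases $\expect{\sum_i\Demandi{i}}=\ExpDemand$, and in the scenario where agent $\sigma$ is active the realized minimum fill rate is $\min\{1,\Alloci{\sigma}/D\}$, where $D$ is the active demand (all other agents have fill rate $1$ by the zero-demand convention). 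Because $x\mapsto\min\{1,x/D\}$ is concave, non-decreasing, and vanishes at $0$, the budget $\sum_i\Alloci{i}\le1$ forces the even split $\Alloci{i}=1/\NumAgents$ to be optimal; substituting recovers exactly $\check g(\ExpDemand)$, so scaling by $\max\{1,\ExpDemand\}$ yields the matching upper bound $\BoundFixedAll$.

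The main obstacle is the upper-bound direction: one must confirm that \emph{no} fixed-allocation vector can beat $\check g(\ExpDemand)$ on the worst-case instance, which reduces to the elementary optimization fact that $\sum_\sigma\min\{1,c\,\Alloci{\sigma}\}$ subject to $\sum_\sigma\Alloci{\sigma}\le1$ is maximized by the uniform allocation (a consequence of concavity of $x\mapsto\min\{1,cx\}$ together with its vanishing at the origin), and one must also check that restricting attention to allocation vectors with $\sum_i\Alloci{i}\le1$ is genuinely without loss of generality, i.e.\ that over-committing supply can never help the optimal fixed-allocation policy. The lower-convex-envelope computation is the other essential ingredient, and it is precisely where the constant $\NumAgents/4$ and the threshold $\NumAgents\ExpDemand=2$ in the statement originate.
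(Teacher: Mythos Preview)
Your proposal is correct and follows essentially the same approach as the paper: for the lower bound, both use the uniform allocation $\Alloci{i}=1/\NumAgents$, bound the minimum FR by $\min\{1,1/(\NumAgents\sum_i\Demandi{i})\}$, pass to its convex envelope, and apply Jensen's inequality; for the upper bound, both use the identical worst-case instances (a uniformly random agent with demand $2/\NumAgents$ when $\NumAgents\ExpDemand<2$, or $\ExpDemand$ when $\NumAgents\ExpDemand\ge2$). Your treatment is slightly more explicit---you compute the convex envelope and argue optimality of the uniform split on the hard instance via concavity of $x\mapsto\min\{1,x/D\}$, whereas the paper simply upper-bounds $\min\{1,x/D\}$ by $x/D$ and uses $\sum_i\Alloci{i}\le1$---but the substance is the same.
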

We remark that the ex-post fairness guarantee $\BoundFixedAll$ tends to $0$ as the number of agents $\NumAgents$ gets large. This is in stark contrast to the guarantees provided by the \PolicyNameAb\ policy and the optimal \fixedthresh\ policy, which are lower-bounded by a constant regardless of the number of agents.
\subsection{Proof of Proposition \ref{prop:fixedalloc}}
We prove this proposition by first showing that there exists a distribution where no \fixedalloc\ policy achieves ex-post fairness greater than $\BoundFixedAll$, which thus serves as an upper bound on the ex-post fairness guarantee of the optimal \fixedalloc\ policy. We then show that there exists a \fixedalloc\ policy that achieves ex-post fairness of at least $\BoundFixedAll$ for any demand distribution, which means the bound $\BoundFixedAll$ is tight.

{\bf Upper bound:} We prove the hardness result by considering two separate cases corresponding to $\NumAgents \ExpDemand < 2$ and $\NumAgents \ExpDemand \geq 2$. For each case, we provide an instance of the problem under which any \fixedalloc\ policy obtains ex-post fairness no larger than $\BoundFixedAll$.

\begin{enumerate}[label=(\roman*)]
    \item If $\NumAgents\ExpDemand < 2$, consider a joint demand distribution such that with probability $1-\tfrac{\NumAgents\ExpDemand}{2}$ there is no demand, and with probability $\frac{\NumAgents\ExpDemand}{2}$ one agent chosen uniformly at random has demand $\frac{2}{\NumAgents}$. In this case, with probability $1-\tfrac{\NumAgents\ExpDemand}{2}$, the minimum FR is $1$, and with probability $\frac{\NumAgents\ExpDemand}{2}$, the minimum FR is equal to the allocation of a randomly selected agent (which is at most $1/\NumAgents$) divided by the total demand $2/\NumAgents$. Therefore, the minimum expected FR for this instance is upper-bounded by $1-\tfrac{\NumAgents\ExpDemand}{4}$.
    \item If $\NumAgents\ExpDemand \geq 2$, consider a joint demand distribution where one agent chosen uniformly at random has demand equal to the expected total demand $\ExpDemand$. In this instance, the minimum expected FR is upper-bounded by the allocation of a randomly selected agent (which is at most $1/\NumAgents$) divided by the total demand $\ExpDemand$.
\end{enumerate}

Taken together, these instances provide an upper bound on the expected minimum FR one can hope to achieve with a \fixedalloc\ policy. We then scale each instance by our normalization factor, namely $\DetGuar = \min\{1, 1/\ExpDemand\}$, which provides an upper bound of $\BoundFixedAll$ on the ex-post fairness guarantee (see \Cref{def:polperf}) of any \fixedalloc\ policy. 

{\bf Lower bound:} Consider a policy which allocates an equal amount of supply to each agent, i.e., $\Alloci{i} = \frac{1}{\NumAgents}$ for all $i \in [\NumAgents]$. In that case, the minimum FR is lower-bounded by $\min\left\{1, \frac{1}{n\sum_{ i \in [\NumAgents]}\Demand_i}\right\}$. Further,
\begin{equation*}
    \min\left\{1, \frac{1}{n\sum_{ i \in [\NumAgents]}\Demand_i}\right\} \geq \begin{cases}
   1-\frac{\NumAgents\sum_{ i \in [\NumAgents]}\Demand_i}{4}, &\NumAgents\sum_{ i \in [\NumAgents]}\Demand_i \in [0, 2) \\
   \frac{1}{\NumAgents \sum_{ i \in [\NumAgents]}\Demand_i}, &\NumAgents\sum_{ i \in [\NumAgents]}\Demand_i \in [2, +\infty)
   \end{cases}
\end{equation*}
We note that the right hand side of the above inequality is convex in $\sum_{i \in [\NumAgents]}\Demand_i$. Therefore, using Jensen's inequality, the expected minimum FR must be at least
\begin{equation*}
\begin{cases}
   1-\frac{\NumAgents\ExpDemand}{4}, &\NumAgents\ExpDemand\in [0, 2) \\
   \frac{1}{\NumAgents\ExpDemand}, &\NumAgents\ExpDemand \in [2, +\infty)
   \end{cases}
\end{equation*}
We then scale each this lower bound on the expected minimum FR by our normalization factor, namely $\DetGuar = \min\{1, 1/\ExpDemand\}$. This provides a lower bound on the ex-post fairness guarantee (see \Cref{def:polperf}) that is equal to $\BoundFixedAll$. Thus, we have shown that $\BoundFixedAll$ is a tight bound on the ex-post fairness guarantee of the optimal \fixedalloc\ policy.
\hfill\Halmos

\section{Missing Proofs of \texorpdfstring{\Cref{subsec:AdaptivityGap}}{}}
\subsection{Proofs of Claims in \texorpdfstring{\Cref{lem:nonadaptlowbound}}{} (\texorpdfstring{\Cref{subsec:AdaptivityGap}}{})}
\label{apx:adaptive-v2}

In this subsection, we present the proofs of the three claims which appear in the proof of \Cref{lem:nonadaptlowbound}.

\subsubsection{Proof of Claim \ref{clm:nonadaptive1} (\texorpdfstring{\Cref{subsec:AdaptivityGap}}{})}
\label{apx:nonadaptclaim1}
 Suppose we have a distribution $\invdemanddist$ such that $R^\invdemanddist(q)$ is not flat in $(Q^\invdemanddist(1),1]${, that is, $\exists~q_1,q_2\in (Q^\invdemanddist(1),1]$ such that $ R^\invdemanddist(q_1)\neq R^\invdemanddist(q_2)$.} Let $R^\invdemanddist(q)$ attain its maximum in $[Q^\invdemanddist(1),1]$ at the quantile $\bar{q}$. Now consider a quantile $q'\in (Q^\invdemanddist(1),1]$ so that $R^\invdemanddist(q')<R^\invdemanddist(\bar{q})$. Let $\delta\triangleq q'-Q^\invdemanddist\left(\Tfrvar^\invdemanddist(q')+\epsilon\right)$ be the total probability mass in the TFR interval $(\Tfrvar^\invdemanddist(q'),\Tfrvar^\invdemanddist(q')+\epsilon]$. Then pick a small enough $\epsilon>0$  such that:
\begin{enumerate}[label=(\roman*)]
    \item $\epsilon+\delta+\epsilon\cdot\delta <R^\invdemanddist(\bar q)-R^\invdemanddist(q')$,
    \item $\epsilon\leq 1-\Tfrvar^\invdemanddist(q')$,
    \item $\bar{q}\notin [Q^\invdemanddist\left(\Tfrvar^\invdemanddist(q')+\epsilon\right),q']$.
\end{enumerate}
Now consider a distribution $\bar{\invdemanddist}$ that is generated from $\invdemanddist$ by moving all the $\delta$ probability mass in $(\Tfrvar^\invdemanddist(q'),\Tfrvar^\invdemanddist(q')+\epsilon]$ to the point $\Tfrvar^\invdemanddist(q')+\epsilon$. With this modification in the distribution, the EAFR of every TFR $\tfrvar\in[0,\Tfrvar^\invdemanddist(q')]\cup(\Tfrvar^\invdemanddist(q')+\epsilon,1]$ remains the same. Moreover, the maximum EAFR in the interval $(\Tfrvar^\invdemanddist(q'),\Tfrvar^\invdemanddist(q')+\epsilon]$ is also achieved at $\Tfrvar^\invdemanddist(q')+\epsilon$. {Given this target fill rate, the EAFR of the distribution $\bar{\invdemanddist}$ is equal to}
\begin{align*}
R^{\bar{\invdemanddist}}(\Tfrvar^\invdemanddist(q')+\epsilon) &= (\Tfrvar^\invdemanddist(q')+\epsilon)\left(Q^{\invdemanddist}\left(\Tfrvar^\invdemanddist(q')+\epsilon\right)+\delta\right)\\
&<\Tfrvar^\invdemanddist(q')\cdot Q^{\invdemanddist}\left(\Tfrvar^\invdemanddist(q')+\epsilon\right)+\epsilon+\delta+\epsilon\cdot\delta \\
&<R^\invdemanddist(q')+(R^\invdemanddist(\bar{q})-R^\invdemanddist(q'))=R^\invdemanddist(\bar{q})~.
\end{align*}
Therefore, the maximum EAFR over all possible TFRs in $[0,1]$ is the same for $\invdemanddist$ and $\bar{\invdemanddist}$, i.e., 
$$
\underset{q\in[0,1]:\Tfrvar^\invdemanddist(q)\in[0,1]}{\max} R^\invdemanddist(q)=\underset{q\in[0,1]:\Tfrvar^{\bar{\invdemanddist}}(q)\in[0,1]}{\max} R^{\bar{\invdemanddist}}(q) \quad = \quad R^\invdemanddist(\bar{q}).
$$
However, $\bar{\ExpDemand}=\expect[\invdemand\sim\bar{\invdemanddist}]{\frac{1}{\invdemand}}<\expect[\invdemand\sim{\invdemanddist}]{\frac{1}{\invdemand}}=\ExpDemand$. Now let $\tilde{\invdemanddist}$ be the distribution of the random variable $\left(\bar{\ExpDemand}/\ExpDemand\right)\cdot \invdemand$ where $\invdemand\sim \bar{\invdemanddist}$. We have:
$$
\underset{\tfrvar\in[0,1]}{\max} \tfrvar (1-\tilde{\invdemanddist}(\tfrvar))=\underset{\tfrvar\in[0,1]}{\max} \tfrvar \left(1-\bar{\invdemanddist}(\tfrvar\cdot\ExpDemand/\bar{\ExpDemand})\right) \leq \underset{\tfrvar\in[0,1]}{\max} \tfrvar \left(1-\bar{\invdemanddist}(\tfrvar)\right)=\underset{\tfrvar\in[0,1]}{\max} \tfrvar \left(1-{\invdemanddist}(\tfrvar)\right)~.
$$
Also, $\expect[\invdemand\sim\tilde{\invdemanddist}]{\frac{1}{\invdemand}}=\expect[\invdemand\sim{\invdemanddist}]{\frac{1}{\invdemand}}=\ExpDemand$. Therefore, dropping such a distribution $\invdemanddist$ from the feasible set in the outer optimization of \cref{eq:best-non-adaptive-lemma} does not change the infimum value, which proves Claim \ref{clm:nonadaptive1}.\hfill\Halmos 

\subsubsection{Proof of Claim \ref{clm:nonadaptive2} (\texorpdfstring{\Cref{subsec:AdaptivityGap}}{})}
\label{apx:nonadaptclaim2}
Suppose we have a distribution $\invdemanddist$ that has non-zero total mass in the interval $(1,+\infty)$. Now shift all the probability mass in $(1,+\infty)$ to $+\infty$. Let $\bar{\invdemanddist}$ be the resulting distribution. Note that the maximum EAFR among targets in $[0,1]$ is the same for $\invdemanddist$ and $\bar{\invdemanddist}$, as the EAFR for any target in $[0,1]$ remains the same. However,  $\bar{\ExpDemand}=\expect[\invdemand \sim \bar{\invdemanddist}]{\frac{1}{\invdemand}}<\expect[\invdemand\sim{\invdemanddist}]{\frac{1}{\invdemand}}=\ExpDemand$, as we have moved the probability mass of $\invdemand$ towards larger values (equivalently, the probability mass of demand to lower values). By using the same trick as above in Appendix \ref{apx:nonadaptclaim1}, we conclude that dropping such a distribution $\invdemanddist$ from the feasible set in the outer optimization of \cref{eq:best-non-adaptive-lemma} does not change the infimum value, which proves Claim \ref{clm:nonadaptive2}. \hfill\Halmos

\subsubsection{Proof of Claim \ref{clm:nonadaptive3} (\texorpdfstring{\Cref{subsec:AdaptivityGap}}{})}
\label{apx:nonadaptclaim3}
Consider an inverse demand distribution $\bar{\invdemanddist}$ that satisfies the two constraints given by Claim \ref{clm:nonadaptive1} and Claim \ref{clm:nonadaptive2}, namely (i) $R^\invdemanddist(q)=R^\invdemanddist(q'), \forall q,q'\in [Q^\invdemanddist(1),1]$, and (ii) $\invdemanddist(\invdemand) = \invdemanddist(1)$ for all $\invdemand \in [1, +\infty)$. Let us define $\bar{q}$ such that $Q^{\bar{\invdemanddist}}(1) = \bar{q}$, or equivalently, $\Tfrvar^{\bar{\invdemanddist}}(\bar{q}) = 1$. 

The EAFR curve for $\bar{\invdemanddist}$ by definition attains a value of $R^{\bar{\invdemanddist}}(\bar{q}) = \bar{q}$. Since $\bar{\invdemanddist}$ has a constant EAFR curve in the interval $[\bar{q},1]$, its EAFR curve must be constantly equal to $\bar{q}$ over that interval. Futher, since the EAFR curve can also be expressed as $q \cdot \bar{\invdemanddist}^{-1}(1-q)$, we must have $\bar{\invdemanddist}(\bar{q}/q)=1-q$ for all  $q \in [\bar{q},1]$. Equivalently, using a change of variable $\invdemand = \bar{q}/q$, we must have $ \bar{\invdemanddist}(\invdemand)=1-\bar{q}/\invdemand$ for all  $\invdemand \in [\bar{q},1]$ (which implies $\bar{\invdemanddist}(\bar{q}) = 0$).

Further, since the CDF $\bar{\invdemanddist}$ pushes all the probability mass in the interval $(1, +\infty)$ to $+\infty$, $\bar{\invdemanddist}$ must be constant in the interval $[1,+\infty)$. Thus, we have uniquely described $\bar{\invdemanddist}$, up to a constant $\bar{q}$:

\begin{equation*}
\bar{\invdemanddist}(\invdemand) =
\left\{
	\begin{array}{ll}
		0 &\quad\quad\mbox{if }~\invdemand\in[0,\bar{q}) \\
		1-{\bar{q}}/{\invdemand} &\quad\quad\mbox{if}~\invdemand\in[\bar{q},1)  \\
		1-\bar{q} &\quad\quad\mbox{if}~\invdemand\in[1,+\infty)\\
		1 &\quad\quad\mbox{if}~\invdemand=+\infty
	\end{array}
\right. 
\end{equation*}

For this distribution to have an expected demand of $\ExpDemand$, consider the corresponding CDF for demand $\bar{F}: \NonNegReals\rightarrow [0,1]$ for the random variable $x\triangleq\frac{1}{\invdemand}$ where $\invdemand\sim \bar{\invdemanddist}$. We have:
\begin{equation*}
\bar{F}(x) = 1-\bar{\invdemanddist}(1/x)=
\left\{
	\begin{array}{ll}
		0 &\quad\quad\mbox{if }~x=0 \\
		{\bar{q}} &\quad\quad\mbox{if}~x\in(0,1]  \\
		\bar{q} x&\quad\quad\mbox{if}~x\in(1,\frac{1}{\bar{q}}]\\
		1 &\quad\quad\mbox{if}~x\in(\frac{1}{\bar{q}},+\infty)
	\end{array}
\right.
\end{equation*}
As a result, 
\begin{equation*}
    \expect[\invdemand\sim \invdemanddist]{\frac{1}{\invdemand}}=\int_0^{\infty}\left(1-\bar{F}(x)\right)dx=1-\bar{q}+\int_{1}^{\frac{1}{\bar{q}}}(1-\bar{q}x)dx=\frac{1}{2}\left(\frac{1}{\bar{q}}-
\bar{q}\right).
\end{equation*}
The unique solution to $\frac{1}{2}\left(\frac{1}{\bar{q}}-
\bar{q}\right) = \ExpDemand$ satisfying $\bar{q} \geq 0$ is $\bar{q} = \frac{1}{\ExpDemand +\sqrt{\ExpDemand^2+1}}$. 

We highlight that $\bar{\invdemanddist}$ with $\bar{q} = \frac{1}{\ExpDemand +\sqrt{\ExpDemand^2+1}}$ is identical to the distribution $\hat{\invdemanddist}$ defined in \cref{eq:worstthresholddist}. This shows that $\hat{\invdemanddist}$ is the unique worst-case distribution. \hfill\Halmos

\revcolor{
\subsection{\texorpdfstring{Proof of \Cref{prop:coefvar} (\Cref{subsec:AdaptivityGap})}{}}
\label{apx:prop:coefvar}
If we target a fill rate of $\TargetFillRate \in [0,1]$, the expected minimum fill rate is at least $$\TargetFillRate \prob{\TargetFillRate \sum_{i \in [\NumAgents]}\Demand_i < 1} = \TargetFillRate\left(1- \prob{ \sum_{i \in [\NumAgents]}\Demand_i - \ExpDemand \geq \frac{(1-\TargetFillRate\ExpDemand)}{\TargetFillRate}}\right).$$

By Cantelli's inequality (a generalization of Chebyshev's inequality for single-tailed distributions), if the variance of $\sum_{i \in [\NumAgents]}\Demand_i$ is given by $\sigma$, then for any $\delta \geq 0$, $\prob{\sum_{i \in [\NumAgents]}\Demand_i - \ExpDemand \geq \delta} \leq  \frac{\sigma^2}{\sigma^2 + \delta^2}$.

Taking $\delta = \frac{(1-\TargetFillRate\ExpDemand)}{\TargetFillRate}$ and letting $\TargetFillRate = \xvar/ \ExpDemand$, we can lower bound the expected minimum fill rate with 
$$\frac{\xvar}{\ExpDemand} \left(1- \frac{\sigma^2}{\sigma^2 + \left(\frac{1-\xvar}{\xvar}\right)^2 \ExpDemand^2}\right) \quad \geq \quad \frac{\xvar}{\ExpDemand} \left(\frac{\left(\frac{1-\xvar}{\xvar} \right)^2}{\coefvar^2 + \left(\frac{1-\xvar}{\xvar} \right)^2 }\right),$$
where the inequality comes from the assumption that the coefficient of variation is at most $\coefvar$.

We now optimize over feasible thresholds (i.e., over the domain $\xvar \in [0,\min\{1, \ExpDemand\}]$, which ensures $\delta \geq 0$ and $\TargetFillRate \in [0,1]$). The ex-post minimum fill rate of the optimal \fixedthresh\ policy is thus lower-bounded by
\begin{align}
    \max_{\xvar \in [0,\min\{1, \ExpDemand\}]} \frac{\xvar}{\ExpDemand} \left(\frac{\left(\frac{1-\xvar}{\xvar} \right)^2}{\coefvar^2 + \left(\frac{1-\xvar}{\xvar} \right)^2 }\right)
\end{align}
Scaling by the normalization factor of $\DetGuar = \min\{1, 1/\ExpDemand\}$ completes the proof of \Cref{prop:coefvar}. \hfill \halmos
}

\section{Proof of \texorpdfstring{\Cref{thm:exante}}{} (\texorpdfstring{\Cref{subsec:ex-ante}}{})}
\label{apx:ex-ante}
We prove this theorem by first providing two instances which together show that no policy achieves ex-ante fairness greater than $\LowerboundFunExAnte$. We then show that our \PolicyNameAb\ policy achieves ex-ante fairness of at least $\LowerboundFunExAnte$ for any demand distribution, which means the bound is tight.

{\bf Upper bound:} We prove the hardness result by considering two separate cases corresponding to $\ExpDemand < 2$ and $\ExpDemand \geq 2$. For each case, we provide an instance of the problem under which no policy can obtain ex-ante fairness larger than $\LowerboundFunExAnte$.

\begin{enumerate}[label=(\roman*)]
    \item If $\ExpDemand < 2$, consider a joint demand distribution for an arbitrary number of agents such that with probability $1-\frac{\ExpDemand}{2}$ there is no demand, and with probability $\frac{\ExpDemand}{2}$ the total demand is $2$ (arbitrarily and deterministically split among agents). In this case, with probability $1-\frac{\ExpDemand}{2}$, each agent achieves an FR of $1$, and with probability $\frac{\ExpDemand}{2}$, the expected FR cannot exceed $\frac{1}{2}$ for at least one agent. Therefore, in this instance the minimum expected FR is upper-bounded by $1-\tfrac{\ExpDemand}{4}$.
    \item If $\ExpDemand \geq 2$, consider a deterministic demand distribution where total demand is equal to its expectation $\ExpDemand$ (arbitrarily split among $\NumAgents$ agents). In this case, the minimum expected FR is clearly upper-bounded by $\frac{1}{\ExpDemand}$.
\end{enumerate}

Taken together, these instances provide an upper bound on the minimum expected FR one can hope to achieve with any policy. We then scale each instance by our normalization factor, namely $\DetGuar = \min\{1, 1/\ExpDemand\}$, which provides an upper bound of $\LowerboundFunExAnte$ on the ex-ante fairness guarantee (see \Cref{def:polperf}) of any policy. \hfill\Halmos

{\bf Lower bound:} We now show that the \PolicyNameAb\ policy achieves an ex-ante fairness guarantee of $\LowerboundFunExAnte$. First, we show via induction that after the arrival of any agent $i \in [\NumAgents]$, the ratio $\frac{\Demandi{i}+\ExpDemandi{i+1}}{ \Supplyi{i}}$, i.e., expected total remaining demand to remaining supply, is in expectation at most $\ExpDemand$. We then place a lower bound on the expected FR of agent $i$ when following the \PolicyNameAb\ policy which is a decreasing and convex function of the ratio $\frac{\Demandi{i}+\ExpDemandi{i+1}}{ \Supplyi{i}}$. We conclude by applying Jensen's inequality to lower bound the expected FR of agent $i$. Since this lower bound is the same for each agent, it constitutes a lower bound on the minimum expected FR.

\begin{claim}[Upper Bound on Demand-to-Supply Ratio]
\label{clm:exanteproof}
When following the \PolicyNameAb\ policy, for all $i \in [\NumAgents]$, $\expect[\vec{\Demand}\sim\DemandJoinDist]{\frac{\Demandi{i} + \ExpDemandi{i+1}}{\Supplyi{i}}} \leq \ExpDemand$.
\end{claim}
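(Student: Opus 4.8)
The plan is to prove the inequality $\expect[\vec{\Demand}\sim\DemandJoinDist]{\frac{\Demandi{i} + \ExpDemandi{i+1}}{\Supplyi{i}}} \leq \ExpDemand$ by forward induction on $i$, tracking the (random) quantity $\rho_i \triangleq \frac{\Demandi{i} + \ExpDemandi{i+1}}{\Supplyi{i}}$, the ratio of expected remaining demand (including the current agent) to remaining supply just after agent $i$ arrives. The base case $i=1$ is immediate: $\Supplyi{1}=1$ by normalization, and $\Demandi{1}+\ExpDemandi{2}$ has expectation exactly $\expect{\sum_{j\in[\NumAgents]}\Demandi{j}}=\ExpDemand$ by the tower property of conditional expectation, so $\expect{\rho_1}=\ExpDemand$. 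For the inductive step, I would assume $\expect{\rho_{i-1}}\le \ExpDemand$ and show $\expect{\rho_i}\le\expect{\rho_{i-1}}$, which chains to give the claim for all $i$.

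The heart of the argument is a \emph{pathwise} comparison: I claim that conditional on the history $\DemandVec_{[1:i-1]}$ and on the realization of $\Demandi{i}$, we have $\expect{\rho_i \mid \DemandVec_{[1:i]}} \le \rho_{i-1}$ where $\rho_{i-1}$ is already determined by $\DemandVec_{[1:i-1]}$ (note $\rho_{i-1}$ involves $\Demandi{i-1}+\ExpDemandi{i}$ and $\Supplyi{i-1}$, all measurable w.r.t.\ $\DemandVec_{[1:i-1]}$). To see this, recall from \eqref{eq:supply:def} that under the \PolicyNameAb\ policy $\Supplyi{i} = \Supplyi{i-1} - \min\{\Demandi{i-1}, \frac{\Demandi{i-1}}{\Demandi{i-1}+\ExpDemandi{i}}\Supplyi{i-1}\} \ge \Supplyi{i-1}\cdot\frac{\ExpDemandi{i}}{\Demandi{i-1}+\ExpDemandi{i}}$ (this is exactly the bound noted in Remark \ref{rem:never-run-out}, with equality in the insufficient-supply case). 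Hence
\begin{align*}
\expect{\rho_i \mid \DemandVec_{[1:i-1]}} = \expect{\frac{\Demandi{i}+\ExpDemandi{i+1}}{\Supplyi{i}}\,\Big|\,\DemandVec_{[1:i-1]}} \le \frac{\Demandi{i-1}+\ExpDemandi{i}}{\Supplyi{i-1}\,\ExpDemandi{i}}\cdot\expect{\Demandi{i}+\ExpDemandi{i+1}\mid\DemandVec_{[1:i-1]}}.
\end{align*}
But $\expect{\Demandi{i}+\ExpDemandi{i+1}\mid\DemandVec_{[1:i-1]}} = \expect{\Demandi{i}+\sum_{j\in[i+1:\NumAgents]}\Demandi{j}\mid\DemandVec_{[1:i-1]}} = \ExpDemandi{i}$ by the tower property, so the right-hand side collapses to $\frac{\Demandi{i-1}+\ExpDemandi{i}}{\Supplyi{i-1}} = \rho_{i-1}$. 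Taking expectations over $\DemandVec_{[1:i-1]}$ and applying the inductive hypothesis yields $\expect{\rho_i}\le\expect{\rho_{i-1}}\le\ExpDemand$, completing the induction. (One must handle the degenerate case $\Supplyi{i-1}=0$ or $\ExpDemandi{i}=0$ separately: by Remark \ref{rem:never-run-out}, $\Supplyi{i-1}=0$ forces $\ExpDemandi{i}=0$, hence all future demands are deterministically zero and $\rho_i$ is well-defined by the convention $\frac{0}{0}=0$, so the inequality holds trivially on that event.)

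The main obstacle — and the reason to be careful — is the interchange of the supply bound with the conditional expectation: the factor $\frac{\Demandi{i-1}+\ExpDemandi{i}}{\Supplyi{i-1}\ExpDemandi{i}}$ is $\DemandVec_{[1:i-1]}$-measurable so it legitimately pulls out of $\expect{\cdot\mid\DemandVec_{[1:i-1]}}$, but one must verify that the lower bound on $\Supplyi{i}$ genuinely holds for \emph{every} realization of $\Demandi{i}$ (not just in expectation), which it does because the displayed inequality in Remark \ref{rem:never-run-out} is pathwise. Once Claim \ref{clm:exanteproof} is in hand, the remainder of the lower-bound proof of \Cref{thm:exante} proceeds as sketched in the excerpt: bound agent $i$'s expected fill rate below by a convex decreasing function $g(\rho_i)$ of $\rho_i$ (concretely, $g$ of the same homographic/linear form appearing in the ex-post analysis, e.g.\ $\min\{1,1/\rho_i\}$-type bound sharpened to $1-\rho_i/4$ on $[0,2)$), apply Jensen's inequality with $\expect{\rho_i}\le\ExpDemand$, and observe the resulting bound $\LowerboundFunExAnte$ is uniform in $i$, hence bounds the minimum expected fill rate.
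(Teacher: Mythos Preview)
Your proof is correct and follows the same forward-induction argument as the paper: lower-bound $\Supplyi{i}$ via the \PolicyNameAb\ update rule, pull that (history-measurable) bound outside the conditional expectation, and collapse the numerator to $\ExpDemandi{i}$ via the tower property. One minor slip worth fixing: your prose first states the comparison conditional on $\DemandVec_{[1:i]}$, but $\rho_i$ is already $\DemandVec_{[1:i]}$-measurable, so that would assert $\rho_i\le\rho_{i-1}$ pathwise (which need not hold when $\Demandi{i}+\ExpDemandi{i+1}>\ExpDemandi{i}$); your displayed computation correctly conditions on $\DemandVec_{[1:i-1]}$ instead, and that is exactly what the argument requires.
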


\begin{proof}{Proof:}
We proceed by induction. Clearly, when $i=1$, $\expect[\vec{\Demand}\sim\DemandJoinDist]{\frac{\Demandi{1} + \ExpDemandi{2}}{\Supplyi{1}}} = \expect[\vec{\Demand}\sim\DemandJoinDist]{\frac{\ExpDemandi{1} }{\Supplyi{1}}} = \ExpDemand$. We now assume that this holds for $i = k$ and attempt to prove the claim for $i = k+1$. According to the \PolicyNameAb\ policy, $\Alloci{k} \leq \Supplyi{k}\frac{\Demandi{k}}{\Demandi{k} +\ExpDemandi{k+1}}$. Thus, $\Supplyi{k+1}$ is at least $\Supplyi{k} \frac{\ExpDemandi{k+1}}{\Demandi{k} +\ExpDemandi{k+1}}$. Consequently, 
$$\expect[\vec{\Demand}\sim\DemandJoinDist]{\frac{\Demandi{k+1} +\ExpDemandi{k+2}}{\Supplyi{k+1}}} = \expect[\vec{\Demand}\sim\DemandJoinDist]{\frac{\ExpDemandi{k+1}}{\Supplyi{k+1}}} \geq \expect[\vec{\Demand}\sim\DemandJoinDist]{\frac{\ExpDemandi{k+1}}{\Supplyi{k} \frac{\ExpDemandi{k+1}}{\Demandi{k} +\ExpDemand{k+1}}}} =  \expect[\vec{\Demand}\sim\DemandJoinDist]{\frac{\Demandi{k} + \ExpDemandi{k+1}}{\Supplyi{k}}} \geq \ExpDemand.$$
The final inequality comes from our inductive hypothesis, which completes the proof by induction. \hfill\Halmos
\end{proof}

Given a current demand $\Demandi{i}$ and expected future demand $\ExpDemandi{i+1}$, the FR of agent $i$ is $\min\left\{1, \frac{\Supplyi{i}}{\Demandi{i} + \ExpDemandi{i+1}}\right\}$. It is straightforward to show that this FR is lower-bounded by the following function of the ratio $\frac{\Demandi{i} + \ExpDemandi{i+1}}{\Supplyi{i}}$:
\begin{equation*} \hfun\left(\frac{\Demandi{i} + \ExpDemandi{i+1}}{\Supplyi{i}}\right) = \begin{cases}
1 - \frac{1}{4}\frac{\Demandi{i} + \ExpDemandi{i+1}}{\Supplyi{i}}, &\frac{\Demandi{i} + \ExpDemandi{i+1}}{\Supplyi{i}} < 2\\
\frac{\Supplyi{i}}{\Demandi{i} + \ExpDemandi{i+1}}, &\frac{\Demandi{i} + \ExpDemandi{i+1}}{\Supplyi{i}} \geq 2
\end{cases}
\end{equation*}
We remark that $\hfun$ is a decreasing and convex function of its argument. Hence, by Jensen's inequality and Claim \ref{clm:exanteproof},
$$\expect[\vec{\Demand}\sim\DemandJoinDist]{\hfun\left(\frac{\Demandi{i}+\ExpDemandi{i+1}}{\Supplyi{i}}\right)}   \quad \geq \quad \hfun\left(\expect[\vec{\Demand}\sim\DemandJoinDist]{\frac{\Demandi{i}+\ExpDemandi{i+1}}{\Supplyi{i}}}\right) \quad \geq \quad \hfun(\ExpDemand).$$ Since this lower bound on the expected FR holds for each agent $i$, we have shown a lower bound on the minimum expected FR when following the \PolicyNameAb\ policy. When scaled by our normalization factor, namely $\DetGuar = \min\{1, 1/\ExpDemand\}$, this lower bound exactly matches the upper bound of $\LowerboundFunExAnte$ established above, and thus completes the proof of \Cref{thm:exante}.\hfill\Halmos

 \section{\texorpdfstring{Missing Proofs of \Cref{subsec:variants}}{}}
 \label{apx:discussionsection}
 \subsection{Proof of Corollary \ref{cor:generalfunctions}}
\label{apx:cor:generalfunctions}
Suppose the allocation given by the \PolicyNameAb\ policy is $\vec{\Alloc}$. By Theorem \ref{thm:expost}, the PPA policy achieves ex-post fairness of $\LowerboundFunEx$ when the social welfare function is the minimum FR, which implies $\SocialWelfare_{+\infty}(\vec{\Alloc}) = \min_{i \in[\NumAgents]}\{\frac{\Alloci{i}}{\Demandi{i}}\} \geq \LowerboundFunEx \DetGuar$. 

Now consider a new allocation vector $\vec{\Alloc}'$ such that for all $i \in [\NumAgents]$, $\frac{\Alloci{i}'}{\Demandi{i}} = \min_{j \in [\NumAgents]}\{\frac{\Alloci{j}}{\Demandi{j}}\}$. We remark that $\Alloci{i} \geq \Alloci{i}'$ for all $i \in [\NumAgents]$. Further, it is easy to verify that $\SocialWelfare_\FairParam(\vec{\Alloc}') = \SocialWelfare_{+\infty}(\vec{\Alloc})$ for any $\FairParam \in [0,+\infty)$. Since $\SocialWelfare_\FairParam(\vec{\Alloc})$ is non-decreasing in each $\Alloci{i}$, $ \SocialWelfare_\FairParam(\vec{\Alloc}) \geq \SocialWelfare_{+\infty}(\vec{\Alloc}) $.

Therefore, the expected WPM social welfare when following the \PolicyNameAb\ policy is weakly greater than the minimum FR (regardless of the fairness parameter $\FairParam$). Scaling by the achievable social welfare when demand is deterministic, we have shown that ex-post fairness, i.e., $\expect[\vec{\Demand}\sim\DemandJoinDist]{\SocialWelfare_\FairParam(\vec{\Alloc})}/\DetGuar$,  must be at least $\LowerboundFunEx$. \hfill\Halmos

\subsection{Proof of Corollary \ref{cor:multiplegoods}}
\label{apx:cor:multiplegoods}
\Cref{thm:expost} establishes that the \PolicyNameAb\ policy guarantees an expected minimum FR for resource $j$ of at least $ \LowerboundFunPnoarg\left(\frac{\ExpDemand^j}{\Supply^j}, \NumAgents\right) \max\left\{1, \frac{\ExpDemand^j}{\Supply^j}\right\}$. Consequently, we can place a lower bound on the expected minimum weighted FR:
\begin{equation}
 \expect[\vec{\Demand}\sim\DemandJoinDist]{\text{min}_{i \in [\NumAgents]} \sum_{j \in [\NumGoods]} \Weightj{j} \frac{\Alloci{i}^j}{\Demandi{i}^j}} \geq \sum_{j \in [\NumGoods]} \Weightj{j} \expect[\vec{\Demand}\sim\DemandJoinDist]{\text{min}_{i \in [\NumAgents]}  \frac{\Alloci{i}^j}{\Demandi{i}^j}} \geq \sum_{j \in [\NumGoods]} \Weightj{j} \LowerboundFunPnoarg\left(\frac{\ExpDemand^j}{\Supply^j}, \NumAgents\right) \max\left\{1, \frac{\ExpDemand^j}{\Supply^j}\right\}. \tag*{\hfill\Halmos}
\end{equation}

\subsection{Proof of Corollary \ref{cor:multiplegoodsupper}}
\label{apx:cor:multiplegoodsupper}
Consider a correlated distribution for demands---correlated across agents and resource types---where the marginal distribution of demands for each resource type matches the worst-case joint distribution of the single-type problem described in the proof of \Cref{thm:hardness-ex-post}. These marginal distributions are then coupled such that the last-arriving agent with non-zero demand for resource $j$ is the same as the last-arriving agent with non-zero demand for resource $j'$, for any resources $j, j' \in [\NumGoods]$ for which at least one agent has non-zero demand. Given the marginal distributions, each agent is equally likely to be this last-arriving agent. For any sample path drawn from this distribution, it is without loss of generality to only consider policies where the allocation is decreasing (i.e., where this last-arriving agent has the worst FR) for every resource.\footnote{For intuition as to why this is without loss of generality, note (i) each agent with non-zero demand for resource $j \in [m]$ has identical demand for that resource, (ii) each agent linearly aggregates their FRs by the same weights $\{\Weightj{j}\}_{j\in[m]}$, and (iii) each agent is equally likely to be the last-arriving agent. Consequently, if agent $i$ has a strictly larger allocation than agent $i'$ (where $i' < i$) for any resource $j$, a policy which switches their allocations for that resource will have a weakly greater expected minimum weighted FR.}

If supply of resource $j$ is $\MultiGoodSupply^j$, \Cref{thm:hardness-ex-post} establishes an upper-bound of $\LowerboundFunPnoarg\left(\frac{\ExpDemand^j}{\MultiGoodSupply^j},\NumAgents\right)\max\left\{1, \frac{\ExpDemand^j}{\MultiGoodSupply^j}\right\}$ on the expected minimum FR of any policy in the single-type problem corresponding to resource $j$. Since for every sample path the agent with the worst FR (i.e., the last-arriving agent) is the same across resources, aggregating these bounds establishes an upper-bound of $\sum_{j \in [m]} \Weightj{j} \LowerboundFunPnoarg\left(\frac{\ExpDemand^j}{\MultiGoodSupply^j}, \NumAgents\right) \max\left\{1, \frac{\ExpDemand^j}{\MultiGoodSupply^j}\right\}$ on the expected minimum weighted FR. \hfill\Halmos
 \end{APPENDIX}
\end{document}